\renewcommand\footnotetextcopyrightpermission[1]{}
\DeclareMathAlphabet{\mathpzc}{OT1}{pzc}{m}{it}
\DeclarePairedDelimiter{\ceil}{\lceil}{\rceil}
\DeclarePairedDelimiter{\floor}{\lfloor}{\rfloor}
\newcommand{\genericcomment}[3]{\todo[color=#1,size=\tiny,fancyline,author=#2]{#3}\xspace}
\newcommand{\kb}[1]{\todo[color=cyan!30,size=\tiny,fancyline,author=Kevin]{#1}\xspace}
\newcommand{\kbin}[1]{\todo[inline,color=cyan!30,size=\footnotesize,author=Kevin]{#1}\xspace}
\newcommand{\tw}[1]{\genericcomment{violet!35}{Tobi}{#1}\xspace}
\newcommand{\tb}[1]{\todo[color=orange!60,size=\tiny,fancyline,author=Tom]{#1}\xspace}
\newcommand{\jp}[1]{\genericcomment{green!30}{JP}{#1}\xspace}
\newcommand{\gray}[1]{\textcolor{gray}{#1}}
\newcommand{\black}[1]{\textcolor{black}{#1}}
\newcommand{\blue}[1]{\textcolor{blue}{#1}}
\newcommand{\red}[1]{\textcolor{red}{#1}}
\newcommand{\sfsymbol}[1]{\textsf{\upshape {#1}}}
\newcommand{\Vars}{\ensuremath{\mathsf{Vars}}\xspace}   
\newcommand{\Vals}{\ensuremath{\mathsf{Vals}}\xspace}    
\newcommand{\States}{\mathsf{States}}
\newcommand{\pgcl}{\textnormal{\sfsymbol{pGCL}}\xspace}   
\newcommand{\Nats}{\ensuremath{\mathbb{N}}\xspace}
\newcommand{\Rats}{\ensuremath{\mathbb{Q}}\xspace}
\newcommand{\Bools}{\ensuremath{\mathbb{B}}\xspace}
\newcommand{\true}{\ensuremath{\mathsf{true}}}
\newcommand{\false}{\ensuremath{\mathsf{false}}}
\newcommand{\PosReals}{\mathbb{R}_{\geq 0}}
\newcommand{\PosRealsInf}{\mathbb{R}_{\geq 0}^\infty}
\newcommand{\NonnegRealsInf}{\PosRealsInf}
\newcommand{\PosRats}{\mathbb{Q}_{\geq 0}}
\newcommand{\pstate}{\ensuremath{\sigma}}
\newcommand{\cc}{C}
\newcommand{\ccbody}{\cc_{\textnormal{body}}}
\newcommand{\guard}{\varphi}
\newcommand{\guardb}{\psi}
\newcommand{\aexpr}{E}
\newcommand{\pexpr}{p}
\newcommand{\pnum}{q}
\newcommand{\val}{v}
\newcommand{\refines}{\multimap}
\newcommand{\entails}{\models}
\newcommand{\progclass}{\Pi}
\newcommand{\SKIP}{\mathtt{skip}}
\newcommand{\AssignSymbol}{\mathrel{\textnormal{\texttt{:=}}}}
\newcommand{\ASSIGN}[2]{\ensuremath{#1 \AssignSymbol #2}}
\newcommand{\UNIFASSIGN}[2]{\ensuremath{\ASSIGN{#1}{\textnormal{\texttt{unif}} \left(#2\right)}}}
\newcommand{\COMPSEMIC}{\fatsemi}
\newcommand{\COMPOSE}[2]{\ensuremath{{#1}{\,\COMPSEMIC}~ {#2}}}
\newcommand{\PCHOICE}[3]{\ensuremath{\{\, {#1} \,\}\mathrel{[{#2}]}\{\, {#3} \,\}}}
\newcommand{\GCARROW}{\shortrightarrow}
\newcommand{\IFSYMBOL}{\ensuremath{\textnormal{\texttt{if}}}}
\newcommand{\ENDIFSYMBOL}{}
\newcommand{\GC}[4]{\ensuremath{\IFSYMBOL\,{#1}\GCARROW\{{#2}\}~{}\square{}~{#3}\GCARROW\{{#4}\}\ENDIFSYMBOL}} 
\newcommand{\GCFST}[2]{\ensuremath{\IFSYMBOL\,{#1}\GCARROW\{{#2}\}}}
\newcommand{\GCFSTOPEN}[1]{\ensuremath{\IFSYMBOL\,{#1}\GCARROW\{ }}
\newcommand{\GCFSTCLOSE}{\ensuremath{\}}}
\newcommand{\GCSEC}[2]{\ensuremath{{~{}\square{}~}\hspace{0.015cm}{#1}\GCARROW\{{#2}\}\,\ENDIFSYMBOL}}
\newcommand{\GCSECOPEN}[1]{\ensuremath{{~{}\square{}~}\hspace{0.015cm}{#1}\GCARROW\{ }}
\newcommand{\GCSECCLOSE}{\ensuremath{\} \ENDIFSYMBOL}}
\newcommand{\ELSESYMBOL}{\ensuremath{\textnormal{\texttt{else}}}}
\newcommand{\ITE}[3]{\ensuremath{\IFSYMBOL\,\left(\, {#1} \,\right)\,\left\{\, {#2} \,\right\}\,\ELSESYMBOL\,\left\{\, {#3} \,\right\}}}
\newcommand{\WHILESYMBOL}{\ensuremath{\textnormal{\texttt{while}}}}
\newcommand{\ENDWHILESYMBOL}{}
\newcommand{\WHILE}[1]{\WHILESYMBOL\,{#1}\GCARROW\{ }
\newcommand{\WHILECLOSE}{\}}
\newcommand{\WHILEDO}[2]{\WHILE{#1} {#2} \WHILECLOSE\ENDWHILESYMBOL}
\newcommand{\WHILEDOINV}[3]{\WHILEDOINVOPEN{#1} {#2} \WHILEDOINVCLOSE{#3}}
\newcommand{\WHILEDOINVOPEN}[1]{\WHILESYMBOL\,{#1}\GCARROW\{}
\newcommand{\WHILEDOINVCLOSE}[1]{\}\,\langle#1\rangle\,\ENDWHILESYMBOL}
\newcommand{\ExecSymbol}{\ensuremath{\rightarrow}}
\newcommand{\Term}{\ensuremath{\Downarrow}}
\newcommand{\OpStates}{\ensuremath{\mathsf{Conf}}}
\newcommand{\Exec}[6]{\ensuremath{#1,#2 \,\xrightarrow{#3, #4}\, #5, #6 }}
\newcommand{\ExecAbbr}[4]{\ensuremath{#1 \,\xrightarrow{#2, #3}\, #4 }}
\newcommand{\actlab}{\ell}
\newcommand{\labtau}{\tau}
\newcommand{\labalpha}{\alpha}
\newcommand{\labbeta}{\beta}
\newcommand{\MDP}{\mathcal{M}}
\newcommand{\opMDP}[1]{\mathcal{M} \left( #1\right)}
\newcommand{\MS}{S}
\newcommand{\MT}{T} 
\newcommand{\MA}{A}
\newcommand{\MP}{\textsf{P}}
\newcommand{\MI}{\MS_{\textnormal{\text{init}}}}
\renewcommand{\MR}{\textnormal{\textsf{rew}}}
\newcommand{\Dist}[1]{Dist(#1)}
\newcommand{\supp}[1]{supp(#1)} 
\newcommand{\pathstotarget}[1]{\mathlarger{\mathlarger{\lozenge}}#1}
\newcommand{\probofpath}[2]{\textnormal{\textsf{Prob}}_{#1}\left( #2 \right)}
\newcommand{\exprew}[3]{\textnormal{\textsf{ExpRew}}_{#1}\llbracket #2 \rrbracket\left(#3\right)}
\newcommand{\exprewmc}[2]{\textnormal{\textsf{ExpRew}}_{}\llbracket #1 \rrbracket\left(#2\right)}
\newcommand{\minexprew}[2]{\textnormal{\textsf{MinExpRew}}_{}\llbracket #1 \rrbracket\left(#2\right)}
\newcommand{\maxexprew}[2]{\textnormal{\textsf{MaxExpRew}}_{}\llbracket #1 \rrbracket\left(#2\right)}
\newcommand{\thresh}{\rho}
\newcommand{\ms}{s}
\newcommand{\mi}{\ms_{\textnormal{\text{init}}}}
\newcommand{\ma}{a}
\newcommand{\msched}{\mathfrak{S}}
\newcommand{\MSCHED}{\mathit{Strats}}
\newcommand{\mc}{t}
\newcommand{\EXPTRANSSYMB}[1]{\mathsf{#1}}
\newcommand{\WPSYMBOL}{\EXPTRANSSYMB{wp}}
\newcommand{\DWPSYMBOL}{\EXPTRANSSYMB{dwp}}
\newcommand{\DWPRESYMBOL}{\EXPTRANSSYMB{dwp}^*}
\newcommand{\AWPSYMBOL}{\EXPTRANSSYMB{awp}}
\newcommand{\AWPRESYMBOL}{\EXPTRANSSYMB{awp}^*}
\newcommand{\VCSYMBOL}{\EXPTRANSSYMB{vc}} 
\newcommand{\VCSYMBOLPARA}[2]{\VCSYMBOL_{#1}^{#2}}
\newcommand{\SOMEWPSYMBOL}{\mathcal{T}}
\newcommand{\SOMEWPONESYMBOL}{\mathcal{T}_1}
\newcommand{\SOMEWPRESYMBOL}{\SOMEWPSYMBOL^{*}}
\newcommand{\SOMEWPREONESYMBOL}{\SOMEWPONESYMBOL^{*}}
\newcommand{\exptransT}[2]{{#1}\llbracket{#2}\rrbracket} 
\newcommand{\exptrans}[3]{\exptransT{#1}{#2}({#3})}
\renewcommand{\wp}[2]{\exptrans{\WPSYMBOL}{#1}{#2}}
\newcommand{\dwp}[2]{\exptrans{\DWPSYMBOL}{#1}{#2}}
\newcommand{\awp}[2]{\exptrans{\AWPSYMBOL}{#1}{#2}}
\newcommand{\dwpre}[2]{\exptrans{\DWPRESYMBOL}{#1}{#2}}
\newcommand{\awpre}[2]{\exptrans{\AWPRESYMBOL}{#1}{#2}}
\newcommand{\somewp}[2]{\exptrans{\SOMEWPSYMBOL}{#1}{#2}}
\newcommand{\somewpre}[2]{\exptrans{\SOMEWPRESYMBOL}{#1}{#2}}
\newcommand{\charfunn}[5]{\tensor*[^{#1}_{\ifthenelse{\equal{#2}{}\and\equal{#3}{}}{}{\langle #2, #3 \rangle}}]{\Phi}{_{{#4}}^{{#5}}}}
\newcommand{\vccond}{\mathfrak{C}}
\newcommand{\vc}[2]{\exptrans{\VCSYMBOLPARA{\vccond}{\SOMEWPSYMBOL}}{#1}{#2}}
\newcommand{\vcdwp}[2]{\exptrans{\VCSYMBOLPARA{\vccond}{\DWPSYMBOL}}{#1}{#2}}
\newcommand{\vcawp}[2]{\exptrans{\VCSYMBOLPARA{\vccond}{\AWPSYMBOL}}{#1}{#2}}
\newcommand{\superinvcond}{\textnormal{\textsf{SUPERINV}}}
\newcommand{\subinvcond}{\textnormal{\textsf{SUBINV}}}
\newcommand{\dastsubinvcond}{\textnormal{dAST}\subinvcond}
\newcommand{\dpastsubinvcond}{\textnormal{dPAST}\subinvcond}
\newcommand{\vcsuperinvdwp}[2]{\exptrans{\VCSYMBOLPARA{\superinvcond}{\DWPSYMBOL}}{#1}{#2}}
\newcommand{\inv}{I}
\newcommand{\lfp}{\textsf{lfp}\,}
\newcommand{\expsubs}[3]{{#1}[{#2} {{}\mapsto{}} {#3}]} 
\newcommand{\statesubst}[2]{[ #1 \mapsto #2 ]}
\newcommand{\expecs}{\mathbb{E}} 
\newcommand{\predicates}{\mathbb{P}} 
\newcommand{\expclass}{\Xi}
\newcommand{\eleq}{\sqsubseteq} 
\newcommand{\egeq}{\sqsupseteq} 
\newcommand{\eeleq}{~{}\sqsubseteq{}~} 
\newcommand{\eegeq}{~{}\sqsupseteq{}~} 
\newcommand{\someexprel}{\sim}
\newcommand{\somenumrel}{\sim}
\newcommand{\emin}{\sqcap} 
\newcommand{\emax}{\sqcup} 
\newcommand{\gimpsymbol}{\mathrel{\to}}
\newcommand{\gimp}[2]{{#1} \gimpsymbol {#2}} 
\newcommand{\iverson}[1]{\ensuremath{\left[ #1 \right]}}
\newcommand{\limpl}{\mathrel{\Rightarrow}}
\newcommand{\comprel}{\operatorname{\bowtie}}
\newcommand{\TRANSSYMBOL}{\EXPTRANSSYMB{trans}}
\newcommand{\transT}[2]{\EXPTRANSSYMB{trans}_{#1}^{#2}}
\newcommand{\dtransT}[1]{\EXPTRANSSYMB{d}\EXPTRANSSYMB{trans}^{}}
\newcommand{\atransT}[1]{\EXPTRANSSYMB{a}\EXPTRANSSYMB{trans}^{}}
\newcommand{\trans}[4]{\transT{#1}{#2}(#3,#4)}
\newcommand{\dtrans}[3]{\dtransT{#1}(#2,#3)}
\newcommand{\atrans}[3]{\atransT{#1}(#2,#3)}
\newcommand{\predeleq}{\preceq}
\newcommand{\predegeq}{\succeq}
\newcommand{\qiff}{\quad\textnormal{iff}\quad}
\newcommand{\qand}{\quad\textnormal{and}\quad}
\newcommand{\qqand}{\qquad\textnormal{and}\qquad}
\newcommand{\qqimplies}{\qquad\textnormal{implies}\qquad}
\newcommand{\eeq}{~{}={}~}
\newcommand{\ddefeq}{~{}\coloneqq{}~}
\newcommand{\qeq}{\quad{}={}\quad}
\newcommand{\lleq}{~{}\leq{}~}
\newcommand{\morespace}[1]{~{}#1{}~}
\newcommand{\llor}{\morespace{\lor}}
\newcommand{\annocolor}[1]{\textcolor{DodgerBlue3}{#1}}
\newcommand{\alertannocolor}[1]{\textcolor{orange}{#1}}
\newcommand{\annotate}[1]{{\annocolor{\!\!{\fatslash}\!\!{\fatslash}~~\vphantom{G'} {#1}}}}
\newcommand{\starannotate}[1]{{\annocolor{{\talloblong}\!{\talloblong}\:\vphantom{G'} {#1}}}}
\newcommand{\phiannotate}[1]{{\annocolor{\!\!\hspace{-.15ex}{}^{\annocolor{\Phi}}\!\!\!{\fatslash}\!\!{\fatslash}~~\vphantom{G'} {#1}}}}
\newcommand{\succeqannotate}[1]{\annocolor{\!\!\hspace{-.25ex}{}^{\annocolor{{\egeq}}}{\!\!\!{\fatslash}\!\!{\fatslash}~~\vphantom{G'} {#1}}}}
\newcommand{\eqannotate}[1]{\annocolor{\!\!\hspace{-.1ex}{}^{\annocolor{{=}}}{\!\!\!{\fatslash}\!\!{\fatslash}~~\vphantom{G'} {#1}}}}
\newcommand{\wpannotate}[1]{{\annocolor{\!\!\hspace{-0.75ex}{}^{\annocolor{\text{\tiny $\WPSYMBOL$}}}\!\!\!{\fatslash}\!\!{\fatslash}~~\vphantom{G'} {#1}}}}
\newcommand{\dwpannotate}[1]{{\annocolor{\!\!\hspace{-0.75ex}{}^{\annocolor{\text{\tiny $\DWPSYMBOL$}}}\!\!\!{\fatslash}\!\!{\fatslash}~~\vphantom{G'} {#1}}}}
\newcommand{\dwpreannotate}[1]{{\annocolor{\!\!\hspace{-2.5ex}{}^{\annocolor{\text{\tiny $\DWPRESYMBOL$}}}\!\!\!{\fatslash}\!\!{\fatslash}~~\vphantom{G'} {#1}}}}
\newcommand{\dwpalertannotate}[1]{{\alertannocolor{\!\!\hspace{-0.75ex}{}^{\alertannocolor{\text{\tiny $\DWPSYMBOL$}}}\!\!\!{\fatslash}\!\!{\fatslash}~~\vphantom{G'} {#1}}}}
\newcommand{\dwprealertannotate}[1]{{\alertannocolor{\!\!\hspace{-0.75ex}{}^{\alertannocolor{\text{\tiny $\DWPRESYMBOL$}}}\!\!\!{\fatslash}\!\!{\fatslash}~~\vphantom{G'} {#1}}}}
\newcommand{\awpannotate}[1]{{\annocolor{\!\!\hspace{-0.75ex}{}^{\annocolor{\text{\tiny $\AWPSYMBOL$}}}\!\!\!{\fatslash}\!\!{\fatslash}~~\vphantom{G'} {#1}}}}
\newcommand{\singlelineannotatespace}{\hspace{0.25cm}}
\newcommand{\singlelinepreannotatespace}{\hspace{0.55cm}}
\newcommand{\IChain}{\ensuremath{\inv_\mathit{Gamb}}}
\newcommand{\CChain}{\ensuremath{\cc_\mathit{Gamb}}}
\newcommand{\IGame}{\ensuremath{\inv_\mathit{Nim}}}
\newcommand{\CNim}{\ensuremath{C_{\mathit{Nim}}}}
\newcommand{\CGame}{\CNim}
\newcommand{\CGeo}{\ensuremath{\cc_{\mathit{Gamb2}}}}
\newcommand{\IGeo}{\ensuremath{\inv_{\mathit{Gamb2}}}}
\newcommand{\clever}{\ensuremath{switch}}
\newcommand{\contestantCurtain}{\ensuremath{cc}}
\newcommand{\priceCurtain}{\ensuremath{pc}}
\newcommand{\alternativeCurtain}{\ensuremath{hc}}
\newcommand{\otherCurtain}[2]{otherCurtain(#1,#2)}
\newcommand{\randomOtherCurtain}[1]{rndOtherCurtain(#1)}
\newcommand{\varTurn}{turn}
\newcommand{\POneTurn}{\ensuremath{turn = 1}}
\newcommand{\PTwoTurn}{\ensuremath{turn = 2}}
\newcommand{\congmod}[1]{\equiv_{#1}} 
\newcommand{\ncongmod}[1]{\nequiv_{#1}} 
    \theoremstyle{acmdefinition}
    \newtheorem{remark}[theorem]{Remark}
\newcommand{\defqed}{\hfill$\triangle$} 
\begin{document}

\iftoggle{arxiv}{
\title[Programmatic Strategy Synthesis]{Programmatic Strategy Synthesis \\ \normalsize Resolving Nondeterminism in Probabilistic Programs}
}{
\title{Programmatic Strategy Synthesis: Resolving Nondeterminism in Probabilistic Programs}
}

\titlenote{Batz and Katoen are supported by the ERC AdG 787914 FRAPPANT. Winkler is supported by the DFG RTG 2236 UnRAVeL.}

\author{Kevin Batz}
\orcid{0000-0001-8705-2564}
\affiliation{%
    \institution{RWTH Aachen University}
    \city{}
    \country{Germany}
}
\email{kevin.batz@cs.rwth-aachen.de}

\author{Tom Jannik Biskup}
\orcid{0009-0002-0169-641X}
\affiliation{%
    \institution{RWTH Aachen University}
    \city{}
    \country{Germany}
}
\email{tom.biskup@rwth-aachen.de}

\author{Joost-Pieter Katoen}
\orcid{0000-0002-6143-1926}
\affiliation{%
    \institution{RWTH Aachen University}
    \city{}
    \country{Germany}
}
\email{katoen@cs.rwth-aachen.de}

\author{Tobias Winkler}
\orcid{0000-0003-1084-6408}
\affiliation{%
    \institution{RWTH Aachen University}
    \city{}
    \country{Germany}
}
\email{tobias.winkler@cs.rwth-aachen.de}

\begin{abstract}
    We consider imperative programs that involve \emph{both} randomization \emph{and} pure nondeterminism.
The central question is how to find a strategy resolving the pure nondeterminism such that the so-obtained \emph{determinized} program satisfies a given quantitative specification, i.e., bounds on expected outcomes such as the expected final value of a program variable or the probability to terminate in a given set of states.
We show how \emph{memoryless and deterministic (MD)} strategies can be obtained in a semi-automatic fashion using deductive verification techniques. 
For loop-free programs, the MD strategies resulting from our weakest precondition-style framework are correct by construction.
This extends to loopy programs, provided the loops are equipped with suitable loop invariants --- just like in program verification.
We show how our technique relates to the well-studied problem of obtaining strategies in countably infinite Markov decision processes with reachability-reward objectives. Finally, we apply our technique to several case studies.

\end{abstract}

\keywords{probabilistic programs, Markov decision processes, strategy synthesis, weakest preexpectations, program verification, quantitative loop invariants}  

\allowdisplaybreaks[0]


\begin{CCSXML}
	<ccs2012>
	<concept>
	<concept_id>10003752.10003753.10003757</concept_id>
	<concept_desc>Theory of computation~Probabilistic computation</concept_desc>
	<concept_significance>500</concept_significance>
	</concept>
	<concept>
	<concept_id>10003752.10003790.10002990</concept_id>
	<concept_desc>Theory of computation~Logic and verification</concept_desc>
	<concept_significance>500</concept_significance>
	</concept>
	<concept>
	<concept_id>10003752.10010124.10010138.10010139</concept_id>
	<concept_desc>Theory of computation~Invariants</concept_desc>
	<concept_significance>500</concept_significance>
	</concept>
	<concept>
	<concept_id>10003752.10010124.10010138.10010140</concept_id>
	<concept_desc>Theory of computation~Program specifications</concept_desc>
	<concept_significance>500</concept_significance>
	</concept>
	<concept>
	<concept_id>10003752.10010124.10010138.10010141</concept_id>
	<concept_desc>Theory of computation~Pre- and post-conditions</concept_desc>
	<concept_significance>500</concept_significance>
	</concept>
	<concept>
	<concept_id>10003752.10010124.10010138.10010142</concept_id>
	<concept_desc>Theory of computation~Program verification</concept_desc>
	<concept_significance>500</concept_significance>
	</concept>
	<concept>
	<concept_id>10003752.10010124.10010131.10010134</concept_id>
	<concept_desc>Theory of computation~Operational semantics</concept_desc>
	<concept_significance>500</concept_significance>
	</concept>
	</ccs2012>
\end{CCSXML}

\ccsdesc[500]{Theory of computation~Probabilistic computation}
\ccsdesc[500]{Theory of computation~Logic and verification}
\ccsdesc[500]{Theory of computation~Invariants}
\ccsdesc[500]{Theory of computation~Program specifications}
\ccsdesc[500]{Theory of computation~Pre- and post-conditions}
\ccsdesc[500]{Theory of computation~Program verification}
\ccsdesc[500]{Theory of computation~Operational semantics}

\maketitle

\section{Introduction}
\kbin{mind derek's mail}
\label{s:intro}
\emph{Nondeterministic probabilistic programs} are like usual imperative programs with the additional abilities to
(1) flip coins and
(2) choose between multiple execution branches in a \emph{purely nondeterministic} fashion.
These programs have various applications:
If the nondeterminism models an \emph{uncontrollable adversary}, they can be used to reason about safe abstractions or \emph{underspecifications} of (fully) probabilistic programs, processes, and systems~\cite{DBLP:journals/rc/KozineU02,DBLP:conf/vmcai/KattenbeltKNP09,DBLP:journals/fmsd/KattenbeltKNP10}.
If, on the other hand, the nondeterminism models a \emph{controllable agent}, such programs model games and security mechanisms under stochastic adversaries, and can be used for planning and control, see~\cite{DBLP:conf/iccps/FengWHT15,DBLP:journals/pe/HaesaertCA17} for various applications.
\tw{check these refs, maybe add more}

In many of these settings, a central problem is to determine \emph{strategies}  --- resolutions of the nondeterminism --- satisfying a given specification. In this paper, we consider quantitative specifications imposing bounds on probabilities to terminate in a given set of states or on expected outcomes, e.g., \enquote{{the expected final value of $x$ is at most $5$}}. 
We tackle this problem from a programmatic perspective:\smallskip
\begin{center}
	\begin{minipage}{0.92\textwidth}
		\centering
		\emph{Given a nondeterministic probabilistic program $\cc$ and a quantitative specification,}\\
		\emph{if possible, \underline{resolve the nondeterminism in $\cc$} to obtain a deterministic, but still probabilistic, program $\cc'$ that satisfies the given quantitative specification.}
	\end{minipage}
\end{center}\smallskip
The idea is that the \enquote{determinized} program $\cc'$ is a \emph{symbolic representation} of a 
strategy steering the program execution towards satisfying the given quantitative specification.
%
%
%
%
\paragraph{Modeling and Reasoning with Nondeterministic Probabilistic Programs.}
To illustrate probabilistic programs with nondeterminism as a means to reason about decision making under stochastic uncertainty, consider the (in)famous \emph{Monty Hall problem}:
In a game show, a prize is hidden behind one of three curtains, the other curtains hide nothing. Each of these three curtains hides the prize with equal \emph{probability} \nicefrac{1}{3}. The game is then played as follows:
\begin{enumerate}
    \item The contestant selects one curtain (which is not uncovered yet).
    \item The host uncovers one of the other two curtains \emph{which does not hide the prize}.
    \item Finally, the contestant may ---but does not have to--- \emph{switch} their choice to the remaining curtain, i.e., the one that is not yet uncovered and not picked initially.
\end{enumerate}
The chosen curtain is then uncovered and the contestant wins whatever is behind it.
The question is if switching the curtain in stage (3) increases the chances to win.

The Monty Hall problem is readily modeled as the program in \Cref{fig:monty-hall-intro} (adapted from \cite{McIverM05}).
Variables $\contestantCurtain, \priceCurtain, \alternativeCurtain \in \{1,2,3\}$ store the initially chosen contestant curtain, the curtain hiding the prize, and the curtain which is uncovered by the host, respectively.
The expression $\otherCurtain{c_1}{c_2}$ returns the unique curtain $c_3$ with $c_3 \neq c_1 \land c_3 \neq c_2$.
Similarly, $\randomOtherCurtain{a}$ yields one of the two curtains other than $a$ chosen uniformly at random.
The \emph{guarded commands} $\GCFST{\true}{\cc_1}{~{}\square{}~}\hspace{0.015cm} \ldots \GCSEC{\true}{\cc_n}$ model \emph{nondeterministic} choices between the respective branches.
To account for the fact that the contestant does not know which curtain hides the prize, the contestant commits to their strategy upfront (lines 1-2).

\begin{wrapfigure}[14]{r}{0.475\textwidth}
    \small
    \vspace{-6mm}
    \begin{minipage}{0.475\textwidth}
        \newcommand{\linenum}[1]{\gray{\mathtt{#1}}}
        \begin{align*}
        \linenum{1}\quad&\GCFST{\true}{\ASSIGN{\clever}{\true}}\\
        \linenum{2}\quad&\GCSEC{\true}{\ASSIGN{\clever}{\false}} \COMPSEMIC\\
        \linenum{3}\quad&\GCFST{\true}{\ASSIGN{\contestantCurtain}{1}}\\
        \linenum{4}\quad&\GCSEC{\true}{\ASSIGN{\contestantCurtain}{2}}\\
        \linenum{5}\quad&\GCSEC{\true}{\ASSIGN{\contestantCurtain}{3}} \COMPSEMIC\\
        \linenum{6}\quad&\UNIFASSIGN{\priceCurtain}{1,2,3} \COMPSEMIC \\
        \linenum{7}\quad&\GCFST%
        {\priceCurtain\neq\contestantCurtain}
        {\ASSIGN{\alternativeCurtain}{\otherCurtain{pc}{cc}}}\\
        \linenum{8}\quad&\GCSEC%
        {\priceCurtain = \contestantCurtain}
        {\ASSIGN{\alternativeCurtain}{\randomOtherCurtain{\priceCurtain}}} \COMPSEMIC\\
        \linenum{9}\quad&\GCFST{\clever}{\ASSIGN{\contestantCurtain}{\otherCurtain{\contestantCurtain}{\alternativeCurtain}}}\\
        \linenum{10}\quad&\GCSEC{\neg\clever}{\SKIP}
        \end{align*}
    \end{minipage}
    \caption{
        Monty Hall problem.
    }
    \label{fig:monty-hall-intro}
\end{wrapfigure}

\paragraph{Strategies for Loop-Free Programs}
Using the weakest preexpecation ($\WPSYMBOL$) calculus by \citet{McIverM05}, one can prove that the \emph{maximal} probability to win is $\nicefrac 2 3$.
\begin{quote}
    \emph{Our technique employs this $\WPSYMBOL$-calculus to construct, in a fully mechanizable way, a strategy in form of \underline{stren}g\underline{thenin}g\underline{s} of the predicates guarding the nondeterministic choices, which attains this winning probability.}
\end{quote}
More precisely, we obtain a determinized program in which lines 1-2 are replaced by a statement equivalent to $\ASSIGN{\clever}{\true}$, as \emph{switching in  stage (3) is indeed the best strategy}.
The choices in lines 3-5 are not resolved as they do not affect the winning probability.
In fact, our strategies are generally \emph{permissive}~\cite{DBLP:journals/corr/DragerFK0U15}: they do not remove more nondeterminism than necessary.

\paragraph{Loops, Unbounded Variables, and Invariants.}
The program from \Cref{fig:monty-hall-intro} contains no loops and only bounded variables.
However, our technique also yields strategies for \emph{possibly unbounded} nondeterministic probabilistic loops, provided they are annotated with suitable \emph{quantitative loop invariants}. To illustrate this, consider the program in \Cref{fig:game-program}.
%
\begin{figure}[tbh]
    \small
    \begin{minipage}[t]{0.45\textwidth}
        \begin{align*}
        &\WHILEDOINVOPEN{x<N}\\
        &\qquad\GCFST{\POneTurn}{\UNIFASSIGN{x}{x{+}1,\, x{+}2,\, x{+}3}}\\
        &\qquad\GCSECOPEN{\varTurn = 2}\\
        &\qquad\begin{aligned}[t]
        &\qquad\GCFST{\true}{\ASSIGN{x}{x+1}}\\
        &\qquad\GCSEC{\true}{\ASSIGN{x}{x+2}}\\
        &\qquad\GCSEC{\true}{\ASSIGN{x}{x+3}}\\
        &\qquad\GCSECCLOSE \\
        \end{aligned}\\
        &\qquad\ASSIGN{\varTurn}{3 - \varTurn} \\
        &\WHILEDOINVCLOSE{\IGame}
        \end{align*}
        \caption{
            Program modeling the Nim game.
        }
        \label{fig:game-program}
    \end{minipage}
    \hfill
    \begin{minipage}[t]{0.50\textwidth}
        \begin{align*}
        &\WHILEDOINVOPEN{x<N}\\
        &\qquad\GCFST{\POneTurn}{\UNIFASSIGN{x}{x{+}1,\, x{+}2,\, x{+}3}}\\
        &\qquad\GCSECOPEN{\varTurn = 2}\\
        &\qquad\begin{aligned}[t]
        &\qquad\GCFST{\alertannocolor{(x+1 \congmod 4 N) \lor (x+2 \congmod 4 N)}}{\ASSIGN{x}{x+1}}\\
        &\qquad\GCSEC{\alertannocolor{(x+1 \congmod 4 N) \lor (x+3 \congmod 4 N)}}{\ASSIGN{x}{x+2}}\\
        &\qquad\GCSEC{\alertannocolor{(x+1 \congmod 4 N) \lor (x \congmod 4 N)}}{\ASSIGN{x}{x+3}}\\
        &\qquad\GCSECCLOSE \\
        \end{aligned}\\
        &\qquad\ASSIGN{\varTurn}{3 - \varTurn} \\
        &\WHILEDOINVCLOSE{\IGame}
        \end{align*}
        \caption{
            Program representing strategies for Nim .
        }
        \label{fig:game-program-tamed}
    \end{minipage}
\end{figure}
%
It models a variant of the game \emph{Nim}, a 2-player zero-sum game which goes as follows:
$N$ tokens are placed on a table.
The players take turns; in each turn, the player has to remove 1, 2, or 3 tokens from the table.
The first player to remove the last token looses the game.
We have annotated the loop with a quantitative invariant $\IGame$ (see \Cref{sec:case_studies} for details).
This invariant certifies that the \emph{maximal} winning probability of the controllable player $2$ is at least $\nicefrac 2 3$. Now,
\begin{quote}
    \emph{Given a quantitative specification and suitably strong\kb{added suitably strong} quantitative loop invariants for all --- possibly nested --- loops in a program $\cc$, our technique automatically yields a program $\cc'$ where the nondeterministic choices are restricted in such a way that any strategy consistent with $\cc'$ satisfies the desired quantitative specification.}
\end{quote}
\noindent
The result of applying this technique to our example is given in \Cref{fig:game-program-tamed}.
\emph{Any} strategy playing the game according to this program wins with probability at least $\nicefrac 2 3$ \emph{for all values of $N$}.

\paragraph{Contributions}
In summary, this paper makes the following contributions:
\begin{description}
	\item[Program-level construction of strategies]
A mechanizable technique to determinize loop-free nondeterministic probabilistic programs in an optimal manner (\Cref{thm:trans_preserves_wpre}).
    Given quantitative loop invariants, our technique as well determinizes programs with --- possibly multiple nested and sequential --- loops, in a mechanizable way (\Cref{thm:correct_transformations}).
	\item[A novel proof rule for lower bounds on expected outcomes of loops]
	As a by-product of our results, we obtain a generalization of a powerful proof principle for verifying lower bounds on expected outcomes of \emph{deterministic} probabilistic loops to \emph{non}deterministic probabilistic loops ---
	a problem left open by \citet{DBLP:journals/pacmpl/HarkKGK20}. See \Cref{rem:aiming_low} for details.
	\item[A bridge to the world of Markov Decision Processes]
	We establish tight connections between our setting and the well-known problem of finding good, or even optimal, strategies for resolving nondeterminism in countably infinite MDPs.
	\item[Case studies and examples]
	\hspace{-4pt} to demonstrate the applicability of our technique (\Cref{sec:case_studies}).
\end{description}

\paragraph{Limitations and Assumptions}
We focus on quantitative specifications imposing lower- or upper bounds on expected outcomes of a program $\cc$, i.e., bounds on expected values of random variables w.r.t.\ the distribution of final states obtained from running $\cc$ (cf.\ \Cref{sec:problem_statement} for a formal problem statement).
We do \emph{not} consider long-run properties such as mean-payoff objectives~\cite{DBLP:books/wi/Puterman94} or general $\omega$-regular  properties~\cite[Ch.~10]{DBLP:books/daglib/0020348}.
Further, we restrict to \emph{memoryless and deterministic} strategies (as opposed to history-dependent randomized strategies).
For countable state spaces, this is sufficient for ($\varepsilon$-)optimality w.r.t.\ the considered objectives (cf.\ \Cref{sec:impossibilityMDP}). For this reason, we assume that program variables range over a countable domain (cf.\ \Cref{sec:syntax}).

Regarding mechanizability, our approach automatically yields optimal strategies for \emph{loop-free} programs. Hence, if it exists, we can compute a determinized program satisfying the given quantitative specification (cf.\ \Cref{sec:overview_loopfree} and \Cref{thm:trans_preserves_wpre}).
 In the presence of loops, we require that all loops are annotated with sufficiently strong invariants. We do \emph{not} automate the synthesis or the verification of such invariants. Rather, we show that if a nondeterministic program is annotated with such loop invariants, then suitable determinizations can be \mbox{computed (cf.\ \Cref{sec:overview_loops} and \Cref{thm:correct_transformations}).}

\paragraph{Paper Structure}
\Cref{sec:nondetprogs,sec:wp} introduce the deductive verification techniques and operational MDP semantics our technique is based on.
In \Cref{sec:overview} we provide an informal, illustrative bird's eye view on our approach. 
Our technique is parameterized by quantitative verification condition generators, which we introduce in \Cref{sec:vc}.
Program-level synthesis of strategies --- the main technical contribution of our paper --- is described in \Cref{sec:trans}.
We provide further case studies in \Cref{sec:case_studies}, discuss related works in \Cref{sec:rel_work}, and conclude in \Cref{sec:conclusion}.

\section{Nondeterminstic Probabilistic Programs}
\label{sec:nondetprogs}

In this section, we introduce the syntax as well as an operational Markov decision process semantics of a simple probabilistic programming language featuring nondeterminstic choices.

\subsection{Syntax}
\label{sec:syntax}

Let $\Vars = \{x,y,z,\ldots\}$ be a countably infinite set of \emph{(program) variables} with values\footnote{We have chosen the value domain $\PosRats$ for the sake of concreteness. Our results straightforwardly apply to more general (possibly many-sorted) countable domains.} from the set $\Vals = \PosRats$. The countably infinite set of \emph{(program) states} is 
\[
   \States \eeq \{ \pstate \colon \Vars \to \Vals ~\mid~  \text{$\pstate(x)=0$ for all but finitely many $x\in\Vars$}\}~.
\]
The set of \emph{predicates} over $\States$ is $
\predicates = \big\{ \guard \mid \guard\colon \States \to \{\true,\false \} \big\}$. We write $\pstate \models \guard$ instead of $\pstate \in \guard$.
A predicate $\guard$ is \emph{valid}, denoted $\entails \guard$, if $\pstate \models \guard$ for every $\pstate$, and it is called \emph{unsatisfiable} if $\neg\guard$ is valid.
Programs $\cc$ in the \emph{probabilistic guarded command language} $\pgcl$ adhere to the grammar
\begin{align*}
    \cc \qquad::=\qquad& \SKIP \tag{effectless program}\\
    &\mid \ASSIGN{x}{\aexpr} \tag{variable assignment} \\
    &\mid \COMPOSE{\cc}{\cc} \tag{sequential composition}\\
    &\mid \GC{\guard_1}{\cc}{\guard_2}{\cc} \tag{guarded choice} \\
    &\mid \PCHOICE{\cc}{\pexpr}{\cc} \tag{probabilistic choice} \\
    &\mid \WHILEDOINV{\guard}{\cc}{\inv} \tag{loop with invariant annotation $\inv$}
\end{align*}
where $\aexpr$ is a function of type $\States \to \Vals$ called \emph{expression},
$\guard_1, \guard_2$ and $\guard$ are predicates from $\predicates$,
and $\pexpr$ is a \emph{probability expression} of type $\States \to [0,1]$.
For the guarded choice, we require that for every state $\pstate$, either $\pstate \entails \guard_1$ or $\pstate \entails \guard_2$ or both hold, i.e., that $\guard_1 \vee \guard_2$ is valid.
We call a program $\cc \in \pgcl$ \emph{deterministic} if for all $\GC{\guard_1}{\cc_1'}{\guard_2}{\cc_2'}$ occurring in $\cc$, $\guard_1 \wedge \guard_2$ is unsatisfiable.
Loops are annotated with \emph{quantitative invariants} $\inv$ which are functions of type $\States \to \PosRealsInf$ (see \Cref{sec:overview_loops} for details). We often omit the invariant $\inv$ if it is irrelevant in the current context.

We briefly describe each $\pgcl$ construct.
$\SKIP$ does nothing. $\ASSIGN{x}{\aexpr}$ evaluates expression $\aexpr$ in the current state and assigns the resulting value to variable $x$.
$\COMPOSE{\cc_1}{\cc_2}$ first executes $\cc_1$, and then --- if $\cc_1$ terminates --- $\cc_2$.
The guarded choice $\GC{\guard_1}{\cc_1}{\guard_2}{\cc_2}$ first checks which of the guards $\guard_1$ and $\guard_2$ evaluates to $\true$ under the current state.
If only one of the guards, say $\guard_1$, evaluates to $\true$, then the guarded choice deterministically executes $\cc_1$.
If \emph{both} $\guard_1$ and $\guard_2$ evaluate to $\true$, then the guarded choice behaves \emph{nondeterministically} by \emph{either} executing $\cc_1$ \emph{or} $\cc_2$.
Notice that standard conditional choice $\ITE{\guard}{\cc_1}{\cc_2}$ is syntactic sugar for $\GC{\guard}{\cc_1}{\neg\guard}{\cc_2}$.
The probabilistic choice $\PCHOICE{\cc_1}{\pexpr}{\cc_2}$ introduces randomization: In state $\pstate$, $\cc_1$ is executed with probability $\pexpr(\pstate)$ and $\cc_2$ is executed with the remaining probability $1-\pexpr(\pstate)$.
Finally, the loop $\WHILEDO{\guard}{\cc}$ executes the loop body $\cc$ as long as $\guard$ evaluates to $\true$, which is the only possible source of non-termination.
We conclude this section with an example.
\begin{example}
    \label{ex:first_example_prog}
    Consider the program $\cc$ in \Cref{fig:first_example_prog}.
    Program $\cc$ is a loop which contains both randomization and nondeterminism.
    In each iteration, a fair coin is flipped.
    Depending on the outcome, the loop either terminates or increments $x$ nondeterministically by either $1$ or $2$.  \hfill $\triangle$
\end{example}

\subsection{Markov Decision Process Semantics}
\label{sec:mdp_semantics}
In this section, we define a formal semantics of $\pgcl$ programs in terms of (countably infinite) \emph{Markov decision processes} (\emph{MDP}), based on \cite{operational_vs_weakest} with adaptions from \cite{qsl}.

Formally, an MDP is a quadruple $\MDP = \left( \MS, \MI, \MA, \MP \right)$ where
$\MS$ is a countable non-empty set of states,
$\MI \subseteq \MS$ is a \emph{set} of initial states,
$\MA$ is a finite non-empty set of action labels,
and $\MP \colon \MS \times \MA \times \MS \rightarrow [0,1]$ is a transition probability function such that for all $\ms \in \MS$ and all $\ma\in\MA$, $\sum_{\ms'\in\MS} \MP(\ms,\ma,\ms')\in\{0,1\}$. 
For $\ms \in \MS$ we write $\MA(\ms) = \{\ma \in \MA \mid \sum_{\ms'\in\MS} \MP(\ms,\ma,\ms') = 1\}$ and require that $|\MA(\ms)|  \geq 1$ for every $\ms$.
An MDP $\MDP$ is called \emph{deterministic}, if $|\MA(\ms)| = 1$ for all $\ms \in \MS$.

\subsubsection{MDP Semantics of $\pgcl$}
\label{sec:actualMDPconstruction}
We first define a small-step execution relation $\ExecSymbol$ between \emph{program configurations}.
These configurations consist of (i) either a $\pgcl$ program $\cc$ that is still to be executed or a symbol $\Term$ indicating termination and (ii) a program state $\sigma$ from $\States$.
Formally, the countable set $\OpStates$ of \emph{program configurations} is given by
$\OpStates \eeq \left(\pgcl \cup \{\,\Term\,\}\right) \,\times\, \States$.
The small-step execution relation is of the form
$\ExecSymbol ~\subseteq~ \OpStates \,\times\, \{\labtau, \labalpha, \labbeta\} \,\times\, [0,1] \,\times\, \OpStates$.
Intuitively, we can think of the elements from $\ExecSymbol$ as labeled transitions between program configurations. The second component of $\ExecSymbol$ is an \emph{action label} $\actlab \in \{\labtau, \labalpha, \labbeta\}$, and the third component is the transition's probability $\pnum \in [0,1]$.
The formal definition of $\ExecSymbol$ is standard and given in \iftoggle{arxiv}{\Cref{app:mdp_rules}}{\cite[Appendix B]{arxiv}}.
In a nutshell, $\ExecSymbol$ realizes the intended semantics of $\pgcl$ programs as described in \Cref{sec:syntax}.
The action label $\labtau$ is used for all transitions except for those corresponding to a (possibly nondeterministic) guarded command $\GC{\guard_1}{\cc_1}{\guard_2}{\cc_2}$.
The transition labels $\labalpha$ and $\labbeta$ distinguish between the branches chosen by such a guarded command.
We often write $\ExecAbbr{\mc}{\actlab}{\pnum}{\mc'}$ instead of $(\mc,\actlab,\pnum,\mc') \in\,\ExecSymbol$.
We also define a binary \emph{successor relation} $\mc \rightharpoonup \mc' \iff  \exists \actlab, \pnum \colon \ExecAbbr{\mc}{\actlab}{\pnum}{\mc'}$.
We say that configuration $\mc'$ is \emph{reachable} from $\mc$ if $(\mc,\mc')$ is in the reflexive-transitive closure of the relation $\rightharpoonup$. 
Based on the execution relation $\ExecSymbol$ we can now define our MDP semantics:

\begin{definition}
    The \emph{operational Markov decision process} $\opMDP{\cc}$ of program $\cc \in \pgcl$ is
    \begin{align*}
    \opMDP{\cc} \eeq \left( \MS, \MI, \MA, \MP \right)~\text{, where}
    \end{align*}
    \begin{enumerate}
        \item $\MS = \{ (\cc', \pstate') \in \OpStates \mid \exists \pstate \in \States \colon (\cc',\pstate') \text{ is reachable from } (\cc,\pstate)\}$,
        \item $\MA = \{\labtau,\labalpha,\labbeta\}$ is the set of \emph{action labels},
        \item $\MP\colon \MS \times \MA \times \MS \to [0,1]$ is the \emph{transition probability function} given by 
        \[
        \MP(\mc, \actlab,\mc')\eeq
        \begin{cases}
        \pnum & \text{ if } \ExecAbbr{\mc}{\actlab}{\pnum}{\mc'} \\
        0 & \text{else}~,
        \end{cases}
        \]
        \item and $\MI = \{(\cc, \pstate) \mid \pstate \in \States \}$ are the initial states of $\opMDP{\cc}$.
    \end{enumerate}
\end{definition}%

\begin{example}
    A fragment of the MDP $\opMDP{\cc}$ of program $\cc$ from \Cref{ex:first_example_prog} is sketched in \Cref{fig:mdp_sem_example}, where we assume that $x \in \Nats$ and $c \in \{0,1\}$.
    We depict a fragment reachable from the initial states where $c=0$ (middle row).
    Transition probabilities equal to $1$ are omitted. \hfill $\triangle$
\end{example}

\begin{figure}[t]
    \begin{minipage}[b]{0.34\textwidth}
        \small
        \begin{align*}
        &\WHILE{c = 0} \\[1mm]
        &\qquad \{ \ASSIGN{c}{1} \} ~[0.5]~ \{ \\[1mm]
        &\qquad\qquad \GCFST{\true}{\ASSIGN{x}{x+1}} \\[1mm]
        &\qquad\qquad \GCSEC{\true}{\ASSIGN{x}{x+2}} \\[1mm]
        &\qquad \} \\[1mm]
        & \}
        \end{align*}
        \caption{Program $\cc$ from \Cref{ex:first_example_prog}.}
        \label{fig:first_example_prog}
    \end{minipage}
    \hfill
    \begin{minipage}[b]{0.60\textwidth}
        \begin{tikzpicture}[thick, on grid, node distance=12mm and 16mm, every state/.style={minimum size=6mm}, every node/.style={scale=0.8},initial where=right, initial text=]
        \node[state,label={left:$c=0$}] (0a) {};
        \node[state,above=of 0a,initial,label={left:$c=0$}] (0b) {};
        \node[state,above=of 0b,accepting,label={above:$x=0$},label={left:$c=1$}] (0c) {$\Term$};
        \node[state,right=of 0a] (1a) {};
        \node[state,right=of 0b,initial] (1b) {};
        \node[state,right=of 0c,accepting,label={above:$x=1$}] (1c) {$\Term$};
        \node[state,right=of 1a] (2a) {};
        \node[state,right=of 1b,initial] (2b) {};
        \node[state,right=of 1c,accepting,label={above:$x=2$}] (2c) {$\Term$};
        \node[state,right=of 2a] (3a) {};
        \node[state,right=of 2b,initial] (3b) {};
        \node[state,right=of 2c,accepting,label={above:$x=3$}] (3c) {$\Term$};
        \node[right=of 3a] (4a) {$\ldots$};
        \node[right=of 3b] (4b) {$\ldots$};
        \node[right=of 3c] (4c) {$\ldots$};
        
        \draw[->] (0b) edge node[right,near start] {$\labtau,0.5$} (0a);
        \draw[->] (0b) edge node[right,near start] {$\labtau,0.5$} (0c);
        \draw[->] (1b) edge node[right,near start] {$\labtau,0.5$} (1a);
        \draw[->] (1b) edge node[right,near start] {$\labtau,0.5$} (1c);
        \draw[->] (2b) edge node[right,near start] {$\labtau,0.5$} (2a);
        \draw[->] (2b) edge node[right,near start] {$\labtau,0.5$} (2c);
        \draw[->] (3b) edge node[right,near start] {$\labtau,0.5$} (3a);
        \draw[->] (3b) edge node[right,near start] {$\labtau,0.5$} (3c);
        
        \draw[->] (0a) edge node[below,near start] {$\labalpha$} (1b);
        \draw[->] (0a) edge[bend right=55] node[above, near start] {$\labbeta$} (2b);
        \draw[->] (1a) edge node[below,near start] {$\labalpha$} (2b);
        \draw[->] (1a) edge[bend right=55] node[above, near start] {$\labbeta$} (3b);
        \draw[->] (2a) edge node[below,near start] {$\labalpha$} (3b);
        \draw[->] (2a) edge[bend right=55] node[above, near start] {$\labbeta$} (4b);
        \draw[->] (3a) edge node[below,near start] {$\labalpha$} (4b);
        
        \draw[->] (0c) edge[loop right] node[right] {$\labtau$} (0c);
        \draw[->] (1c) edge[loop right] node[right] {$\labtau$} (1c);
        \draw[->] (2c) edge[loop right] node[right] {$\labtau$} (2c);
        \draw[->] (3c) edge[loop right] node[right] {$\labtau$} (3c);
        \end{tikzpicture}
        \caption{
            The semantic MDP $\opMDP{\cc}$ of program $\cc$ from \Cref{fig:first_example_prog}.
        }
        \label{fig:mdp_sem_example}
    \end{minipage}
\end{figure}

\subsubsection{Strategies}
A \emph{strategy} for $\MDP = \left( \MS, \MI, \MA, \MP \right)$ is a function $\msched \colon \MS^{+} \to \Dist\MA$ satisfying $\supp{\msched(\ms_0\ldots \ms_n)} \subseteq \MA(\ms_n)$ for all $\ms_0\ldots \ms_n \in \MS^{+}$.
Strategy $\msched$ is called \emph{memoryless} if $\msched(\ms_0\ldots \ms_n)$ depends only on $\ms_n$, and \emph{deterministic} if $\msched(\ms_0\ldots \ms_n)$ is a Dirac distribution.
Memoryless strategies can be identified with maps of type $\MS \to \Dist\MA$, and strategies which are both \emph{memoryless and deterministic} (\emph{MD}) can be identified with maps of type $\MS \to \MA$.
An MD strategy $\msched \colon \MS \to \MA$ for $\MDP$ induces a deterministic MDP $\MDP^\msched = \left( \MS, \MI, \MA, \MP' \right)$ where for $\ms,\ms' \in \MS$ and $\ma \in \MA$,
\[
\MP'(\ms,\ma,\ms') \eeq 
\begin{cases}
	 \MP(\ms,\ma,\ms'), &\text{if $a = \msched(\ms)$}\\
	 0, &\text{otherwise}
\end{cases}~.
\]
We call MDP $\MDP'$ a \emph{determinization} of $\MDP$ if there exists an MD strategy $\msched$ such that $\MDP' = \MDP^\msched$.

\subsubsection{Objectives}
\label{sec:mdp_objectives}
The fundamental objective considered in this paper is called \emph{reachability-reward}, a mixture of reachability and expected reward objectives.
Intuitively, reachability-reward is like standard reachability, with the difference that each state in the target set $T$ is a sink and has a $\PosRealsInf$-valued reward that is collected when that state is reached for the first time.
The goal is to either minimize or maximize the expected reward.
We now formalize reachability-reward.
For $\MT\subseteq \MS$, the set of finite paths eventually reaching $\MT$ is
\[
   \pathstotarget{\MT} \eeq \{\, \ms_0 \ldots \ms_m \in \MS^{+}~\mid~  \ms_m \in \MT,\, \forall k \in\{0,\ldots,m-1\}\colon \ms_k\not\in \MT  \,\} ~.
\]
Moreover, given $\ms_0\ldots\ms_m \in \MS^+$ and strategy $\msched$, we define
\[
  \probofpath{\msched}{\ms_0\ldots\ms_m }
  \eeq
  \prod_{0 \leq k <m}
  \underbrace{\sum_{\actlab \in \MA} \msched(\ms_0\ldots\ms_k)(\actlab) 
  \cdot \MP(\ms_k,\actlab,\ms_{k+1})}_{\text{probability to move in one step from $\ms_k$ to $\ms_{k+1}$ under strategy $\msched$}}
  ~,
\]
with the convention that the empty product (for $m=0$) is equal to $1$.
Finally, given $\MR \colon \MT \to \NonnegRealsInf$, we define the function $\exprew{\msched}{\MDP}{\MR} \colon \MI \to \PosRealsInf$, which maps every initial state $\mi$ to its expected (reachability-)reward $\exprew{\msched}{\MDP}{\MR}(\mi)$ under strategy $\msched$ by
\[
    \exprew{\msched}{\MDP}{\MR}(\mi) \eeq
    \sum_{\mi\ms_1\ldots\ms_m\in \pathstotarget{\MT}} \probofpath{\msched}{\mi\ms_1\ldots\ms_m} \cdot \MR(\ms_m)~.
\]
Notice that, for constant reward functions $\MR \colon \MT \to \{1\}$, $\exprew{\msched}{\MDP}{\MR}(\mi)$ is the probability to reach $T$ from initial state $\mi$ under strategy $\msched$.
Finally, we define
\begin{align}
  \minexprew{\MDP}{\MR} \eeq& \inf_{\msched \in \MSCHED} \exprew{\msched}{\MDP}{\MR} \label{eq:defMinExpRew}\\
  \text{and} \quad
   \maxexprew{\MDP}{\MR} \eeq& \sup_{\msched \in \MSCHED} \exprew{\msched}{\MDP}{\MR} \label{eq:defMaxExpRew}~,
\end{align}
where $\MSCHED$ denotes the set of all strategies for $\MDP$ (possibly using memory and/or randomization), and where the infimum and supremum are understood pointwise.
If $\MDP$ is deterministic, then there exists exactly one strategy $\msched$, and thus $\minexprew{\MDP}{\MR} = \maxexprew{\MDP}{\MR}$.
This justifies writing $\exprewmc{\MDP}{\MR}$ instead of $\minexprew{\MDP}{\MR}$ (or $\maxexprew{\MDP}{\MR}$).

\begin{example}
    Reconsider the MDP $\opMDP{\cc}$ from \Cref{fig:mdp_sem_example}.
    Suppose that reward function $\MR$ assigns the value of $x$ to each terminal state in the top row.
    Then $\maxexprew{\opMDP{\cc}}{\MR}(C,\sigma) = \sigma(x)+2$ for every initial state $(C,\sigma)$ with $\sigma(c)=0$. This maximal expected reward is attained by the MD strategy that always chooses action $\labbeta$. \hfill $\triangle$
\end{example}
\noindent
Our ultimate goal is to apply deductive program verification techniques to solve the strategy synthesis problem for MDPs arising from $\pgcl$ programs (see also \Cref{sec:problem_statement}):
\smallskip
\begin{center}
	\emph{Given MDP $\MDP$, reward function $\MR$, ${}\somenumrel{} \in \{\leq,\geq\}$, and thresholds $\thresh \colon \MI \to \PosRealsInf$,}\\
	\emph{if it exists, find a \underline{memor}y\underline{less deterministic} strategy $\msched$ with $\exprew{\msched}{\MDP}{\MR}  \somenumrel \thresh$} \\[0mm]
	or, equivalently, \\[0mm]
	\emph{if it exists, find a determinization $\MDP'$ of $\MDP$ with $\exprewmc{\MDP'}{\MR}\somenumrel\thresh$~.}
\end{center}
\smallskip
The comparison relation $\somenumrel$ in the problem statement above refers to the pointwise lifted order $\leq$ (resp.\ $\geq$) on $\PosRealsInf$ to maps of type $\MI \to \PosRealsInf$.
Note that we are interested in finding strategies that guarantee a given threshold for \emph{all} initial states.
Such strategies are called \emph{uniform} in the literature.

\subsubsection{Existence of Optimal MD Strategies}
\label{sec:impossibilityMDP}
Strategies that attain the infimum in \eqref{eq:defMinExpRew} (resp.\ the supremum in \eqref{eq:defMaxExpRew}) are called \emph{optimal}.
It is known that \emph{optimal MD strategies always exist in the minimizing setting}~\cite[Theorem 7.3.6a]{DBLP:books/wi/Puterman94}.
In fact, our theory established in the upcoming sections implies this result for the class of MDPs described by $\pgcl$ programs. 
The above problem is thus guaranteed to have a solution $\msched$ if $\somenumrel$ is $\leq$ and $\thresh \geq \minexprew{\MDP}{\MR}$.

The maximizing setting is fundamentally different and more subtle~\cite{blackwell1967positive,ornstein1969existence}.
First, optimal maximizing strategies do not exist in general, i.e., the supremum in \eqref{eq:defMaxExpRew} might not be attained.
For instance, the MDP $\MDP$ in \Cref{fig:ornsteinExample} (black states only, \blue{blue} rewards, $\MI = $ topmost row) satisfies $\maxexprew{\MDP}{\MR} = 1$, but for all strategies $\msched$ and initial states $\ms \in \MI$ we have $\exprew{\msched}{\MDP}{\MR}(\ms) < 1$.
To see this, observe that any strategy that reaches $\MT$ with positive probability must play some action $\labalpha$ (say the $n$-th one) with positive probability $p > 0$, resulting in an expected reward of at most $p \cdot \frac{n}{n+1} + (1 - p)$ which is clearly less than $1$.
This example shows that, unlike in the minimizing setting, the above problem does \emph{not} necessarily have a solution if $\somenumrel$ is $\geq$ and $\thresh \leq \maxexprew{\MDP}{\MR}$, \emph{even if general strategies are allowed}.

On the other hand, there are also situations where MD strategies are strictly less powerful than general strategies.
For example, consider the MDP in \Cref{fig:ornsteinExample} (black states only, \red{red} rewards, $\MI = $ topmost row).
Here, $\maxexprew{\MDP}{\MR} = \infty$ as witnessed by the \emph{optimal randomizing strategy} that plays each action with probability $\nicefrac 1 2$.
In contrast, each MD strategy obviously yields only finite expected reward.
MD strategies are still somewhat useful in this example because for every \emph{constant} threshold $\thresh \in \PosReals$ --- no matter how large --- we can find an MD strategy that yields expected reward at least $\thresh$ for each initial state.
However, this does not hold in general.
In fact, \citet{ornstein1969existence} gives an MDP $\MDP$ where $\maxexprew{\MDP}{\MR} = \infty$ which is attained by a randomizing strategy, but for all MD strategies $\msched$ and all $\varepsilon > 0$, there exists $\ms \in \MI$ such that $\exprew{\msched}{\MDP}{\MR} \leq \varepsilon$ (\Cref{fig:ornsteinExample}, \gray{gray} \& black states, \red{red} reward function, $\MI =$ gray states).

MD strategies are nonetheless guaranteed to be reasonably powerful under mild assumptions, which we summarize in the following theorem (where (2) is due to \cite{ornstein1969existence}):
\begin{theorem}
    \label{thm:mdp_md_good}
    Consider an MDP $\MDP$ with countable $\MS$ and reachability-reward function $\MR$.
    \begin{enumerate}
        \item For finite $\MS$ there exists an MD optimal maximizing strategy~\cite[Ch. 10]{DBLP:books/daglib/0020348}.
        \item\cite{ornstein1969existence} If $\maxexprew{\MDP}{\MR}(\ms) < \infty$ for all $\ms \in \MI$, then:
        \begin{enumerate}
            \item$\forall \varepsilon > 0 \colon \exists \text{ MD strategy $\msched$}\colon$
            $
                \maxexprew{\MDP}{\MR}
                \geq
                (1-\varepsilon) \cdot \exprew{\msched}{\MDP}{\MR}
            $,
            \item If there is an optimal maximizing strategy, then there is an MD optimal maximizing strategy. 
        \end{enumerate}
    \end{enumerate}
\end{theorem}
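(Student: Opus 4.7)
The plan is to dispatch (1) by direct citation and to treat (2) as essentially a restatement of Ornstein's theorem, with a brief verification that our setting (countable state space, reachability-reward with values in $\PosRealsInf$) fits the hypotheses under which Ornstein's result applies.

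For (1), finiteness of $\MS$ places us in the classical \emph{positive bounded} (or equivalently, negative-reward) model over a finite state space, where reachability-reward is a special case of the total reward criterion obtained by turning every state in $\MT$ into an absorbing sink equipped with a one-step reward equal to $\MR(\ms)$, and giving every non-target state reward $0$. Under these conditions the Bellman operator $\bellman$ on $\PosRealsInf^\MS$ is monotone and admits a greatest fixed point that equals $\maxexprew{\MDP}{\MR}$; since $\MA(\ms)$ is finite for every $\ms$, the supremum in the Bellman operator is attained pointwise, and selecting at each state a maximizing action yields an MD optimal strategy. This is exactly the argument given in~\cite[Ch.~10]{DBLP:books/daglib/0020348} that we are citing, so no further work is required.

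For (2a) and (2b), the plan is to reduce to Ornstein's original theorem. The classical formulation is for transient MDPs with bounded, nonnegative one-step rewards accumulated along infinite paths. I would first translate our reachability-reward setting into that shape by the same sink-and-one-step-reward construction used for (1): make every $\ms \in \MT$ absorbing with reward $\MR(\ms)$ collected exactly once upon first entry, and assign reward $0$ everywhere else. Under the standing hypothesis $\maxexprew{\MDP}{\MR}(\ms) < \infty$ for all $\ms \in \MI$, the resulting total-reward MDP satisfies Ornstein's positive-bounded condition (with the slight complication that $\MR$ may take the value $\infty$ on individual target states, which I would handle by observing that such states are never reached with positive probability under any near-optimal strategy, so one may replace $\MR$ by $\MR \cdot \iverson{\MR < \infty}$ without affecting optimality). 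Ornstein's theorem then directly yields (2a): for any $\varepsilon > 0$ there is a single MD strategy $\msched$ satisfying the multiplicative $(1-\varepsilon)$-optimality bound \emph{uniformly} over all initial states --- this uniformity across $\MI$ is the crucial and nontrivial content of Ornstein's result, and is precisely what distinguishes it from the easy pointwise approximation.

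For (2b), assume an optimal (possibly history-dependent randomized) maximizing strategy exists, witnessing that $\maxexprew{\MDP}{\MR}(\ms)$ is attained for every $\ms \in \MI$. Applying (2a) with $\varepsilon_n = 1/n$ produces MD strategies $\msched_n$ whose values converge pointwise to $\maxexprew{\MDP}{\MR}$; then I would pass to a limit MD strategy using a standard diagonal/compactness argument over the finite action alphabet $\MA$, exploiting König's lemma on the countable state space to extract an MD strategy $\msched^*$ whose induced value dominates $(1 - 1/n)\maxexprew{\MDP}{\MR}$ for every $n$ and hence equals $\maxexprew{\MDP}{\MR}$. The main obstacle throughout is (2a): the uniform approximation of the value over a countable state space by a single MD strategy is genuinely delicate and is where I would lean entirely on Ornstein's construction rather than redo it, since our contribution lies elsewhere.
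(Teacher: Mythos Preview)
The paper does not actually prove this theorem; it is stated as a summary of cited results, with (1) attributed to~\cite[Ch.~10]{DBLP:books/daglib/0020348} and (2) to~\cite{ornstein1969existence}. Your treatment of (1) and (2a) --- dispatch by citation after checking that reachability-reward fits the cited frameworks via the absorb-the-targets construction --- is therefore exactly in line with the paper, and the reduction you sketch is the standard one.

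Your argument for (2b), however, has a genuine gap. Extracting a pointwise-limit MD strategy $\msched^*$ from the sequence $(\msched_n)_n$ by diagonalization over the countable state space and finite action alphabet is unproblematic, but the conclusion that $\msched^*$ is optimal does not follow. The value $\exprew{\msched^*}{\MDP}{\MR}(\ms)$ depends on the actions $\msched^*$ selects at \emph{all} states reachable from $\ms$, and pointwise convergence of actions does not entail convergence (let alone the needed lower semicontinuity) of the induced value functions: you would have to interchange a limit over the approximating strategies with a limit over the time horizon, and nothing in your sketch justifies that. More tellingly, your argument never invokes the hypothesis that an optimal strategy exists --- yet without that hypothesis the conclusion is false, as the black-states/blue-rewards MDP in \Cref{fig:ornsteinExample} shows: there the $(1{-}1/n)$-optimal MD strategies play $\labbeta$ at the first $n{-}1$ states and $\labalpha$ thereafter, their pointwise limit plays $\labbeta$ everywhere, and that limit has value $0$. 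Since your diagonal argument would apply verbatim to that example, it cannot be correct as stated. The route taken in Ornstein's work is not a limiting argument from (2a); it works directly with the optimal value function and the Bellman equation, constructing a conserving MD strategy and using the attainment hypothesis precisely to rule out value leaking to infinity.
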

\noindent
Note that the premise of (2) holds if $\MR$ is a bounded function.

\begin{figure}[t]
    \begin{tikzpicture}[thick, on grid, node distance=10mm and 16mm, every state/.style={minimum size=6mm}, every node/.style={scale=0.8}]
        \node[gray] (-3) {$\ldots$};
        \node[state,right=of -3,gray] (-2) {};
        \node[state,right=of -2,gray] (-1) {};
        \node[state,right=of -1,gray] (0) {};
        \node[state,accepting,right=of 0,label={right,align=center:{$\blue{\nicefrac 1 2}$ \\ $\red{1}$}}] (1) {};
        \node[state,above=of 1] (1c) {};
        \node[state,accepting,right=of 1,label={right,align=center:{$\blue{\nicefrac 2 3}$ \\ $\red{2}$}}] (2) {};
        \node[state,above=of 2] (2c) {};
        \node[state,accepting,right=of 2,label={right,align=center:{$\blue{\nicefrac 3 4}$ \\ $\red{4}$}}] (3) {};
        \node[state,above=of 3] (3c) {};
        \node[right=of 3,label={right,align=center:{$\blue{\nicefrac{n}{n+1}}$ \\ $\red{2^n}$}}] (4) {$\ldots$};
        \node[above=of 4] (4c) {$\ldots$};
        
        \draw[->,gray] (-3) edge node[below] {$\labtau,\nicefrac 1 2$} (-2);
        \draw[->,gray] (-2) edge node[below] {$\labtau,\nicefrac 1 2$} (-1);
        \draw[->,gray] (-1) edge node[below] {$\labtau,\nicefrac 1 2$} (0);
        
        \draw[->,gray] (-2) edge[bend left=45] node[auto] {$\labtau,\nicefrac 1 2$} (1);
        \draw[->,gray] (-1) edge[bend left=] node[auto] {$\labtau,\nicefrac 1 2$} (1);
        
        \draw[->,gray] (0) edge[] node[below,near start] {$\labtau$} (1c);
        \draw[->] (1c) edge[] node[auto] {$\labbeta$} (2c);
        \draw[->] (2c) edge[] node[auto] {$\labbeta$} (3c);
        \draw[->] (3c) edge[] node[auto] {$\labbeta$} (4c);
        
        \draw[->] (1c) edge[] node[auto] {$\labalpha$} (1);
        \draw[->] (2c) edge[] node[auto] {$\labalpha$} (2);
        \draw[->] (3c) edge[] node[auto] {$\labalpha$} (3);
    \end{tikzpicture}
    \caption{
        Example MDP described by \citet{ornstein1969existence}.
        Transition probabilities equal to $1$ are omitted.
    }
    \label{fig:ornsteinExample}
\end{figure}

\section{Weakest Preexpectation Reasoning}
\label{sec:wp}

In this section, we introduce the program calculi we use throughout to reason about (minimal and maximal) expected outcomes of nondeterminstic probabilistic programs. Expectation-based reasoning for deterministic probabilistic programs was pioneered by Kozen \cite{Kozen1983,Kozen1985}. \citet{McIverM05} extended expectation-based reasoning to support programs with nondeterminism.

\subsection{Expectations}
Expectations\footnote{The terminology is slightly misleading: Expectations can be thought of as random variables on a program's state space rather than an expected value.} are the central objects the calculi considered in this paper operate on. They are the quantitative analogue of predicates: Instead of mapping program states to $\{\true, \false\}$,
program states are mapped to $\PosRealsInf = \PosReals \cup \{\infty\}$. Formally, the complete lattice $(\expecs, \eleq)$ of \emph{expectations} is
\[
    \expecs
    \eeq
    \{\, f \mid f \colon \States \to \NonnegRealsInf \,\} \qquad\text{where}\qquad
   f \eleq g \quad\text{iff}\quad \text{for all $\sigma \in \States$}\colon f(\sigma) \leq g(\sigma)~.
\]
Expectations are denoted by $f,g,\ldots$ and variations thereof. Infima and suprema in this lattice are thus understood pointwise. In particular, pairwise minima and maxima are given by 
\[
  f \emin g \eeq \lambda \sigma. \min \{ f(\sigma),g(\sigma)\}
  \qquad\text{and}\qquad
  f \emax g \eeq \lambda \sigma. \max \{ f(\sigma),g(\sigma)\}~.
\]
Standard arithmetic operations addition $+$ and multiplication $\cdot$ are also understood pointwise, i.e., for $\circ \in \{+,\cdot\}$, we let $f \circ g = \lambda \sigma. f(\sigma) \circ g(\sigma)$, where we set $0\cdot \infty = \infty \cdot 0 = 0$. The \emph{Iverson bracket} $\iverson{\varphi}$ casts a predicate $\varphi$ into an expectation~\cite{Iverson1962}:
\[
    [\guard]
    \eeq
    \lambda \sigma.
    \begin{cases}
    	1 & \text{ if } \sigma \models \guard  \\
    	0 & \text{ otherwise .}
    \end{cases}
\]
It is convenient to define a \emph{quantitative implication} $\gimpsymbol\colon \predicates \times \expecs \to \expecs$
 by
\[
    \gimp{\guard}{g}
    \eeq
    \lambda \sigma.
    \begin{cases}
    	g(\sigma) & \text{ if $\sigma\models \guard$ } \\
    	\infty & \text{otherwise}~.
    \end{cases}
    ~.
\]
Intuitively, $\gimpsymbol$ acts like a filter: if $\guard$ evaluates to $\true$, the implication evaluates to the value of the right-hand side's expectation. Otherwise, $\gimpsymbol$ evaluates to the top element $\infty$ of the lattice $(\expecs,\eleq)$. We agree on the following order of precedence for the connectives between expectations:
\[
   \cdot \qquad{}\textcolor{gray}{>}{}\qquad + \qquad{}\textcolor{gray}{>}{}\qquad \emin  \qquad{}\textcolor{gray}{>}{}\qquad \emax \qquad{}\textcolor{gray}{>}{}\qquad \gimpsymbol
\]
That is, $\cdot$ takes precedence over $+$, etc.
We use brackets to resolve ambiguities. Finally, given $f \in \expecs$, $x\in \Vars$, and an arithmetic expression $\aexpr$, we define the substitution of $x$ in $f$ by $\aexpr$ as
\[
    \expsubs{f}{x}{\aexpr} \eeq \lambda \sigma. f(\sigma\statesubst{x}{\aexpr(\sigma)}),\text{ where for $\val \in \Vals$, we set }\sigma\statesubst{x}{v}(y) \eeq \begin{cases}
    	v, & \text{if } y=x \\
    	\sigma(y), &\text{o.w.}
    \end{cases}
\]

\subsection{Angelic and Demonic Weakest Preexpectations}
\begin{figure}[t]
	\centering
	\begin{tabular}{l l l}
		\toprule
		$\cc$ & $\dwp{\cc}{f}$ & $\awp{\cc}{f}$ \\
		\midrule
		$\SKIP$ & $f$ & $f$\\[1ex]
		$\ASSIGN{x}{\aexpr}$ & $\expsubs{f}{x}{\aexpr}$ & \textcolor{black}{$\expsubs{f}{x}{\aexpr}$}  \\[1ex]
		$\COMPOSE{\cc_1}{\cc_2}$ & $\dwp{\cc_1}{\dwp{\cc_2}{f}}$ & $\awp{\cc_1}{\awp{\cc_2}{f}}$ \\[1ex]
		%
		$\GCFST{\guard_1}{\cc_1}$ & $ \phantom{\emin ~ }(\gimp{\guard_1}{\dwp{\cc_1}{f}})$ & $\phantom{\emax ~} [\guard_1] \cdot \awp{\cc_1}{f}$\\
		$\GCSEC{\guard_2}{\cc_2}$ & $\emin ~ (\gimp{\guard_2}{\dwp{\cc_2}{f}})$
		& $\emax ~  [\guard_2] \cdot \awp{\cc_2}{f}$ \\[1ex]
		$\PCHOICE{\cc_1}{\pexpr}{\cc_2}$ & $p \cdot \dwp{\cc_1}{f} + (1{-}\pexpr) \cdot \dwp{\cc_2}{f}$ & $\pexpr \cdot \black{\awp{\cc_1}{f}} + (1{-}\pexpr) \cdot \black{\awp{\cc_2}{f}}$\\[1ex]
		$\WHILEDO{\guard}{\cc'}$ & $\lfp g. ~ [\neg\guard]\cdot f + [\guard] \cdot \dwp{\cc'}{g}$ & $\lfp g. ~[\neg\guard]\cdot f + [\guard] \cdot \black{\awp{\cc'}{g}}$ \\
		\bottomrule
	\end{tabular}

	\caption{
		Inductive definition of $\dwp{\cc}{f}$ and $\awp{\cc}{f}$ for $f \in \expecs$. The $\lfp$ is taken w.r.t.\ $(\expecs,\eleq)$.
	}
	\label{fig:wp_table}
\end{figure}

To reason about minimal and maximal expected outcomes of programs, we introduce two program calculi --- \emph{expectation transformers} --- which associate to each $\cc \in \pgcl$ a map of type $\expecs \to \expecs$.  
\begin{definition}[Weakest preexpectation transformers]
    \label{def:wp}
    Let $\cc\in\pgcl$ and $f \in \expecs$. Each of the following is defined by induction on the structure of $\cc$ in \Cref{fig:wp_table}:
    \begin{enumerate}
    	\item $\dwp{\cc}{f} \in \expecs$ is the \emph{demonic weakest preexpectation} of $\cc$ w.r.t.\ postexpectation $f$.
    	\item $\awp{\cc}{f} \in \expecs$ is the  \emph{angelic weakest preexpectation} of $\cc$ w.r.t.\ postexpectation $f$.\hfill $\triangle$
    \end{enumerate}
\end{definition}
\noindent
The two transformers differ in the way they interpret nondeterminism: \emph{\textbf{d}emonically} vs.\ \emph{\textbf{a}ngelically}. If $\cc$ is deterministic, then $\dwp{\cc}{f}$ and $\awp{\cc}{f}$ coincide, in which case we often simply write $\wp{\cc}{f}$. Let us briefly go over the individual rules for $\SOMEWPSYMBOL\in \{\DWPSYMBOL,\AWPSYMBOL\}$.

For the effectless program $\SKIP$, $\somewp{\SKIP}{f}$ is just $f$. For assignments $\ASSIGN{x}{E}$, we substitute $x$ in $f$ by the assignment's right-hand side $E$. For sequentially composed programs $\COMPOSE{\cc_1}{\cc_2}$, we first determine the intermediate preexpectation $\somewp{\cc_2}{f}$, which is then plugged into $\SOMEWPSYMBOL\llbracket \cc_1\rrbracket$. The --- possibly nondeterministic --- guarded choice is treated in more detail below. For the probabilistic choice $\PCHOICE{\cc_1}{\pexpr}{\cc_2}$, we determine the preexpectations of the two branches and add them up, weighing each branch according to its probability of being executed. Finally, preexpectations of a loop are given by a least fixpoint which, intuitively, is the limit of all finite loop unrollings. We refer to \cite{DBLP:phd/dnb/Kaminski19} for an in-depth treatment of expectation-based reasoning.

Now, given $\cc \in \pgcl$ and $f \in \expecs$, each $\SOMEWPSYMBOL\in\{\DWPSYMBOL,\AWPSYMBOL\}$ defines an expectation $\somewp{\cc}{f}$, i.e., a map from program states $\sigma$ to the quantity $\somewp{\cc}{f}(\sigma)$. In what follows, we convey some intuition on this quantity.
Assume for the moment that $\cc$ is deterministic, i.e., $\cc$ possibly contains randomization but no nondeterminism.  In the absence of nondeterminism, $\dwp{\cc}{f}$ and $\awp{\cc}{f}$ coincide. Thinking of the postexpectation $f$ as a random variable over $\cc$'s state space, we have
\[
 \wp{\cc}{f}(\sigma) \eeq
\substack{\text{\normalsize\emph{expected value} of $f$ w.r.t.\ the (sub-)distribution of \emph{final} states} \\ \text{\normalsize reached after executing $\cc$ on \emph{initial} state $\sigma$.}}
\]
The distribution of final states is a \emph{sub}-distribution whenever $\cc$ does not terminate almost-surely on initial state $\sigma$, where the missing probability mass is the probability of divergence. 
As outlined above, and analogous to Dijkstra's weakest pre\emph{conditions} for standard programs \cite{DBLP:journals/cacm/Dijkstra75},  $\wp{\cc}{f}$ is obtained by recursively applying the rules from \Cref{fig:wp_table}, i.e., we start with the postexpectation $f$ \emph{at the end of the program} and --- as suggested by the rule for sequential composition --- move \emph{backwards} through $\cc$ until we arrive at the beginning, obtaining $\wp{\cc}{f}$. This is exemplified in \Cref{fig:example_dwp}, where we annotate\footnote{We slightly abuse notation and denote by $x$ the expectation $\lambda \sigma. \sigma(x)$.} the given program for determining $\wp{\cc}{x}$. Due to the backward-moving nature of the transformer, these annotations are best read from bottom to top. Complying with the above explanation, the result $\wp{\cc}{x} = y+3$ now tells us that the \emph{expected final value} of $x$ is given by the \emph{initial value} of $y$ plus $3$. This is intuitive: As $y$ is not initialized before it is read, the expected final value of $x$ depends on the initial value of $y$.
\begin{figure}[t]
	\small 
    \begin{minipage}{0.8\textwidth}
        \begin{align*}
        &\wpannotate{\nicefrac{1}{2} \cdot (y+2) + \nicefrac{1}{2} \cdot (y+4)} \textcolor{gray}{\eeq y+3} 
        \tag*{\textcolor{gray}{this is $\wp{\cc}{x}$}}\\
        &\{   \\
        &\qquad \wpannotate{y+2}  \tag*{\textcolor{gray}{this is $\wp{\ASSIGN{y}{y+2} }{y}$}}\\
        &\qquad \ASSIGN{y}{y+2}   \\
        &\qquad \annotate{y} \tag*{\textcolor{gray}{$y$ is the postexpectation for this branch obtained below}} \\
        &\}[\nicefrac{1}{2}]\{   \\
        &\qquad \wpannotate{y+4}  \tag*{\textcolor{gray}{this is $\wp{\ASSIGN{y}{y+4} }{y}$}}\\
        &\qquad \ASSIGN{y}{y+4} \\
        &\qquad \annotate{y}  \tag*{\textcolor{gray}{$y$ is the postexpectation for this branch obtained below}}\\
        &\}\fatsemi \\
        &\wpannotate{y} \tag*{\textcolor{gray}{this is $\wp{\ASSIGN{x}{y}}{x}$}} \\
        &\ASSIGN{x}{y} \\
        &\annotate{x}
        \tag*{\textcolor{gray}{$x$ is the postexpectation}}
        \end{align*}
    \end{minipage}
    \hspace{8mm}
    \begin{minipage}{0.1\textwidth}
        \begin{tikzpicture}
        \draw[->,gray,dashed] (0,0) -- node[sloped,above] {reading direction} (0,6);
        \end{tikzpicture}
    \end{minipage}
    
\caption{Example program with annotations for determining $\dwp{\cc}{x}$.}
\label{fig:example_dwp}
\end{figure}

Let us now consider a possibly nondeterministic program $\cc$. In this case, the final distribution of states obtained from executing $\cc$ on some initial state $\sigma$ might not be unique: it generally depends on the resolution of the nondeterminism. It does therefore no longer make sense to speak about \emph{the} expected value of $f$ w.r.t.\ the distribution of final states. Instead, we reason about \emph{optimal} values:
\[
\DWPSYMBOL/\awp{\cc}{f}(\sigma) \eeq
\substack{\text{\normalsize\emph{minimal/maximal} expected values of $f$ w.r.t.\ \emph{all} (sub-)distributions} \\ \text{\normalsize of \emph{final} states reached after executing $\cc$ on \emph{initial} state $\sigma$~.}}
\]
To see this, consider the preexpectation of a guarded choice in more detail: We have 
\begin{align*}
  &\dwp{\GC{\guard_1}{\cc_1}{\guard_2}{\cc_2}}{f}(\sigma) \eeq \min \{ \dwp{\cc_i}{f}(\sigma) ~\mid~ \sigma \models \guard_i \} \\
  \text{and}\quad&
  \awp{\GC{\guard_1}{\cc_1}{\guard_2}{\cc_2}}{f}(\sigma) \eeq \max \{ \awp{\cc_i}{f}(\sigma) ~\mid~ \sigma \models \guard_i \}~
\end{align*}
i.e, if both $\guard_1$ and $\guard_2$ evaluate to $\true$ under $\sigma$, then the above quantity is the minimum (resp.\ maximum) of the preexpectations of the two branches $\cc_1$ and $\cc_2$. Consider the program $\cc'$ shown in \Cref{fig:example_dwp_awp_nondet} as an example. It contains $\cc$ from \Cref{fig:example_dwp} as a subprogram. The annotations on the left-hand side determine $\dwp{\cc'}{x}$ --- mapping initial states to the \emph{minimal} expected final value of $x$ ---, and the right-hand side's annotations determine $\awp{\cc'}{x}$ --- mapping initial states to the \emph{maximal} expected final of $x$.  $\cc'$ first nondeterministically assigns either $1$ or $4$ to variable $y$, and then executes $\cc$. The minimal expected final value of $x$ is $4$, whereas the maximal one is $7$. \\

\begin{figure}[t]
	\small
	\begin{center}
	\begin{minipage}{0.45\textwidth}
		\begin{align*}
			&\dwpannotate{4 \emin 7} \textcolor{gray}{\eeq 4} \\
			& \GCFST{\true}{
				\singlelineannotatespace
				\dwpannotate{4}
				\singlelineannotatespace
				\ASSIGN{y}{1}
			   \hspace{0.25cm}
			   \annotate{y+3}
		   } \\
			&\GCSEC{\true}{
				\singlelineannotatespace
				\dwpannotate{7}
				\singlelineannotatespace
				\ASSIGN{y}{4}
				\hspace{0.25cm}
				\annotate{y+3}
			 } \fatsemi \\
			&\dwpannotate{y+3} \\
			& \cc \\
			&\annotate{x}
		\end{align*}
	\end{minipage}
	\hfill
	\begin{minipage}{0.45\textwidth}
		\begin{align*}
		&\awpannotate{4 \emax 7} \textcolor{gray}{\eeq 7} \\
		& \GCFST{\true}{
			\singlelineannotatespace
			\awpannotate{4}
			\singlelineannotatespace
			\ASSIGN{y}{1}
			\singlelineannotatespace
			\annotate{y+3}
		} \\
		&\GCSEC{\true}{
			\singlelineannotatespace
			\awpannotate{7}
			\singlelineannotatespace
			\ASSIGN{y}{4}
			\singlelineannotatespace
			\annotate{y+3}
		} \fatsemi \\
		&\awpannotate{y+3} \\
		& \cc \\
		&\annotate{x}
		\end{align*}
	\end{minipage}
\end{center}
\caption{$\DWPSYMBOL$ and $\AWPSYMBOL$ calculations w.r.t.\ postexpectation $x$. Here $\cc$ is the program from \Cref{fig:example_dwp}. Recall that $\dwp{\cc}{x} = \awp{\cc}{x} = \wp{\cc}{x}$ since $\cc$ is deterministic.}
\label{fig:example_dwp_awp_nondet}
\end{figure}

\subsection{MDP Semantics vs.\ Weakest Preexpectations}
\noindent
There is a tight connection between reachability-reward objectives in MDPs and weakest preexpectations. It is this tight connection which enables us to link our program-level strategy synthesis techniques to the well-known problem of synthesizing strategies in MDPs. More concretely, $\dwp{\cc}{f}(\sigma)$ and $\awp{\cc}{f}(\sigma)$ are the minimal and maximal \emph{expected rewards} in the MDP $\opMDP{\cc}$, respectively, where postexpectation $f$ induces the reward function: upon reaching a terminal configuration $(\Term,\sigma')$, a reward of $f(\sigma')$ is collected. Formally:

\begin{theorem}[\cite{operational_vs_weakest,qsl}]
	\label{thm:wp_vs_operational}
	Let $\cc \in \pgcl$, $f \in \expecs$ a post-expectation.
	Moreover, let $\MR_f \colon \{\Term\} \times \States \to \NonnegRealsInf$ be $\MR_f(\Term, \sigma') = f(\sigma')$. Then, for all $\sigma\in\States$,
	\begin{align*}
		\dwp{\cc}{f}(\sigma) \eeq&   \minexprew{\opMDP{\cc}}{\MR_f}(C,\sigma),\text{ and} \\
		%
		%
		\awp{\cc}{f}(\sigma) \eeq& \maxexprew{\opMDP{\cc}}{\MR_f}(C,\sigma)
		~.
	\end{align*}
   In particular, if $\cc$ is deterministic, then so is $\opMDP{\cc}$, and we have
   \[
      \wp{\cc}{f}(\sigma) \eeq   \exprewmc{\opMDP{\cc}}{\MR_f}(C,\sigma)~.
   \]
\end{theorem}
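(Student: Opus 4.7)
The plan is to prove the statement by structural induction on the program $\cc \in \pgcl$, establishing both equalities simultaneously (since the inductive steps for the $\dwp{\cdot}{\cdot}$ and $\awp{\cdot}{\cdot}$ cases are mirror images). For each construct, I will show that the right-hand side satisfies exactly the recursive equation used to define the corresponding weakest preexpectation in \Cref{fig:wp_table}, invoking standard properties of expected reachability-reward in countable MDPs.

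First I would dispatch the simple syntactic cases. For $\SKIP$, the semantic MDP moves in one $\labtau$-labeled step from $(\SKIP,\sigma)$ to $(\Term,\sigma)$, so both $\minexprew{}{}$ and $\maxexprew{}{}$ collapse to $\MR_f(\Term,\sigma) = f(\sigma)$, matching the $\SOMEWPSYMBOL$-rule. For $\ASSIGN{x}{E}$ the argument is identical, with $\sigma$ replaced by $\sigma\statesubst{x}{E(\sigma)}$, which recovers $\expsubs{f}{x}{E}$. Sequential composition $\COMPOSE{\cc_1}{\cc_2}$ follows from the fact that paths in $\opMDP{\COMPOSE{\cc_1}{\cc_2}}$ factor through configurations of the form $(\cc_2,\sigma')$: applying the induction hypothesis first to $\cc_2$ (with postexpectation $f$) provides the reward function that feeds into the analysis of $\cc_1$; the compositional structure of strategies on the MDP mirrors this exactly, since optimizing $\cc_1$'s portion independently of $\cc_2$'s portion is valid for reachability-reward objectives. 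Probabilistic choice is immediate by linearity of expectation applied to the two one-step $\labtau$-transitions with probabilities $p$ and $1-p$.

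The nondeterministic guarded choice is the first conceptually interesting case. The two outgoing actions $\labalpha,\labbeta$ of $(\GC{\guard_1}{\cc_1}{\guard_2}{\cc_2},\sigma)$ are exactly enabled according to $\sigma \models \guard_i$; by one-step optimality for reachability-reward in countable MDPs, the $\inf$ (resp.\ $\sup$) over strategies factors through a pointwise $\min$ (resp.\ $\max$) over those enabled actions, which combined with the induction hypothesis for $\cc_i$ yields exactly the filtered $\min$/$\max$ of $\gimp{\guard_i}{\somewp{\cc_i}{f}}$ and $[\guard_i]\cdot \somewp{\cc_i}{f}$ used in the definitions of $\DWPSYMBOL$ and $\AWPSYMBOL$ respectively. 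The asymmetry between the quantitative implication and the Iverson-bracket multiplication encodes the correct extremum when the alternative branch is disabled ($\infty$ is neutral for $\min$, $0$ is neutral for $\max$).

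The main obstacle will be the loop case $\WHILEDO{\guard}{\cc'}$. Here both sides are characterized as fixpoints of the same characteristic functional, but one must justify that the operational side equals the \emph{least} fixpoint. The plan is: (i) unroll the MDP semantics of the loop to exhibit, for each $n$, the expected reward accumulated on paths that exit the loop within $n$ iterations; by monotone convergence this supremum over $n$ equals $\minexprew{}{}$ (resp.\ $\maxexprew{}{}$) of the loop MDP. (ii) Show by induction on $n$, using the already-proved non-loop cases applied to the $n$-fold syntactic unrolling $\ITE{\guard}{\COMPOSE{\cc'}{\cdots}}{\SKIP}$, that this bounded quantity equals the $n$-th Kleene iterate of the loop's characteristic functional starting from $\lambda\sigma.0$. (iii) Conclude by Kleene's fixpoint theorem on $(\expecs,\eleq)$, noting $\omega$-continuity of the functional, that the limit is the least fixpoint used in \Cref{fig:wp_table}. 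The demonic direction additionally requires that infima over strategies commute with the limit over unrollings, which follows from the standard fact that in countable MDPs with nonnegative rewards, memoryless deterministic strategies suffice for the minimizing reachability-reward objective (\Cref{thm:mdp_md_good}). The angelic direction is actually easier here because $\sup$ always commutes with monotone limits. The deterministic specialization at the end is a direct corollary, since then $|\MA(\ms)|=1$ throughout $\opMDP{\cc}$ and $\dwp{\cc}{f}=\awp{\cc}{f}=\wp{\cc}{f}$.
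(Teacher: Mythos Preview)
The paper does not actually prove this theorem: it is stated with citations to prior work (\cite{operational_vs_weakest,qsl}) and no proof is given, either in the body or the appendix. So there is no ``paper's own proof'' to compare against.

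Your structural-induction sketch is the standard route and is broadly sound. Two remarks on the loop case, which is the only nontrivial part. First, your pointer to \Cref{thm:mdp_md_good} is misplaced: that theorem concerns \emph{maximizing} strategies; the existence of optimal MD strategies for the \emph{minimizing} reachability-reward objective is the separate fact mentioned just before it (Puterman, Theorem~7.3.6a). Second, the justification you give for the demonic direction---that the $\inf$ over strategies commutes with the monotone limit over unrollings because MD strategies suffice---is not quite the right lever. The cleaner argument is to observe that $\minexprew{\opMDP{\cc}}{\MR_f}$ is itself a fixpoint of the loop's characteristic functional (this is the Bellman equation), whence it dominates the least fixpoint $\dwp{\cc}{f}$; the reverse inequality comes from showing each Kleene iterate lower-bounds the operational value, which is immediate since the $n$-th iterate is the reward collected on paths terminating within $n$ unrollings. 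The same shape works for $\AWPSYMBOL$, and there your observation that $\sup$ commutes with $\sup$ makes the limit interchange genuinely trivial.
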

\noindent
The $\DWPSYMBOL$ and $\AWPSYMBOL$ calculi can thus be understood as a means to reason deductively about reachability-reward objectives of possibly infinite-state MDPs modeled by $\pgcl$ programs.

\section{A Bird's Eye View: Programmatic Strategy Synthesis}
\label{sec:overview}
Before we deal with the fully fledged formalization of our approach, we set the stage in this section by (i) formalizing our problem statement (\Cref{sec:problem_statement}), (ii) giving an informal description of our approach for \emph{loop-free} programs (\Cref{sec:overview_loopfree}), and (iii) demonstrating how these technique generalize to programs \emph{with} loops in \Cref{sec:overview_loops}.
All of this is done in an example-driven manner.

\subsection{Problem Statement}
\label{sec:problem_statement}
Recall the synthesis problem for MDPs $\MDP$ from \Cref{sec:mdp_objectives} of finding determinizations $\MDP'$ which guarantee bounds on the expected rewards of interest. If the state space of $\MDP$ is finite, it is known that the problem can be solved in polynomial time via linear programming~\cite{DBLP:books/wi/Puterman94}. However, this technique is in general not applicable to countably \emph{infinite-state} MDPs, which arise naturally from $\pgcl$ programs. Our goal is to obtain \emph{program-level strategy synthesis techniques} by means of weakest preexpectation reasoning. Towards lifting the synthesis problem to programs, we formalize the notions of \emph{implementations} and \emph{determinizations}.
 \begin{figure}[t]
 	\small
	\centering
	\begin{gather*}
		\frac{\cc_1' \refines \cc_1 \qquad \cc_2' \refines \cc_2}{\COMPOSE{\cc_1'}{\cc_2'} ~\refines~ \COMPOSE{\cc_1}{\cc_2}} 
		\qquad\qquad
		\frac{\guard_1' \entails \guard_1 \qquad \guard_2' \entails \guard_2 \qquad \entails \guard_1' \vee \guard_2' \qquad
			\cc_1' \refines \cc_1  \qquad \cc_2' \refines \cc_2}{\GC{\guard_1'}{\cc_1'}{\guard_2'}{\cc_2'} ~\refines~ \GC{\guard_1}{\cc_1}{\guard_2}{\cc_2}} \\
		\frac{\cc_1' \refines \cc_1 \qquad \cc_2' \refines \cc_2}{\PCHOICE{\cc_1'}{p}{\cc_2'} ~\refines~ \PCHOICE{\cc_1}{p}{\cc_2}}
		\qquad\qquad
		\frac{\cc' \refines \cc}{\WHILEDO{\guard}{\cc'} ~\refines~ \WHILEDO{\guard}{\cc}}
	\end{gather*}
	\caption{Rules defining the implements relation $\refines$. Here $\entails$ denotes \emph{entailment} between predicates, i.e., $\guard \entails \guard'$ if for all states $\sigma$ it holds that $\sigma \models \guard$ implies $\sigma \models \guard'$.}
	\label{fig:refinementrules}
\end{figure}
\begin{definition}
	\label{def:refinement}
	The \emph{implementation} relation ${}\refines{} \subseteq \pgcl \times \pgcl$ is the smallest partial order on $\pgcl$ satisfying the rules given in \Cref{fig:refinementrules}. If $\cc' \refines \cc$, then we say that $\cc'$ \emph{implements} $\cc$. If moreover $\cc'$ is deterministic, then we say that $\cc'$ is a \emph{determinization} of $\cc$.
	\defqed
\end{definition}
\noindent
If $\cc' \refines \cc$, then $\cc'$ and $\cc$ coincide syntactically up to the guards occurring in the guarded choices. The guards in $\cc'$ may be \emph{strengthened} to resolve some of the nondeterministic choices from $\cc$, which is formalized by the premises $\guard_1' \entails \guard_1$ and $\guard_2' \entails \guard_2$ in the rule for guarded choices. The premise $\entails \guard_1' \vee \guard_2'$ ensures that $\cc'$ does not eliminate \emph{all} choices from some guarded choice in $\cc$. 
\begin{example}
	\label{ex:implementations_determinizations}
	Consider the programs $\cc_1$ (left)  $\cc_2$ (middle), and $\cc$ (right).
	
	{\small 
	\begin{minipage}{0.3\textwidth}
		\begin{align*}
			& \GCFST{y \leq z}{\ASSIGN{x}{y}} \\
			&\GCSEC{y > z}{\ASSIGN{x}{z}}\fatsemi \\
			&\PCHOICE{\ASSIGN{x}{0}}{\nicefrac{1}{2}}{\ASSIGN{x}{2\cdot x}}
		\end{align*}
	\end{minipage}
	\hfill
	\begin{minipage}{0.3\textwidth}
		\begin{align*}
			& \GCFST{y \leq z}{\ASSIGN{x}{y}} \\
			&\GCSEC{y \geq z}{\ASSIGN{x}{z}}\fatsemi \\
			&\PCHOICE{\ASSIGN{x}{0}}{\nicefrac{1}{2}}{\ASSIGN{x}{2\cdot x}}
		\end{align*}
	\end{minipage}
	\hfill
	\begin{minipage}{0.3\textwidth}
		\begin{align*}
			& \GCFST{\true}{\ASSIGN{x}{y}} \\
			&\GCSEC{\true}{\ASSIGN{x}{z}}\fatsemi \\
			&\PCHOICE{\ASSIGN{x}{0}}{\nicefrac{1}{2}}{\ASSIGN{x}{2\cdot x}}
		\end{align*}
	\end{minipage}} \\

\noindent
   We have $\cc_1 \refines \cc_2 \refines \cc$. The strengthened guards $y \leq z$ and $y\geq z$ in $\cc_2$ resolve some of the nondeterminism from $\cc$. Program $\cc_2$ is, however, \emph{not} a determinization of $\cc$ since $\cc_2$ behaves nondeterministically whenever $z=y$ holds initially. Program $\cc_1$, on the other hand, \emph{is} a determinization of both $\cc_2$ and $\cc$ since the guards $y \leq z$ and $y > z$ are mutually exclusive.
   \hfill $\triangle$
\end{example}
\noindent
Since $\cc' \refines \cc$ possibly permits less nondeterministic choices than $\cc$, it is an immediate, yet important, characteristic of $\cc'$ that minimal (resp.\ maximal) expected outcomes of $\cc'$ only get larger (resp.\ smaller) when compared to expected outcomes of $\cc$. Formally:
\begin{restatable}{lemma}{refinementProp}
	\label{thm:refinementProp}
	Let $\cc, \cc' \in \pgcl$ and $f \in \expecs$. We have 
	\[
	\cc' \refines \cc
	\qqimplies
	\dwp{\cc}{f} \eleq \dwp{\cc'}{f}
	\qand        
	\awp{\cc}{f} \egeq \awp{\cc'}{f}
	~.
	\]
\end{restatable}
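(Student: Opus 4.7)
The plan is structural induction on $\cc$. Note that any $\cc'$ with $\cc' \refines \cc$ must have the same syntactic shape as $\cc$ (the rules of Figure~\ref{fig:refinementrules} are congruence rules that only allow strengthening guards of guarded choices); so the induction on $\cc$ splits into six cases matching the rules. In each case I will prove both inequalities $\dwp{\cc}{f} \eleq \dwp{\cc'}{f}$ and $\awp{\cc}{f} \egeq \awp{\cc'}{f}$ simultaneously, since they follow the same structure.

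The cases $\SKIP$ and $\ASSIGN{x}{E}$ are immediate, because reflexivity of $\refines$ forces $\cc' = \cc$, so both $\DWPSYMBOL$ and $\AWPSYMBOL$ agree trivially. For sequential composition $\COMPOSE{\cc_1}{\cc_2}$, I would first invoke the induction hypothesis on $\cc_2' \refines \cc_2$ to get $\dwp{\cc_2}{f} \eleq \dwp{\cc_2'}{f}$, then apply monotonicity of $\dwpT{\cc_1}$ in the postexpectation, and finally apply the induction hypothesis on $\cc_1' \refines \cc_1$ with postexpectation $\dwp{\cc_2'}{f}$; chaining the two inequalities yields the claim, and the angelic case is symmetric. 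For probabilistic choice $\PCHOICE{\cc_1}{\pexpr}{\cc_2}$, the claim follows pointwise from the induction hypotheses on $\cc_1', \cc_2'$ together with monotonicity of convex combinations. For the loop $\WHILEDO{\guard}{\ccbody}$, let $\Phi_{\ccbody}(g) = [\neg\guard]\cdot f + [\guard]\cdot \dwp{\ccbody}{g}$ and analogously $\Phi_{\ccbody'}$. The induction hypothesis gives $\dwp{\ccbody}{g} \eleq \dwp{\ccbody'}{g}$ for every $g \in \expecs$, hence $\Phi_{\ccbody} \eleq \Phi_{\ccbody'}$ pointwise on $\expecs$. Monotonicity of both functionals and Park-style fixpoint induction then give $\lfp\Phi_{\ccbody} \eleq \lfp\Phi_{\ccbody'}$, i.e., the desired inequality for $\DWPSYMBOL$; the $\AWPSYMBOL$ case is analogous.

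The main obstacle is the guarded choice case, which is also the only case where $\cc'$ genuinely differs from $\cc$. Fix a state $\sigma$. Validity of $\guard_1' \vee \guard_2'$ ensures at least one of $\sigma \models \guard_1'$, $\sigma \models \guard_2'$ holds, and the entailments $\guard_i' \entails \guard_i$ ensure that whenever $\sigma \models \guard_i'$ we also have $\sigma \models \guard_i$. For $\DWPSYMBOL$, recall that $\gimp{\guard_i}{\dwp{\cc_i}{f}}(\sigma)$ equals $\dwp{\cc_i}{f}(\sigma)$ if $\sigma \models \guard_i$ and $\infty$ otherwise, so the minimum on the LHS ranges only over those $\cc_i$ whose guard is active at $\sigma$. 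Since every guard active in $\cc'$ is also active in $\cc$, the LHS minimum is taken over a superset of branches, and for each active branch $i$ in $\cc'$ the induction hypothesis gives $\dwp{\cc_i}{f}(\sigma) \leq \dwp{\cc_i'}{f}(\sigma)$; taking the min on the RHS (well-defined because at least one branch is active) preserves the inequality, yielding $\dwp{\cc}{f}(\sigma) \leq \dwp{\cc'}{f}(\sigma)$. For $\AWPSYMBOL$, the analogous argument is even cleaner using Iverson brackets: if $\sigma \not\models \guard_i'$ then $[\guard_i'](\sigma)\cdot \awp{\cc_i'}{f}(\sigma) = 0 \leq [\guard_i](\sigma)\cdot \awp{\cc_i}{f}(\sigma)$ (using the convention $0 \cdot \infty = 0$), while if $\sigma \models \guard_i'$ then $[\guard_i'](\sigma) = [\guard_i](\sigma) = 1$ and the induction hypothesis gives $\awp{\cc_i'}{f}(\sigma) \leq \awp{\cc_i}{f}(\sigma)$; taking pointwise max preserves the inequality, which is the required $\awp{\cc}{f} \egeq \awp{\cc'}{f}$.
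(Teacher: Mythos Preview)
Your proposal is correct and follows essentially the same approach as the paper: structural induction on $\cc$, with all cases except guarded choice handled by the induction hypothesis plus monotonicity, and the guarded choice case handled by a pointwise argument exploiting $\guard_i' \entails \guard_i$ and $\entails \guard_1' \vee \guard_2'$. The paper's guarded choice argument is organized as a case split on which of $\guard_1,\guard_2$ holds at $\sigma$, whereas you argue directly via the subset-of-active-branches observation; these are equivalent.
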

\begin{proof}
	See \iftoggle{arxiv}{\Cref{proof:refinementProp}}{\cite[Appendix A]{arxiv}}.
\end{proof}
\noindent
With the notions of determinizations at hand, we formalize our problem statement:\medskip

\noindent
\fbox{\parbox{\dimexpr\linewidth-2\fboxsep-2\fboxrule}{
    \begin{center}
    	\emph{Given $\cc \in \pgcl$, postexpectation $f \in \expecs$, ${}\someexprel{} \in \{\eleq,\egeq\}$, and a threshold $g \in \expecs$,} \\
    	\emph{if it exists, find a \emph{determinization} $\cc'$ of $\cc$ with $\wp{\cc'}{f} \someexprel g$.}
    \end{center}
}}%
\medskip

\noindent
Due to the tight connection between weakest preexpectations and reachability-rewards in MDPs (\Cref{thm:wp_vs_operational}), the above synthesis problem can be understood as a synthesis problem for \emph{MDPs}: Each program $\cc$ induces a countable MDP $\opMDP{\cc}$. If $\cc'$ is a determinization of $\cc$, then $\opMDP{\cc'}$ is a determinization of $\opMDP{\cc}$. Hence, from an MDP perspective, our problem statement reads:\medskip

\noindent
\fbox{\parbox{\dimexpr\linewidth-2\fboxsep-2\fboxrule}{
    \begin{center}
        \emph{Given $\cc \in \pgcl$, postexpectation $f\in\expecs$, ${}\somenumrel{} \in \{\leq,\geq\}$, and a threshold $g \in \expecs$,} \\
        \emph{if it exists, find $\cc'$ such that $\opMDP{\cc'}$ is a determinization of $\opMDP{\cc}$ \\ and for all $\sigma\in\States \colon \exprewmc{\opMDP{\cc'}}{\MR_f}(\cc',\sigma) \somenumrel g(\sigma)$.}
    \end{center}
}}%
\medskip

\subsection{First Step: Optimal Determinizations for Loop-Free Programs}
\label{sec:overview_loopfree}
\newcommand{\ccmin}{\cc_{\textnormal{min}}}
\newcommand{\ccmax}{\cc_{\textnormal{max}}}

Our first key insight is that, for \emph{loop-free} programs $\cc$, we can compute\footnote{Assuming that expectations are represented syntactically in some sufficiently expressive formal language, e.g., the one from~\cite{DBLP:journals/pacmpl/BatzKKM21} extended by $\sqcap$- and $\sqcup$ operators.} --- in a purely syntactic manner --- \emph{optimal} determinizations of $\cc$. Put more formally, given loop-free $\cc$ and postexpectation $f$, there are effectively constructible determinizations $\ccmin$ and $\ccmax$ of $\cc$ with 
\[
   \dwp{\cc}{f} \eeq \wp{\ccmin}{f} \qqand \awp{\cc}{f} \eeq \wp{\ccmax}{f}~.
\]
We exemplify the construction of $\ccmin$. The construction for $\ccmax$ is dual.
\begin{example}
\label{ex:loopFreeTrans}
Reconsider the nondeterministic program $\cc$ from \Cref{ex:implementations_determinizations} and fix the postexpectation $f \coloneqq x$. Below we give annotations for determining $\dwp{\cc}{x}$ --- the expectation which maps every initial state to the minimal expected final value of $x$:
   {\small \begin{align*}
   	&\dwpannotate{y \emin z} \\
   	& \GCFST{\true}{
   		\singlelineannotatespace\dwpalertannotate{y} \singlelineannotatespace
   		\ASSIGN{x}{y}
   	\singlelineannotatespace\annotate{x}
   } \\
   	&\GCSEC{\true}{\singlelineannotatespace\dwpalertannotate{z} \singlelineannotatespace
   		\ASSIGN{x}{z} 
   		\singlelineannotatespace\annotate{x}}\fatsemi \\
   	&\dwpannotate{\nicefrac 1 2 \cdot 0 + \nicefrac 1 2 \cdot 2 \cdot x \gray{\eeq x} } \\
   	&\PCHOICE{
        \singlelineannotatespace\dwpannotate{0} \singlelineannotatespace
        \ASSIGN{x}{0}
        \singlelineannotatespace\annotate{x}
    }{\nicefrac{1}{2}}{
        \singlelineannotatespace\dwpannotate{2 \cdot x} \singlelineannotatespace
        \ASSIGN{x}{2 \cdot x}
        \singlelineannotatespace\annotate{x}
    } \\
   	&\annotate{x}
   \end{align*}}
Since $\cc$ is loop-free, these annotations are obtained in a syntactic manner by recursively applying the rules given in \Cref{fig:wp_table}. The topmost annotation tells us that $\dwp{\cc}{x} = y \emin z$, i.e., the minimal expected \emph{final} value of $x$ is the minimum of the \emph{initial} values of $y$ and $z$. Towards constructing $\ccmin$, consider the highlighted intermediate preexpectations $\alertannocolor{y}$ and $\alertannocolor{z}$. These annotations tell us that the expected final value of $x$ will be $\alertannocolor{y}$ \emph{if $\cc$ executes the f\underline{irst} branch}, and that it will be $\alertannocolor{z}$ \emph{if $\cc$ executes the \underline{second} branch}. Hence, we can readily read off strengthenings of the guards to resolve the nondeterminism in an optimal way: If $\alertannocolor{y} < \alertannocolor{z}$ holds, the first branch should be taken. Conversely, if $\alertannocolor{y} > \alertannocolor{z}$ holds, the second branch should be taken. In case $\alertannocolor{y}=\alertannocolor{z}$ holds, both choices are optimal. We can thus construct the following implementation $\cc'$ of $\cc$:
{\small\begin{align*}
	\GC{y \leq z}{
		\ASSIGN{x}{y}
	}
	{y \geq z}{
		\ASSIGN{x}{z} 
	}\fatsemi
	\PCHOICE{\ASSIGN{x}{0}}{\nicefrac{1}{2}}{\ASSIGN{x}{2\cdot x}}
\end{align*}}%
Program $\cc'$ is still nondeterministic if $y=z$ holds initially. This nondeterminism can, however, be resolved arbitrarily in the sense that \emph{any} determinization of $\cc'$ will be optimal. We call $\cc'$ an \emph{optimal permissive determinization} of $\cc$ w.r.t. postexpectation $x$. $\cc'$ is now easily determinized by, e.g., turning one of the inequalities, say $y \leq z$, into a strict one, obtaining (one choice for) $\ccmin$:
{\small\begin{align*}
	\GC{y ~\red{<}~ z}{
		\ASSIGN{x}{y}
	} 
	{y \geq z}{
		\ASSIGN{x}{z} 
	}\fatsemi
	\PCHOICE{\ASSIGN x 0}{\nicefrac{1}{2}}{\ASSIGN{x}{2\cdot x}}
\end{align*}}%
The construction for $\AWPSYMBOL$ and $\ccmax$ is dual by flipping the inequalities. \hfill $\triangle$
\end{example}
\noindent
We formalize our construction of such optimal (permissive) determinizations for arbitrary \emph{loop-free} programs in \Cref{sec:trans}. Now reconsider our problem statement from \Cref{sec:problem_statement}:
\begin{center}
    \emph{Given $\cc \in \pgcl$, postexpectation $f$, ${}\someexprel{} \in \{\eleq,\egeq\}$, and a threshold $g \in \expecs$,} \\
    \emph{if it exists, find a \emph{determinization} $\cc'$ of $\cc$ with $\wp{\cc'}{f} \someexprel g$.}
\end{center}
In \Cref{sec:trans} further below, we show that the solution for loop-free $\cc$ is as follows:
\begin{itemize}
	\item  If $\someexprel$ is $\eleq$, then $\cc'$ exists if and only if $\dwp{\cc}{f} \eleq g$, in which case $\cc'$ is given by $\ccmin$.
\item If $\someexprel$ is $\egeq$, then $\cc'$ exists if and only if $\awp{\cc}{f} \egeq g$, in which case $\cc'$ is given by $\ccmax$.
\end{itemize}
That is, $\cc'$ exists iff the minimal (resp.\ maximal) expected final value of $f$ is upper (resp.\ lower) bounded by $g(\sigma)$ for every initial state $\sigma$. Moreover, $\cc'$ can be constructed as exemplified above.

\begin{remark}
	Our results do not imply decidability of the existence of the sought-after determinizations. Depending on the arithmetic necessary for expressing $f$, $\dwp{\cc}{f}$, $\awp{\cc}{f}$, or $g$, quantitative entailments of the form $\dwp{\cc}{f} \eleq g$ or $\awp{\cc}{f} \egeq g$ are often undecidable. \hfill $\triangle$
\end{remark}
\subsection{Second Step: From Quantitative Loop Invariants to Determinizations of Loops}
\label{sec:overview_loops}
Naturally, constructing determinizations of loops is more involved. Reasoning about minimal and maximal expected outcomes of loops requires reasoning about least fixpoints (cf.\ \Cref{fig:wp_table}), which are uncomputable in general~\cite{DBLP:journals/acta/KaminskiKM19}. How can we nonetheless determinize loops as asked for by our problem statement? Our primary observation is that
\begin{center}
 \emph{quantitative loop invariants yield determinizations of nondeterministic probabilistic loops}~.
 \end{center}
These quantitative loop invariants are generally hard to find, let alone algorithmically. Our key insight here is that once we have a quantitative loop invariant at hand, we can use it to find determinizations of loops. Our approach thus applies to any technique which verifies $\pgcl$ programs by means of quantitative loop invariant-based reasoning satisfying the assumptions formalized in \Cref{sec:trans}.
In what follows, we first introduce quantitative loop invariants, and then present an example of our construction for obtaining determinizations guided by these invariants.

\subsubsection{Quantitative Loop Invariants}
\label{sec:overview_invariants}
Analogous to classical Floyd-Hoare logic for standard programs, quantitative loop invariants establish bounds on preexpectations of a loop by reasoning inductively about \emph{one} arbitrary, but fixed, loop iteration. We introduce the following notions \cite{DBLP:phd/dnb/Kaminski19}:
\begin{definition}
	\label{def:invariants}
	Let $\cc = \WHILEDO{\guard}{\ccbody}$, $\inv,f \in \expecs$, and $\SOMEWPSYMBOL \in\{\DWPSYMBOL,\AWPSYMBOL\}$.
	\begin{enumerate}
		\item If $\iverson{\guard}\cdot \somewp{\ccbody}{\inv} + \iverson{\neg\guard}\cdot f \eleq \inv$, then we call $\inv$ a $\SOMEWPSYMBOL$-\emph{super}invariant of $\cc$ (w.r.t.\ $f$).
		\item If $\iverson{\guard}\cdot \somewp{\ccbody}{\inv} + \iverson{\neg\guard}\cdot f \egeq \inv$, then we call $\inv$ a $\SOMEWPSYMBOL$-\emph{sub}invariant of $\cc$ (w.r.t.\ $f$).\hfill $\triangle$
	\end{enumerate}
\end{definition}
\noindent
Hence, to determine whether $\inv$ is a super/sub-invariant of $\cc$, it suffices to determine $\somewp{\ccbody}{\inv}$ --- the preexpectation of the loop \emph{body} --- and to check whether the respective inequality holds. By \emph{Park induction}, every $\SOMEWPSYMBOL$-\emph{super}invariant $\inv$ of $\cc$ w.r.t.\ $f$ upper-bounds $\somewp{\WHILEDO{\guard}{\ccbody}}{f}$. 
\begin{theorem}[\cite{park1969fixpoint}]
	\label{thm:dwp_park}
	Let $\cc = \WHILEDO{\guard}{\ccbody}$, let $\inv,f \in \expecs$, and $\SOMEWPSYMBOL \in\{\DWPSYMBOL,\AWPSYMBOL\}$. If $\inv$ is a $\SOMEWPSYMBOL$-superinvariant of $\cc$ w.r.t.\ $f$, then $\somewp{\cc}{f} \eleq \inv$.
\end{theorem}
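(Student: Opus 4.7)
The plan is to recognize this statement as a direct instance of the Knaster–Tarski / Park fixpoint principle on the complete lattice $(\expecs, \eleq)$, applied to the characteristic functional of the loop. Let
\[
  \Phi \colon \expecs \to \expecs, \qquad \Phi(g) \;\defeq\; [\neg\guard]\cdot f \,+\, [\guard]\cdot \somewp{\ccbody}{g}~.
\]
By the definition of $\somewp{\cdot}{\cdot}$ on loops in \Cref{fig:wp_table}, we have $\somewp{\cc}{f} = \lfp \Phi$. The hypothesis that $\inv$ is a $\SOMEWPSYMBOL$-superinvariant of $\cc$ w.r.t.\ $f$ is literally $\Phi(\inv) \eleq \inv$, i.e., $\inv$ is a \emph{prefixed point} of $\Phi$. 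Park induction then yields $\lfp \Phi \eleq \inv$, which is the desired conclusion.

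The one nontrivial obligation is to justify the use of Park induction, which requires $\Phi$ to be monotone on $(\expecs, \eleq)$. Monotonicity of $\Phi$ reduces to monotonicity of $g \mapsto \somewp{\ccbody}{g}$, since the outer operations $[\guard]\cdot (\cdot)$ and $[\neg\guard]\cdot f + (\cdot)$ are evidently monotone. I would therefore insert a preparatory lemma establishing: \emph{for every $\cc'\in\pgcl$ and $\SOMEWPSYMBOL \in \{\DWPSYMBOL, \AWPSYMBOL\}$, the map $g \mapsto \somewp{\cc'}{g}$ is monotone}. This is a routine structural induction on $\cc'$: the atomic cases $\SKIP$ and $\ASSIGN{x}{E}$ are immediate, sequential composition and probabilistic choice follow by composing monotone maps, guarded choice uses monotonicity of $\emin$ and $\emax$ together with the induction hypotheses, and for a nested loop one appeals to the fact that least fixpoints of monotone parameterized families vary monotonically in the parameter. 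No step here is genuinely hard; it is purely bookkeeping through the clauses of \Cref{fig:wp_table}.

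With monotonicity of $\Phi$ in hand, the main argument is a one-line invocation of Park's theorem: any prefixed point of a monotone map on a complete lattice bounds the least fixpoint from above. Concretely, $\Phi(\inv) \eleq \inv$ gives $\lfp \Phi \eleq \inv$, hence $\somewp{\cc}{f} \eleq \inv$. The statement covers both $\SOMEWPSYMBOL = \DWPSYMBOL$ and $\SOMEWPSYMBOL = \AWPSYMBOL$ uniformly because the argument only uses the common shape of the fixpoint definition and monotonicity, not the specific demonic/angelic treatment of nondeterminism.

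I expect the main (and only) real obstacle to be keeping the monotonicity lemma clean — in particular, the guarded-choice and loop cases, where one must take a little care that pointwise infima/suprema of monotone families remain monotone, and that the inductive hypothesis propagates through the inner $\lfp$ for nested loops. Everything else is a direct appeal to Park's theorem, making this essentially a one-paragraph proof once the monotonicity scaffolding is in place.
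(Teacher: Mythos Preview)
Your proposal is correct and is exactly the standard Park induction argument the paper has in mind: the paper does not spell out a proof at all but simply attributes the result to \cite{park1969fixpoint}, and the monotonicity of $\exptransT{\SOMEWPSYMBOL}{\cc}$ you need is recorded separately as a healthiness property (cf.\ \Cref{thm:healthiness}.\ref{thm:healthiness_monotonic}).
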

\begin{example}[Upper bounds from superinvariants]
	\label{ex:dwp_park}
	Consider the loop
	{\small \[
	\cc \quad {}={}\quad
	 \WHILEDO{c \neq 0}{
	   \underbrace{\PCHOICE{\ASSIGN{c}{0}}{\nicefrac{1}{2}}{\GC{\true}{\ASSIGN{x}{x^2}}{\true}{\ASSIGN{x}{x+1}}}}_{{}\eqqcolon{} \ccbody}
     }~.
	\]}%
	On every iteration, $\cc$ flips a fair coin and, depending on the outcome, either terminates by setting $c$ to $0$, or continues iterating after nondeterministically squaring $x$ or incrementing $x$ by one.
	
    Now suppose we wish to upper-bound $\dwp{\cc}{x}$ --- the expectation which maps every initial state to the minimal expected final value of $x$ --- by $\iverson{c \neq 0} \cdot (x+1) + \iverson{c = 0} \cdot x \eqqcolon \inv$. In words, if initially $c \neq 0$ holds, then the minimal expected final value is upper-bounded by the initial value of $x$ plus $1$. Conversely, if initially $c = 0$ holds, then $x$ is not modified at all. We employ \Cref{thm:dwp_park} to prove that $\dwp{\cc}{x} \eleq \inv$ indeed holds\footnote{This can be proven automatically using the deductive verifier \textsc{Caesar} \cite{heyvl}, see \iftoggle{arxiv}{\Cref{app:heyvl}}{\cite[Appendix D]{arxiv}}.} by verifying that $\inv$ is a $\DWPSYMBOL$-superinvariant of $\cc$ w.r.t.\ $x$:
	\begin{align*}
		 \iverson{c\neq 0} \cdot \dwp{\ccbody}{\inv} + \iverson{c=0}\cdot x 
		\eeq & \iverson{c \neq 0}\cdot \left( (x+1)  \emin \nicefrac 1 2 \cdot  (x^2+x+1) \right) + \iverson{c=0}\cdot x \\
		\eeleq & \iverson{c \neq 0}\cdot (x+1) + \iverson{c=0}\cdot x \eeq \inv~.
		\tag*{$\triangle$}
    \end{align*}
\end{example}
\noindent
Obtaining \emph{lower} bounds from \emph{sub}invariants requires additional conditions, see \Cref{sec:overview_lower_bounds}. 
\subsubsection{From Loop Invariants to Loop Determinization}
\label{sec:overview_inv_to_det}
How can we use the concept of superinvariants for finding determinizations as demanded by our problem statement? Let us, for the moment, focus on the conceptually simpler case of aiming for determinizations which guarantee \emph{upper} bounds on expected outcomes, i.e., let $\cc$ be a loop and $f,g\in\expecs$ and suppose we want to find a determinization $\cc'$ of $\cc$ with $\wp{\cc'}{f} \eleq g$. We proceed as follows: 
\begin{enumerate}
	\item Find a $\DWPSYMBOL$-superinvariant $\inv$ of $\cc$ w.r.t.\ $f$ such that $\inv$ satisfies $\inv \eleq g$, and
	\item \emph{compute} --- guided by $\inv$ --- a determinization $\cc'$ of $\cc$ such that $\inv$ \emph{as well} is a $\DWPSYMBOL$-superinvariant of $\cc'$ w.r.t.\ $f$, i.e., determinizing $\cc$ to $\cc'$ \emph{preserves superinvariance} of $\inv$ w.r.t. $f$. 
\end{enumerate}
Step (1) is undecidable in general.\tw{"Step (1) cannot be done algorithmically in general"} We assume that $\inv$ is provided by some external means. Step (2), on the other hand, is as \emph{syntactic} as our construction for loop-free programs from \Cref{sec:overview_loopfree}. Steps (1) and (2) together solve the given problem instance because
\begin{center}
	$\wp{\cc'}{f} 
	\qquad\substack{\textnormal{Thm.\ \ref{thm:dwp_park}} \\ {\text{\normalsize $\eleq$}}} \qquad
	\inv
	\qquad\substack{\textnormal{assumption} \\ {\text{\normalsize $\eleq$}}} \qquad g~.$
\end{center}
\begin{example}
	Reconsider the loop $\cc$ from \Cref{ex:dwp_park} with postexpectation $x$. Now fix expectation $x+1 \eqqcolon g$ and suppose we wish to find a determinization $\cc'$ of $\cc$ with $\wp{\cc'}{x} \eleq g$, i.e., the expected final value of $x$ shall be no greater than one plus the initial value of $x$ for all initial states. Recall from \Cref{ex:dwp_park} that expectation $\inv =  \iverson{c \neq 0} \cdot (x+1) + \iverson{c = 0} \cdot x$ is a $\DWPSYMBOL$-superinvariant of $\cc$ w.r.t.\ $x$. Moreover, we have $\inv \eleq g$. Hence, if we can determinize $\cc$ to $\cc'$ in such a way that $\inv$ remains a superinvariant of $\cc'$ w.r.t.\ $x$, we are done.
	
	For that, recall from \Cref{sec:overview_loopfree} that we can determinize \emph{loop-free} programs in an optimal permissive manner. In particular, for the loop-free body $\ccbody$ of $\cc$ we can construct $\ccbody'$ such that 
	\[\dwp{\ccbody}{\inv} \quad{}={}\quad \wp{\ccbody'}{\inv}~. \]
	Consequently, $\cc' = \WHILEDO{c \neq 0}{\ccbody'}$ will be a determinization of $\cc$ satisfying
	\begin{align*}
		&\iverson{c \neq 0} \cdot \dwp{\ccbody}{\inv} + \iverson{c=0} \cdot x
		~ \substack{\text{by construction} \\ {\text{\normalsize $=$}}} ~ 
		\iverson{c \neq 0} \cdot \wp{\ccbody'}{\inv} + \iverson{c=0} \cdot x
	    ~ \substack{\text{see Ex.\ \ref{ex:dwp_park}} \\ {\text{\normalsize $\eleq$}}}  ~ 
	    \inv
	\end{align*}
     which implies 
     \begin{align*}
     	\wp{\cc'}{x} \quad \substack{\text{Thm.\ \ref{thm:dwp_park}} \\ {\text{\normalsize $\eleq$}}} \quad \inv  \qquad \text{and therefore, by transitivity, also} \qquad \wp{\cc'}{x}  \quad
     	\eleq \quad g~.
     \end{align*}
    To construct $\ccbody'$, we proceed as in \Cref{sec:overview_loopfree}: Annotate $\ccbody$ for determining $\dwp{\ccbody}{\inv}$ and derive the strengthenings for the guards from the highlighted intermediate preexpectations:
    {\small\begin{align*}
    	  &\dwpannotate{ \ldots \text{ (not relevant)}}\\
    	  &\{ \singlelineannotatespace \dwpannotate{\ldots \text{ (not relevant)}}  \singlelineannotatespace
    	   \ASSIGN{c}{0} 
    	  \singlelineannotatespace \annotate{\inv}  \} \\
    	 &\mathrel{[\nicefrac{1}{2}]}{} \\
    	  &\{\qquad  \dwpannotate{\ldots \text{ (not relevant)}} \\
    	 &\qquad \GCFST{\true}{
    	 	\singlelineannotatespace
    	 	\dwpalertannotate{\iverson{c \neq 0} \cdot (x\cdot x+1) + \iverson{c = 0} \cdot x\cdot x}
    	 	\singlelineannotatespace
    	 	\ASSIGN{x}{x \cdot x} 
    	 	\singlelineannotatespace  \annotate{\inv}} \\
    	 &\qquad \GCSEC{\true}{
    	 	\singlelineannotatespace
    	 	\dwpalertannotate{\iverson{c \neq 0} \cdot ( x+2) + \iverson{c = 0} \cdot (x+1)}
    	 	\singlelineannotatespace
    	 	\ASSIGN{x}{x+1} 
    	 	\singlelineannotatespace \annotate{\inv}} \\
     	 &  \} \\
    	 & \annotate{\inv \coloneqq \iverson{c \neq 0} \cdot (x+1) + \iverson{c = 0} \cdot x} 
    \end{align*}}%
   Thus, whenever $\alertannocolor{\expsubs{\inv}{x}{x \cdot x}}(\sigma) < \alertannocolor{\expsubs{\inv}{x}{x +1}}(\sigma)$ holds in the current variable valuation $\pstate$, the first branch must be taken. Conversely, if $\alertannocolor{\expsubs{\inv}{x}{x \cdot x}}(\sigma) > \alertannocolor{\expsubs{\inv}{x}{x +1}}(\sigma)$ holds, the second branch must be taken. Finally, in case $\alertannocolor{\expsubs{\inv}{x}{x \cdot x}}(\sigma) = \alertannocolor{\expsubs{\inv}{x}{x +1}}(\sigma)$ holds, we may freely choose between the two branches. Now, for fixed state $\sigma$ and recalling that $\sigma(x) \geq 0$, we have 
   \begin{align*}
   	    \alertannocolor{\expsubs{\inv}{x}{x \cdot x}}(\sigma) \leq \alertannocolor{\expsubs{\inv}{x}{x +1}}(\sigma) 
   	   \qquad \text{iff}\qquad  \sigma(x)^2 \leq \sigma(x) + 1 
   	   \qquad \text{iff} \qquad  \sigma(x) \leq \nicefrac{1}{2} \cdot (1+\sqrt{5})~,
   \end{align*}
   which yields the \emph{correct-by-construction} determinization $\cc'$ of $\cc$ satisfying $\wp{\cc'}{x} \eleq g = x+1$:
   {\small \begin{align*}
   \WHILE{c \neq 0}  &  \ASSIGN{c}{0} \\
   	&\mathrel{[\nicefrac{1}{2}]}{} \\
   	&{\GC{x \leq \nicefrac{1}{2} \cdot (1+\sqrt{5})}{\ASSIGN{x}{x\cdot x}}{x > \nicefrac{1}{2} \cdot (1+\sqrt{5})}{\ASSIGN{x}{x+1}}}
   \} ~.
   \tag*{$\triangle$}
   \end{align*}}
\end{example}
\subsubsection{Subinvariants and Lower Bounds}
\label{sec:overview_lower_bounds}
We have so far disregarded the case where we want to find determinizations which guarantee \emph{lower} bounds on expected outcomes, i.e., for loop $\cc$ and $f,g,\in\expecs$, find a determinization $\cc'$ of $\cc$ with $g \eleq \wp{\cc'}{f}$. Compared to upper-bound reasoning, lower-bounding preexpectations of loops is more involved. $\SOMEWPSYMBOL$-\emph{sub}invariants $\inv$ of $\cc$ do \emph{not} necessarily lower-bound $\somewp{\cc}{f}$. This is not surprising as this is already the case for standard weakest pre\emph{conditions} for non-probabilistic programs, which are subsumed by weakest pre\emph{expectations}. In the realm of weakest preconditions, expectations are replaced by predicates and $\eleq$ corresponds to $\entails$, i.e., entailment between predicates. Proving $\varphi \entails \wp{\cc}{\psi}$ corresponds to establishing a \emph{total} correctness property of $\cc$, which generally requires additional \emph{side conditions} such as termination.
However, if we assume that we have such side conditions at hand so that
\begin{align*}
    \underbrace{\text{$\inv$ is $\SOMEWPSYMBOL$-subinvariant of $\cc$ w.r.t.\ $f$ $~\wedge~$ \emph{side conditions}}}_{\text{denote the validity of these verification conditions by $\sfsymbol{vc}\llbracket{\cc}\rrbracket(f) \in \{\true,\false\}$}}
    \qquad\text{implies}\qquad \somewp{\cc}{f} \eegeq \inv~,
\end{align*}
we may proceed in a way analogous to the above procedure for upper bounds:
\begin{enumerate}
	\item Find an $\AWPSYMBOL$-\emph{sub}invariant $\inv$ such that both $\sfsymbol{vc}\llbracket{\cc}\rrbracket(f)$ and $\inv \egeq g$ hold, and
	\item \emph{compute} --- guided by $\inv$ --- a determinization $\cc'$ of $\cc$ such that $\sfsymbol{vc}\llbracket{\cc'}\rrbracket(f)$ is satisfied \emph{as well}, i.e., determinizing $\cc$ into $\cc'$ \emph{preserves validity of the verification conditions}.
\end{enumerate}
These \emph{side conditions} come in different flavors\footnote{For instance, if both $\inv$ and $f$ are bounded by $1$, it suffices to show that $\cc$ is \emph{demonically almost-surely terminating} (cf.\ \Cref{thm:subinvcond}). If $\inv$ or $f$ are unbounded, more restrictions are needed \cite{DBLP:journals/pacmpl/HarkKGK20}.} and their restrictiveness typically depends on the expressive power of $\inv$ and the postexpectation $f$. Moreover, finding such side conditions is an active field of research~\cite{DBLP:journals/pacmpl/HarkKGK20}. This motivates our general framework presented in \Cref{sec:trans}, where we abstract from the specific side conditions under consideration. This framework may then be instantiated with verification conditions tailored to the specific problem instances under consideration. Finally, we remark that our framework also handles determinizations of \emph{arbitrary} $\pgcl$ programs --- possibly containing nested- and sequenced loops.
\subsection{On Completeness and Inherent Incompleteness of our Approach}
\label{sec:overview_incompleteness}
We show in \Cref{sec:trans_incompleteness} that our approach for upper bounds as outlined in \Cref{sec:overview_inv_to_det} is \emph{complete}:
For all $\pgcl$ programs $\cc$ and all postexpectations $f$, our framework can (in theory) produce a determinization $\cc' \refines \cc$ such that $\wp{\cc'}{f} = \dwp{\cc}{f}$.
In words, we can find a determinization that actually realizes the \emph{minimal} preexpectation.
This matches a known result about countably infinite MDP, namely that MD strategies for minimizing expected rewards independent of the initial state always exist (cf.\ \Cref{sec:impossibilityMDP}).
The situation is fundamentally different for the maximizing expectation transformer $\AWPSYMBOL$:
For some $\cc$ and $f$ there does \emph{not} exist a deterministic $\cc' \refines \cc$ such that $\wp{\cc'}{f} = \awp{\cc}{f}$.
A simple counterexample is the (non-probabilistic) loop
\[
    \cc \qeq \WHILEDO{c \neq 0}{\GC{\true}{\ASSIGN{n}{n+1}}{\true}{\ASSIGN c 0}}
\]
with postexpectation $f = \nicefrac{n}{n+1}$.
Assuming that $n \in \Nats$, this program realizes the MDP in \Cref{fig:ornsteinExample} (black states \& blue reward function).
It is clear that $\awp{\cc}{f} = [c \neq 0] \cdot 1 + [c = 0] \cdot \nicefrac{n}{n+1}$, but no determinization $\cc'$ of $\cc$ can actually terminate in a state where $n = 1$.
However, recall that under the mild assumptions of \Cref{thm:mdp_md_good}, existence of ($\epsilon$-)optimal determinizations is guaranteed.

\section{Reasoning with Verification Conditions}
\label{sec:vc}
\begin{figure}[t]
	\centering
	\begin{tabular}{l l}
		\toprule
		$\cc$ & $\vc{\cc}{f}$ \\
		\midrule
		$\SKIP$ & $\true$ \\[1ex]
		$\ASSIGN{x}{E}$ & $\true$\\[1ex]
		$\COMPOSE{\cc_1}{\cc_2}$ & $\vc{\cc_1}{\exptrans{\SOMEWPRESYMBOL}{\cc_2}{f}} \land \vc{\cc_2}{f}$ \\[1ex]
		$\GC{\guard_1}{\cc_1}{\guard_2}{\cc_2}$ & $\vc{\cc_1}{f} \land \vc{\cc_2}{f}$ \\[1ex]
		$\PCHOICE{\cc_1}{p}{\cc_2}$ & $\vc{\cc_1}{f} \land \vc{\cc_2}{f}$ \\[1ex]
		$\WHILEDOINV{\guard}{\cc'}{I}$ & $\vccond(\cc,f) \land \vc{\cc'}{I}$ \\
		\bottomrule
	\end{tabular}
	\caption{Inductive definition of the verification condition of $\cc$ w.r.t.\ $f$, $\vccond$, and $\SOMEWPSYMBOL$.}
	\label{fig:vc}
\end{figure}
Towards our framework for determinizing $\pgcl$ programs, this section introduces a simple verification condition generator adapted from \cite{DBLP:journals/scp/NavarroO22}. A distinguishing aspect of our formulation is that --- following the motivation from the previous section --- our verification condition generator is parameterized by the invariant-based proof rule for reasoning about loops. \\

\noindent
\emph{Auxiliary Expectation Transformers.} To simplify notation for reasoning about (possibly nested- or sequenced) loops, recall from \Cref{sec:syntax} that all loops occurring in a $\pgcl$ program $\cc$ are annotated with a quantitative loop invariant $\inv \in \expecs$, i.e., all loops in $C$ are of the form
%
   $\WHILEDOINV{\guard}{\cc'}{\inv}$.
%
%
We define for each $\SOMEWPSYMBOL\in\{\DWPSYMBOL,\AWPSYMBOL\}$ (and $\SOMEWPSYMBOL = \WPSYMBOL$ in case we deal with deterministic programs) an auxiliary expectation transformer $\exptransT{\SOMEWPRESYMBOL}{\cc}$. The inductive definition of $\somewpre{\cc}{f}$ is completely analogous to the definition of $\somewp{\cc}{f}$, \emph{except} for loops, for which we define
\[
   \somewpre{\WHILEDOINV{\guard}{\cc'}{\inv}}{f} ~{}\coloneqq{}~\inv~.
\]
In particular, for loop-free $\cc$, $\exptransT{\SOMEWPRESYMBOL}{\cc}$ and $\exptransT{\SOMEWPSYMBOL}{\cc}$ coincide.
As is standard in verification condition-based program verification \cite{DBLP:conf/lpar/Leino10}, $\exptransT{\SOMEWPRESYMBOL}{\cc}$ replaces the (generally uncomputable) least fixpoint by the (externally provided) annotated invariant so that determining $\somewpre{\cc}{f}$ reduces to syntactic reasoning. The idea is then to define suitable \emph{verification conditions} for $\cc$ and $f$ in such a way that the validity of these conditions implies that $\somewpre{\cc}{f}$ upper- or lower-bounds $\somewp{\cc}{f}$. \\

\noindent
\emph{The Parametric Verification Condition Generator.} In order to parameterize our verification condition generator by the invariant-based proof rule employed for approximating expected outcomes of loops, we consider objects $\vccond$ of type
%
    $ \{\cc \in \pgcl \,\mid\, \cc = \WHILEDOINV{\guard}{\cc'}{\inv}\} \times \expecs \to \Bools$
%
and call them \emph{verification condition providers} ($\VCSYMBOL$-providers, for short). The truth value $\vccond(\cc,f)$ indicates whether loop $\cc$ with invariant $\inv$ satisfies a verification condition w.r.t.\ postexpectation $f$. Now fix a $\VCSYMBOL$-provider $\vccond$ and some $\SOMEWPSYMBOL\in\{\DWPSYMBOL,\AWPSYMBOL\}$ for the remainder of this section.

\begin{definition}
	Let $\cc\in\pgcl$ and $f \in\expecs$. The \emph{verification condition} 
	%
	 $\vc{\cc}{f} \in \{\true,\false\} $
	%
	of $\cc$ w.r.t.\ $f$ (and $\vccond$ and $\SOMEWPSYMBOL$) is defined by induction on $\cc$ in \Cref{fig:vc}.\hfill $\triangle$
\end{definition}
\noindent
Intuitively, $\vc{\cc}{f}$ is $\true$ if and only if all loops occurring in $\cc$ satisfy the verification condition given by the $\VCSYMBOL$-provider $\vccond$. Notice that, for the sequential composition $\COMPOSE{\cc_1}{\cc_2}$, we employ the auxiliary transformer $\SOMEWPRESYMBOL$ on $\cc_2$ to determine the postexpectation for the $\VCSYMBOL$ of $\cc_1$ since we reason with the annotated loop invariants instead of least fixpoints.\jp{last half sentence is cryptic} The verification condition for the body $\cc'$ of a loop $\WHILEDOINV{\guard}{\cc'}{\inv}$ is taken w.r.t.\ postexpectation $\inv$ since, for nested loops, an inner loop's verification condition depends on the outer loop's invariant (see \Cref{ex:vc_nested_loop}).
We are interested in $\VCSYMBOL$-providers $\vccond$ so that validity of $\vc{\cc}{f}$ does indeed imply that $\somewpre{\cc}{f}$ bounds $\somewp{\cc}{f}$:
\begin{definition}\kb{this is a definition now, adapt this everywhere}
	\label{thm:bounding_vc}
	We say that $\vccond$ \emph{yields upper bounds} for $\SOMEWPSYMBOL \in\{\DWPSYMBOL,\AWPSYMBOL\}$, if 
	\[
	  \text{for all $\cc\in\pgcl$ and all $f \in \expecs$}\colon\quad
	   \vc{\cc}{f} ~{}\text{implies}{}~  \exptrans{\SOMEWPSYMBOL}{\cc}{f} \eleq \exptrans{\SOMEWPRESYMBOL}{\cc}{f} ~.
	\] 
	Analogously, we say that $\vccond$ \emph{yields lower bounds} for $\SOMEWPSYMBOL\in\{\DWPSYMBOL,\AWPSYMBOL\}$, if 
	\[
	\text{for all $\cc \in\pgcl$ and all $f \in \expecs$}\colon\quad
	\vc{\cc}{f} ~{}\text{implies}~{} \exptrans{\SOMEWPSYMBOL}{\cc}{f} \egeq \exptrans{\SOMEWPRESYMBOL}{\cc}{f} ~.\tag*{$\triangle$}
	\] 
\end{definition}
\noindent
It is useful to note that, since both $\exptransT{\DWPSYMBOL}{\cc}$ and $\exptransT{\AWPSYMBOL}{\cc}$ are monotonic (\iftoggle{arxiv}{\Cref{thm:healthiness}.\ref{thm:healthiness_monotonic}}{see~\cite[Appendix B]{arxiv}}), it actually suffices to show that $\vccond$ yields upper (resp.\ lower) bounds for loops in order to conclude that $\vccond$ yields upper (resp.\ lower) bounds for \emph{all} $\pgcl$ programs. See \iftoggle{arxiv}{\Cref{lem:vc_loops}}{\cite[Appendix B]{arxiv}} for details.

	\newcommand{\varinc}{inc}
\newcommand{\varp}{p}
    \begin{figure}[t]
    	\small
	\begin{align*}
		&\dwpreannotate{3y \emin z} \\
		&\COMPOSE{\ASSIGN{c}{1}}{\ASSIGN{x}{0}} \\
		&\dwpreannotate{x+ \iverson{c \neq 0} \cdot (3y \emin z)}  \tag{the $\DWPRESYMBOL$ of a loop is given by its invariant}\\
		& \WHILEDOINVOPEN{c \neq 0} \\
		& \qquad \dwpreannotate{ \expsubs{h}{p,inc}{\nicefrac{1}{4},y} \emin \expsubs{h}{p,inc}{\nicefrac{1}{2},z}}
		\tag{this is $\dwpre{\text{loop body}}{\inv}$}\\
		%
		%
		&\qquad \GCFST{\true}{
			\singlelinepreannotatespace \dwprealertannotate{\expsubs{h}{p,inc}{\nicefrac{1}{4},y} } \singlelineannotatespace
			\COMPOSE{\ASSIGN{\varinc}{y}}{\ASSIGN{\varp}{\nicefrac{1}{4}}}
			\singlelineannotatespace \annotate{h } \singlelineannotatespace
		} \\
		&\qquad \GCSEC{\true}{
			\singlelinepreannotatespace \dwprealertannotate{\expsubs{h}{p,inc}{\nicefrac{1}{2},z} } \singlelineannotatespace
			\COMPOSE{\ASSIGN{\varinc}{z}}{\ASSIGN{\varp}{\nicefrac{1}{2}}}
			\singlelineannotatespace \annotate{h} \singlelineannotatespace
		}\fatsemi \\
		&\qquad \dwpreannotate{\nicefrac{1}{2} \cdot \expsubs{\inv}{c}{0} + \nicefrac{1}{2} \cdot \expsubs{\inv'}{c'}{1} } 
		\tag{denote this expectation by $h$}\\
		&\qquad \{ \singlelinepreannotatespace \dwpreannotate{\expsubs{\inv}{c}{0} \singlelineannotatespace} \ASSIGN{c}{0} \singlelineannotatespace \annotate{\inv} \} \\
		&\qquad [\nicefrac{1}{2}] \\
		&\qquad \{
		\COMPOSE{ \singlelinepreannotatespace\dwpreannotate{\expsubs{\inv'}{c'}{1}} \singlelineannotatespace
			\ASSIGN{c'}{1}}{ \singlelinepreannotatespace\dwpreannotate{\inv'} \singlelineannotatespace
			\WHILEDOINV{c' \neq 0}{\PCHOICE{\ASSIGN{c'}{0}}{\varp}{\ASSIGN{x}{x+\varinc}}}{\inv'}
			\singlelineannotatespace \annotate{\inv}
		}
		\} \\
		&\qquad \annotate{\inv}  \tag{postexpectation for a loop's body is the loop's invariant}\\
		&\WHILEDOINVCLOSE{x+ \iverson{c \neq 0} \cdot (3y \emin z)} \tag{denote this invariant by $I$} \\
		&\annotate{x}
		\tag{$x$ is the postexpectation}
	\end{align*}
	
	\caption{Program $\cc$ with annotations for determining $\dwpre{\cc}{x}$. Invariant $\inv'$ of the inner loop is 
		%
		$I' \eeq \iverson{\varp>0}\cdot\big(\iverson{c'\neq0}\cdot\left(\frac{\varp \cdot x -\varp \cdot \varinc + \varinc}{\varp} \right) + \iverson{c'= 0}\cdot x + \iverson{c \neq 0} \cdot (3y \emin z)\big)$. Annotations for the inner loop are omitted. 
	}
	\label{fig:nested_vc}
\end{figure}
%
%
\begin{example}[An unbounded nested loop]
	\label{ex:vc_nested_loop}
	Recall from \Cref{thm:dwp_park} that $\DWPSYMBOL$-superinvariants establish upper bounds on demonic weakest preexpectations of loops. Using \iftoggle{arxiv}{\Cref{lem:vc_loops}}{\cite[Lemma B.2]{arxiv}}, it is thus straightforward to show that the $\VCSYMBOL$-provider $\superinvcond$ given by 
	\[ 
	\superinvcond(\WHILEDOINV{\guard}{\cc'}{\inv}, f) \eeq \true 
	\quad\qiff\quad
	 \iverson{\guard}\cdot\dwpre{\cc'}{\inv} + \iverson{\neg\guard} \cdot f \eeleq \inv
	\]
	yields upper bounds for $\DWPSYMBOL$. Notice that, for the loop body $\cc'$, we consider the auxiliary transformer $\DWPRESYMBOL$ to enable reasoning about nested loops. This comes at the cost of the typical dependency between inner- and outer invariants: $\dwpre{\cc'}{\inv}$ must not approximate $\dwp{\cc'}{\inv}$ too coarsely, i.e., if $\cc'$ itself contains loops, then the inner loop's superinvariants must be tight enough for the outer $\inv$ to satisfy the above superinvariant condition. 
	Now consider program $\cc$ from \Cref{fig:nested_vc}, which we annotate for determining $\dwpre{\cc}{x}$. Program $\cc$ contains a nested loop. On every iteration, the outer loop, which we denote by $\cc_1$, nondeterministically chooses a value for $\varinc$ and a probability for $\varp$. It then flips a fair coin and either terminates or executes the inner loop, which we denote by $\cc_2$. Loop $\cc_2$ depends on the choices made by $\cc_1$: It keeps flipping a coin with bias $\varp$ and either terminates or keeps incrementing $x$ by $\varinc$. 
	
	The annotations yield the conditions for $\vcsuperinvdwp{\cc}{x}$ to be $\true$. Recall that both loops must satisfy their respective verification conditions: The postexpectation for the outer loop $\cc_1$ is $x$, which yields the condition $\superinvcond(\cc_1, x)$. Moreover, the postexpectation for the inner loop $\cc_2$ is $\inv$ --- the invariant of the \emph{outer} loop ---, which yields the condition $\superinvcond(\cc_2, \inv)$. Both conditions can be shown to be $\true$. Since $\superinvcond$ yields upper bounds for $\DWPSYMBOL$, we thus get \[\dwp{\cc}{x} \eeleq \dwpre{\cc}{x} \eeq 3y \emin z~,\]
	 i.e., the minimal expected final value of $x$ is upper-bounded by the minimum of $3y$ and $z$. We will see in the next section how the validity of $\vcsuperinvdwp{\cc}{x}$ and the annotation of $\cc_1$'s body yield a correct-by-construction determinization of $\cc$ which preserves this bound. 
	 \hfill $\triangle$
\end{example}

\section{Determinizations of $\pgcl$ Programs from Weakest Preexpectations}
\label{sec:trans}
In this section, we formalize our approach for obtaining determinizations of $\pgcl$ programs from weakest preexpectations based on our parametric verification conditions from \Cref{sec:vc}.  \\

\noindent
First, recall from \Cref{sec:overview} that our approach is based on (pointwise) comparisons of intermediate preexpectations obtained from applying the rules in \Cref{fig:wp_table}. To formalize this, we define two functions $\predeleq, \predegeq \colon \expecs \times \expecs \to \predicates$ given by
\[
   \predeleq(f,g) = \lambda \sigma.~ 
   \begin{cases}
   	   \true, & \text{if $f(\sigma) \leq g(\sigma)$} \\
   	   \false, &\text{otherwise}
   	\end{cases}
   \qquad 
   \text{and}
   \qquad
   \predegeq(f,g) = \lambda \sigma.~ 
   \begin{cases}
   	\true, & \text{if $f(\sigma) \geq g(\sigma)$} \\
   	\false, &\text{otherwise}~.
   \end{cases}
\]
We often write $f \predeleq g$ and \ $f \predegeq g$ instead of $\predeleq(f,g)$ and $\predegeq(f,g)$. In words, $f \predeleq g$ (resp.\ $f \predegeq g$) yields a predicate which evaluates to $\true$ on state $\sigma$ iff $f(\sigma)$ is upper- (resp.\ lower-) bounded by $g(\sigma)$. These predicates form our sought-after strengthenings of guards to resolve nondeterminism.

\begin{figure}[t]
	\centering
	\begin{tabular}{l l}
		\toprule
		$\cc$ & $\trans{\comprel}{\SOMEWPSYMBOL}{\cc}{f}$\\
		\midrule
		$\SKIP$ & $\SKIP$ \\[1ex]
		$\ASSIGN{x}{\aexpr}$ & $\ASSIGN{x}{\aexpr}$\\[1ex]
		$\COMPOSE{\cc_1}{\cc_2}$ & $\COMPOSE{
			\trans{\comprel}{\SOMEWPSYMBOL}{\cc_1}{\exptransT{\SOMEWPRESYMBOL}{\cc_2}(f)}
		}{
			\trans{\comprel}{\SOMEWPSYMBOL}{\cc_2}{f}
		}$\\[1ex]
		$\GCFST{\guard_1}{\cc_1}$ & $\GCFST{
			\guard_1 \land (\guard_2 \limpl \exptransT{\SOMEWPRESYMBOL}{\cc_1}(f) \comprel \exptransT{\SOMEWPRESYMBOL}{\cc_2}(f))
		}{\trans{\comprel}{\SOMEWPSYMBOL}{\cc_1}{f}} $\\
		$\GCSEC{\guard_2}{\cc_2}$ & $\GCSEC{
			\guard_2 \land (\guard_1 \limpl \exptransT{\SOMEWPRESYMBOL}{\cc_2}(f) \comprel \exptransT{\SOMEWPRESYMBOL}{\cc_1}(f))
		}{\trans{\comprel}{\SOMEWPSYMBOL}{\cc_2}{f}}$\\[1ex]
		$\PCHOICE{\cc_1}{\pexpr}{\cc_2}$ & $\PCHOICE{\trans{\comprel}{\SOMEWPSYMBOL}{\cc_1}{f}}{\pexpr}{\trans{\comprel}{\SOMEWPSYMBOL}{\cc_2}{f}}$\\[1ex]
		%
		$\WHILEDOINV{\guard}{\cc'}{\inv}$ & $\WHILEDOINV{\guard}{\trans{\comprel}{\SOMEWPSYMBOL}{\cc'}{\inv}}{\inv}$ \\
		\bottomrule
	\end{tabular}
	\caption{Inductive definition of the program transformer $\TRANSSYMBOL$.}
	\label{fig:trans}
\end{figure}

Based on the above notion, we define the following expectation-based program transformation:
\begin{definition}
	\label{def:trans}
	Let $\comprel \in \{\predeleq, \predegeq\}$ and $\SOMEWPSYMBOL\in\{\DWPSYMBOL,\AWPSYMBOL\}$. The program transformer 
	\[
	    \transT{\comprel}{\SOMEWPSYMBOL} \colon \pgcl \times \expecs \to \pgcl
	\]
	is defined by induction on $\pgcl$ in \Cref{fig:trans}.\hfill$\triangle$
\end{definition}
\noindent
We often abbreviate $\trans{\predeleq}{\DWPSYMBOL}{\cc}{f}$ and $\trans{\predegeq}{\AWPSYMBOL}{\cc}{f}$ by $\dtrans{\SOMEWPSYMBOL}{\cc}{f}$ (called the \emph{demonic transformation} of $\cc$ w.r.t.\ $f$) and $\atrans{\SOMEWPSYMBOL}{\cc}{f}$ (called the \emph{angelic transformation} of $\cc$ w.r.t.\ $f$), respectively. Let us go over the rules in \Cref{fig:trans}. $\SKIP$ statements and assignments are never transformed. For the sequential composition $\COMPOSE{\cc_1}{\cc_2}$, we transform $\cc_2$ w.r.t.\ $f$ and $\cc_1$ w.r.t.\ to $\somewpre{\cc_2}{f}$. Notice that we employ the auxiliary transformer from \Cref{sec:vc} since our determinizations are guided by the annotated loop invariants. The guarded choice is the most interesting case. Recall from our informal description from \Cref{sec:overview} that we construct strengthenings for the guards $\guard_1$ and $\guard_2$ by comparing the intermediate preexpectations $\somewpre{\cc_1}{f}$ and $\somewpre{\cc_2}{f}$ of the two branches. Whenever \emph{both} $\guard_1$ and $\guard_2$ hold, there is nondeterminism that is to be resolved, which is realized by applying function $\comprel \in \{\predeleq, \predegeq\}$ to the intermediate preexpectations. Here, $\limpl$ is the standard implication between predicates, i.e, $\guard \limpl \guardb$ is $\false$ on state $\sigma$ iff $\sigma \models \guard$ and $\sigma \not\models \guardb$. We refer to \Cref{sec:overview_loopfree} for an illustrative example, where the corresponding intermediate preexpectations are highlighted. For the probabilistic choice, we simply transform the two respective branches. Finally, for loops, we transform the loop body w.r.t.\ the annotated loop invariant (cf.\ \Cref{ex:trans_nested}).\jp{add explanation why loop case does not depend on $f$} \\

\noindent
The above transformations yield implementations of the program that is to be transformed:
\begin{restatable}{theorem}{transrefines}
	\label{thm:trans_refines}
	Let $\comprel \in \{\predeleq, \predegeq\}$, $\SOMEWPSYMBOL\in\{\DWPSYMBOL,\AWPSYMBOL\}$, $\cc\in\pgcl$, $f\in\expecs$. We have $\trans{\comprel}{\SOMEWPSYMBOL}{\cc}{f} \refines \cc$.
\end{restatable}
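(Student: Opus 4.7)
The plan is to proceed by structural induction on $\cc$, showing that in each case the transformed program satisfies the defining rules of the implements relation $\refines$ from \Cref{fig:refinementrules}. The base cases $\SKIP$ and $\ASSIGN{x}{\aexpr}$ are immediate since the transformation is the identity on these constructs, and $\refines$ is reflexive (being a partial order). For $\COMPOSE{\cc_1}{\cc_2}$, $\PCHOICE{\cc_1}{\pexpr}{\cc_2}$, and $\WHILEDOINV{\guard}{\cc'}{\inv}$, the transformation merely applies $\TRANSSYMBOL$ recursively to the sub-programs, so the respective rules of \Cref{fig:refinementrules} combine directly with the induction hypothesis.

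The interesting case is the guarded choice $\GC{\guard_1}{\cc_1}{\guard_2}{\cc_2}$. Writing $\guard_i' \defeq \guard_i \land (\guard_{3-i} \limpl \exptransT{\SOMEWPRESYMBOL}{\cc_i}(f) \comprel \exptransT{\SOMEWPRESYMBOL}{\cc_{3-i}}(f))$ for the strengthened guards produced by the transformation, I must verify the three premises of the rule for guarded choice: (i) $\guard_1' \entails \guard_1$, (ii) $\guard_2' \entails \guard_2$, (iii) $\entails \guard_1' \vee \guard_2'$, together with $\trans{\comprel}{\SOMEWPSYMBOL}{\cc_i}{f} \refines \cc_i$, which is exactly the induction hypothesis. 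Conditions (i) and (ii) are trivial since $\guard_i' = \guard_i \land (\cdots)$ by construction.

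The main obstacle is (iii), which I expect to be where the argument requires the most care. Fix an arbitrary state $\sigma$. Since $\cc$ is well-formed, $\sigma \models \guard_1 \lor \guard_2$. If $\sigma \models \guard_1$ but $\sigma \not\models \guard_2$, then the implication $\guard_2 \limpl (\cdots)$ in $\guard_1'$ is vacuously true, hence $\sigma \models \guard_1'$; the case $\sigma \not\models \guard_1$ and $\sigma \models \guard_2$ is symmetric. If $\sigma \models \guard_1 \land \guard_2$, then both strengthened guards reduce at $\sigma$ to $\exptransT{\SOMEWPRESYMBOL}{\cc_i}(f)(\sigma) \comprel \exptransT{\SOMEWPRESYMBOL}{\cc_{3-i}}(f)(\sigma)$. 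Since $\comprel$ is either $\predeleq$ or $\predegeq$ and the underlying order $\leq$ on $\PosRealsInf$ is total, at least one of these two comparisons holds, so $\sigma \models \guard_1' \lor \guard_2'$. This establishes (iii), completes the guarded-choice case, and hence the induction.
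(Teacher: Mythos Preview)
Your proposal is correct and follows essentially the same approach as the paper's proof: structural induction on $\cc$, with the only non-trivial case being the guarded choice, where validity of $\guard_1' \vee \guard_2'$ is established via totality of $\leq$ on $\PosRealsInf$. Your case analysis for (iii) is slightly more explicit than the paper's one-line argument, but the content is identical.
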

\begin{proof}
	By induction on $\cc$. See \iftoggle{arxiv}{\Cref{proof:trans_refines}}{\cite[Appendix A]{arxiv}} for details.
\end{proof}
Hence, 
$\dwp{\cc}{f} \eleq \dwp{\dtrans{}{\cc}{f}}{f}$ and 
$\awp{\cc}{f} \egeq \awp{\atrans{}{\cc}{f}}{f}$,
%
hold by \Cref{thm:refinementProp}, i.e., naturally, resolving nondeterminism by applying the program transformations increases (decreases) minimal (maximal) expected outcomes of $\cc$. 

As demonstrated in \Cref{sec:overview_loopfree}, the program transformations generally yield programs which are still nondeterministic. This remaining nondeterminism can, however, be resolved in an arbitrary manner in the sense that $\dtrans{}{\cc}{f}$ and $\atrans{}{\cc}{f}$ yield \emph{optimal permissive determinizations} (cf.\ \Cref{sec:overview_loopfree}) w.r.t.\ the auxiliary transformers $\DWPRESYMBOL$ and $\AWPRESYMBOL$:
\begin{restatable}{theorem}{transpreserveswpre}
	\label{thm:trans_preserves_wpre}
	Let $\cc,\cc'\in\pgcl$ and  $f \in \expecs$. We have
	\[
	   \cc' \refines \dtrans{\SOMEWPSYMBOL}{\cc}{f}
	   \qquad\text{implies}\qquad
	  \dwpre{\cc'}{f}  \eeq \dwpre{\cc}{f}  ~.
	\]
	Moreover, we have
	\[
	  \cc' \refines \atrans{\SOMEWPSYMBOL}{\cc}{f}
	  \qquad\text{implies}\qquad
	  \awpre{\cc'}{f}  \eeq \awpre{\cc}{f}  ~.
	\]
\end{restatable}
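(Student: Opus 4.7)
The plan is to proceed by structural induction on $\cc$, focusing on the demonic equality (the angelic case being entirely dual). The atomic cases ($\SKIP$, assignment) are trivial since the transformation is the identity and refinement forces $\cc' = \cc$. The loop case is also immediate: by definition of $\DWPRESYMBOL$, we have $\dwpre{\WHILEDOINV{\guard}{\cdot}{\inv}}{f} = \inv$ independent of body or postexpectation, and the transformation preserves both the guard and the invariant annotation. Hence any $\cc' \refines \dtrans{}{\cc}{f}$ is of the form $\WHILEDOINV{\guard}{\cc''}{\inv}$, giving $\dwpre{\cc'}{f} = \inv = \dwpre{\cc}{f}$.

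For sequential composition $\cc = \COMPOSE{\cc_1}{\cc_2}$, I would invoke the (easy) fact, readable off the refinement rules in Figure~\ref{fig:refinementrules}, that $\cc' \refines \COMPOSE{\cc_1^\star}{\cc_2^\star}$ forces $\cc' = \COMPOSE{\cc_1'}{\cc_2'}$ with $\cc_i' \refines \cc_i^\star$. Applying the induction hypothesis to $\cc_2$ gives $\dwpre{\cc_2'}{f} = \dwpre{\cc_2}{f}$, which identifies the intermediate postexpectation used in the transformation of $\cc_1$. A second application of IH to $\cc_1$ then yields the equality. The probabilistic-choice case is a direct application of IH to each branch, since linear combinations are preserved pointwise.

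The main work is the guarded-choice case. Let $\cc = \GC{\guard_1}{\cc_1}{\guard_2}{\cc_2}$ and abbreviate $h_i \defeq \dwpre{\cc_i}{f}$. From Figure~\ref{fig:trans}, $\dtrans{}{\cc}{f}$ has strengthened guards $\guard_i' \defeq \guard_i \land (\guard_{3-i} \limpl h_i \predeleq h_{3-i})$ guarding $\dtrans{}{\cc_i}{f}$. Any $\cc'$ refining this must be of the form $\GC{\guardb_1}{\cc_1'}{\guardb_2}{\cc_2'}$ with $\guardb_i \entails \guard_i'$, $\entails \guardb_1 \lor \guardb_2$, and $\cc_i' \refines \dtrans{}{\cc_i}{f}$. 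By IH, $\dwpre{\cc_i'}{f} = h_i$. Since the clause for guarded choice in the definition of $\DWPRESYMBOL$ inherits the pointwise-minimum rule of $\DWPSYMBOL$ (the two transformers differ only on loops), it remains to verify, state by state, that
\[
  \min\{h_i(\sigma) \mid \sigma \models \guardb_i\} \eeq \min\{h_i(\sigma) \mid \sigma \models \guard_i\}.
\]
This reduces to a routine case analysis: if only one of $\guard_1(\sigma), \guard_2(\sigma)$ holds, the side condition in the strengthened guard is vacuously $\true$, so that branch survives in $\cc'$; if both hold and $h_1(\sigma) \leq h_2(\sigma)$, then $\guard_1'(\sigma)$ holds in full, and by the well-formedness condition $\entails \guardb_1 \lor \guardb_2$ the minimum $h_1(\sigma)$ is attained by a surviving branch.

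The main obstacle is not conceptual but bookkeeping: one has to exploit the premise $\entails \guardb_1 \lor \guardb_2$ to ensure that at least one \emph{optimal} branch survives the strengthening, which is precisely what upgrades the inequality $\dwpre{\cc}{f} \eleq \dwpre{\cc'}{f}$ (available already from \Cref{thm:refinementProp} applied to $\SOMEWPRESYMBOL$-style reasoning) to a genuine equality. Once this is in place the induction goes through uniformly, and the angelic statement follows by replacing $\predeleq$ with $\predegeq$, $\min$ with $\max$, and $\DWPSYMBOL$ with $\AWPSYMBOL$ throughout.
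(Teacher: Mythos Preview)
Your proposal is correct and follows essentially the same approach as the paper: structural induction on $\cc$, with the guarded choice as the only non-trivial case, handled by a pointwise case analysis on which of $\guard_1,\guard_2$ hold and how $h_1(\sigma),h_2(\sigma)$ compare. The paper's presentation is slightly more explicit in splitting the ``both guards hold'' case into the three sub-cases $h_1(\sigma) < h_2(\sigma)$, $h_1(\sigma) > h_2(\sigma)$, and $h_1(\sigma) = h_2(\sigma)$---which makes transparent \emph{why} the non-optimal branch is necessarily eliminated in the strict cases (because its strengthened guard $\guard_i'$ becomes false, forcing $\guardb_i$ false and hence $\guardb_{3-i}$ true)---whereas your ``$h_1(\sigma) \leq h_2(\sigma)$'' treatment leaves that step slightly implicit; but this is purely presentational.
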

\begin{proof}
	 By induction on $\cc$. See \iftoggle{arxiv}{\Cref{proof:trans_preserves_wpre}}{\cite[Appendix A]{arxiv}} for details.
\end{proof}
\noindent
In particular, if $\cc$ is loop-free then $\dwpre{\cc}{f}$ and $\dwp{\cc}{f}$ (resp.\ $\awpre{\cc}{f}$ and $\awp{\cc}{f}$) coincide, so the above theorem indeed yields that our transformation is \emph{optimal and correct for loop-free programs}. In the presence of loops, we need to ensure that the respective annotated invariants soundly bound the sought-after preexpectations. For that, recall from \Cref{sec:vc} that we introduced $\VCSYMBOL$-providers $\vccond$ and from \Cref{sec:overview_loops} that our idea is to construct determinizations which preserve validity of the so-defined verification conditions. This motivates the following definition:
\begin{definition}
	We say that $\dtransT{\SOMEWPSYMBOL}$ \emph{preserves} $\vccond$ if for all $\cc,\cc' \in\pgcl$ and all $f \in \expecs$, 
		\[
		 \vcdwp{\cc}{f}~\text{and}~ \cc' \refines \dtrans{}{\cc}{f}
		  \quad\text{implies}\quad \vcdwp{\cc'}{f}~. 
		 \]
	Moreover, we say that $\atransT{\SOMEWPSYMBOL}$ \emph{preserves} $\vccond$ if for all $\cc,\cc' \in\pgcl$ and all $f \in \expecs$, 
	\[
	\vcawp{\cc}{f}~\text{and}~ \cc' \refines \atrans{}{\cc}{f}
	\quad\text{implies}\quad \vcawp{\cc'}{f}~.
    \tag*{$\triangle$}
	\]
\end{definition}
\noindent
In words, all implementations $\cc'$ of $\dtrans{}{\cc}{f}$ (including all determinizations and $\dtrans{}{\cc}{f}$ itself) must preserve validity of the verification condition given by the $\VCSYMBOL$-provider $\vccond$ (analogously for $\atrans{}{\cc}{f}$). Hence, if $\vccond$ yields upper bounds for $\DWPSYMBOL$ (resp.\ lower bounds for $\AWPSYMBOL$), then the respective program transformations \emph{preserve} the bounds obtained from the annotated loop invariants, i.e., $\dwpre{\cc}{f}$ and $\awpre{\cc}{f}$, respectively, which is the main \mbox{result of this section:}

\begin{theorem}
	\label{thm:correct_transformations}
	Assume that $\dtransT{\SOMEWPSYMBOL}$ \emph{preserves} $\vccond$. If $\vccond$ yields upper bounds for $\DWPSYMBOL$, then for all $\cc,\cc' \in \pgcl$ and all $f\in\expecs$,
	\[
	    \vcdwp{\cc}{f}~\text{and}~ \cc' \refines \dtrans{\SOMEWPSYMBOL}{\cc}{f} \qquad\text{implies}\qquad  
	   \dwp{\cc}{f} \eeleq \exptrans{\DWPSYMBOL}{\cc'}{f}
	    \eeleq \exptrans{\DWPRESYMBOL}{\cc}{f}  ~.
	\]
	Dually, assume that $\atransT{\SOMEWPSYMBOL}$ \emph{preserves} $\vccond$. If $\vccond$ yields lower bounds for $\AWPSYMBOL$, then for all $\cc,\cc' \in \pgcl$ and all $f\in\expecs$,
	\[	 
	 \vcawp{\cc}{f}~\text{and}~ \cc' \refines \atrans{\SOMEWPSYMBOL}{\cc}{f}  \qquad\text{implies}\qquad  
	  \awp{\cc}{f} \eegeq \exptrans{\AWPSYMBOL}{\cc'}{f}
	 \eegeq \exptrans{\AWPRESYMBOL}{\cc}{f} 
	 ~.
	\]
\end{theorem}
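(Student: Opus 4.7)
The plan is to prove each of the two statements (demonic and angelic) by chaining together four previously established results. I will outline only the demonic case in detail since the angelic case is formally dual.

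For the demonic chain $\dwp{\cc}{f} \eleq \dwp{\cc'}{f} \eleq \dwpre{\cc}{f}$, the first inequality $\dwp{\cc}{f} \eleq \dwp{\cc'}{f}$ is obtained as follows: Theorem~\ref{thm:trans_refines} gives $\dtrans{\SOMEWPSYMBOL}{\cc}{f} \refines \cc$; combining this with the hypothesis $\cc' \refines \dtrans{\SOMEWPSYMBOL}{\cc}{f}$ and transitivity of $\refines$ (part of Definition~\ref{def:refinement}) yields $\cc' \refines \cc$. Applying Lemma~\ref{thm:refinementProp} (demonic case) then delivers the required inequality.

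For the second inequality $\dwp{\cc'}{f} \eleq \dwpre{\cc}{f}$, I decompose it via an intermediate term $\dwpre{\cc'}{f}$. The hypotheses $\vcdwp{\cc}{f}$ and $\cc' \refines \dtrans{}{\cc}{f}$, combined with the preservation assumption on $\dtransT{\SOMEWPSYMBOL}$, yield $\vcdwp{\cc'}{f}$. Since $\vccond$ yields upper bounds for $\DWPSYMBOL$, this gives $\dwp{\cc'}{f} \eleq \dwpre{\cc'}{f}$. Finally, Theorem~\ref{thm:trans_preserves_wpre}, applied once more to $\cc' \refines \dtrans{\SOMEWPSYMBOL}{\cc}{f}$, provides the equality $\dwpre{\cc'}{f} = \dwpre{\cc}{f}$, which closes the chain.

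The angelic case proceeds analogously but with every inequality flipped at each stage: Lemma~\ref{thm:refinementProp} supplies $\awp{\cc}{f} \egeq \awp{\cc'}{f}$; the assumption that $\vccond$ yields lower bounds for $\AWPSYMBOL$, together with preservation of $\vccond$ under $\atransT{\SOMEWPSYMBOL}$, supplies $\awp{\cc'}{f} \egeq \awpre{\cc'}{f}$; and the angelic half of Theorem~\ref{thm:trans_preserves_wpre} supplies $\awpre{\cc'}{f} = \awpre{\cc}{f}$. I anticipate no real obstacles here: the theorem packages a clean composition of the machinery assembled in Sections~\ref{sec:vc} and~\ref{sec:trans}. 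The only bookkeeping subtlety is to flip inequalities consistently in the angelic case, since refinement weakly decreases angelic preexpectations while it weakly increases demonic ones.
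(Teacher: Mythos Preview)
Your proposal is correct and follows essentially the same approach as the paper's proof: both establish the first inequality via Theorem~\ref{thm:trans_refines} and Lemma~\ref{thm:refinementProp}, then obtain the second by combining the preservation assumption (to get $\vcdwp{\cc'}{f}$), Definition~\ref{thm:bounding_vc} (to get $\dwp{\cc'}{f} \eleq \dwpre{\cc'}{f}$), and Theorem~\ref{thm:trans_preserves_wpre} (to get $\dwpre{\cc'}{f} = \dwpre{\cc}{f}$). Your explicit mention of transitivity of $\refines$ is a minor elaboration the paper leaves implicit.
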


\begin{proof}\let\qed\relax
	We prove the claim for $\DWPSYMBOL$. The reasoning for $\AWPSYMBOL$ is analogous. First, $\dwp{\cc}{f} \eleq \dwp{\cc'}{f}$
	holds by \Cref{thm:trans_refines} and \Cref{thm:refinementProp}. Moreover, since  $\dtransT{\SOMEWPSYMBOL}$ preserves $\vccond$, we have $\vcdwp{\cc'}{f}$. Hence, since $\vccond$ yields upper bounds for $\DWPSYMBOL$, we get
	\begin{align*}
		\dwp{\cc'}{f} \stackrel{\text{Def.\ \ref{thm:bounding_vc}}}{\eeleq}
		\dwpre{\cc'}{f} \stackrel{\text{Thm.\ \ref{thm:trans_preserves_wpre}}}{\eeq}
		\dwpre{\cc}{f}~.\tag*{$\square$}
	\end{align*}
\end{proof}
\begin{example}
	\label{ex:trans_nested}
	Recall the $\VCSYMBOL$-provider $\superinvcond$ from \Cref{ex:vc_nested_loop}, which yields upper bounds for $\DWPSYMBOL$. Using \Cref{thm:trans_preserves_wpre}, it is immediate that $\dtransT{\SOMEWPSYMBOL}$ preserves $\superinvcond$. Now reconsider program $\cc$ from \Cref{fig:nested_vc}. Since $\DWPRESYMBOL$ is obtained using the annotated loop invariants, the highlighted intermediate preexpectations are obtained in a purely syntactic manner. Applying minor arithmetic simplifications to the guard strengthenings obtained from these highlighted preexpectations, we thus also obtain $\dtrans{}{\cc}{x}$ in a purely syntactic manner:
	{\small
		\begin{align*}
		%
		&\COMPOSE{\ASSIGN{c}{1}}{\ASSIGN{x}{0}} \\
		%
		%
		& \WHILEDOINVOPEN{c \neq 0} \\
		%
		%
		%
		&\qquad \GCFST{6y \leq z}{
			\COMPOSE{\ASSIGN{\varinc}{y}}{\ASSIGN{\varp}{\nicefrac{1}{4}}}
		} \\
		&\qquad \GCSEC{6y \geq z}{
			\COMPOSE{\ASSIGN{\varinc}{z}}{\ASSIGN{\varp}{\nicefrac{1}{2}}}
		}\fatsemi \\
		%
		%
		&\qquad \{  \ASSIGN{c}{0}  \} \\
		&\qquad [\nicefrac{1}{2}] \\
		&\qquad \{
		\COMPOSE{ 
			\ASSIGN{c'}{1}}{ 
			\WHILEDOINV{c' \neq 0}{\PCHOICE{\ASSIGN{c'}{0}}{\varp}{\ASSIGN{x}{x+\varinc}}}{\inv'}
		}
		\} \\
		%
		%
		&\WHILEDOINVCLOSE{x+ \iverson{c \neq 0} \cdot (3y \emin z)} \\
	\end{align*}}
	The above nested loop still behaves nondeterministically whenever $6y = z$ holds at the beginning of an outer loop iteration. However, thanks to \Cref{thm:correct_transformations}, \emph{any} determinization $\cc'$ of $\dtrans{}{\cc}{x}$ (obtained by, e.g., turning the first inequality $6y \leq z$ into a strict one) satisfies
	\[
	   \dwp{\cc}{x} \eeleq \dwp{\cc'}{x} \eeq \dwpre{\cc}{x} \eeq 3y \emin z,
	\]
	i.e, the expected final value of $x$ of \emph{any} determinization $\cc'$ is upper-bounded by $3y \emin z$.
	   \hfill $\triangle$
\end{example}

\subsection{Completeness}
\label{sec:trans_incompleteness}

Recall from \Cref{sec:overview_incompleteness} that our approach is inherently incomplete when it comes to finding determinizations which guarantee lower bounds on expected outcomes of $\pgcl$ programs. Yet, \Cref{thm:correct_transformations} yields a sufficient condition for our approach to be complete for \emph{subclasses} of problem instances. Let $\progclass \subseteq \pgcl$ and $\expclass \subseteq \expecs$. We call a $\VCSYMBOL$-provider $\vccond$ \emph{demonically (resp.\ angelically) complete} w.r.t.\ $(\progclass, \expclass)$ if, (i) $\VCSYMBOL$ yields upper (resp.\ lower) bounds for $\DWPSYMBOL$ (resp.\ $\AWPSYMBOL$) and (ii) for all $\cc \in \progclass$ and $f \in \expclass$, program $\cc$ can be annotated with invariants in such a way that $\vcdwp{\cc}{f}$ (resp. $\vcawp{\cc}{f}$) holds and $\dwp{\cc}{f} = \dwpre{\cc}{f}$ (resp.\ $\awp{\cc}{f} = \awpre{\cc}{f}$). 
\begin{theorem}
	\label{thm:completeness}
	Let $\vccond$ be demonically complete w.r.t.\ $(\progclass, \expclass)$ and preserved by $\dtransT{}$. Then, for all $\cc\in\progclass$ and $f \in \expclass$, there exist invariant annotations for $\cc$ such that for all $\cc'\in\pgcl$,
	\[
	    \cc' \refines \dtrans{}{\cc}{f} \qquad \text{implies} \qquad \dwp{\cc}{f} \eeq \dwp{\cc'}{f}~.
	\]
	Analogously, let $\vccond$ be angelically complete w.r.t.\ $(\progclass, \expclass)$ and preserved by $\atransT{}$. Then, for every $\cc\in\progclass$ and $f \in \expclass$, there exist invariant annotations for $\cc$ such that for all $\cc'\in\pgcl$,
	\[
	\cc' \refines \atrans{}{\cc}{f} \qquad \text{implies} \qquad \awp{\cc}{f} \eeq \awp{\cc'}{f}~.
	\]
\end{theorem}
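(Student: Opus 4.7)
The plan is to derive the theorem as a straightforward combination of \Cref{thm:correct_transformations} together with the two requirements bundled into the definition of demonic (resp.\ angelic) completeness. I focus on the demonic case; the angelic case will be entirely dual.

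First, I would fix $\cc \in \progclass$ and $f \in \expclass$, and invoke demonic completeness to obtain invariant annotations of $\cc$ such that both (i) $\vcdwp{\cc}{f}$ holds, and (ii) $\dwp{\cc}{f} \eeq \dwpre{\cc}{f}$. These are precisely the two premises I need to feed into \Cref{thm:correct_transformations}: the first condition discharges the verification-condition hypothesis, and the preservation assumption on $\dtransT{}$ is the other hypothesis of that theorem.

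Next, given any $\cc' \in \pgcl$ with $\cc' \refines \dtrans{}{\cc}{f}$, \Cref{thm:correct_transformations} yields the sandwich
\[
\dwp{\cc}{f} \eeleq \dwp{\cc'}{f} \eeleq \dwpre{\cc}{f}~.
\]
Combining this with equality (ii) from completeness collapses the chain into an equality, giving $\dwp{\cc}{f} \eeq \dwp{\cc'}{f}$ as required.

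There is essentially no technical obstacle: all the real work has already been done in proving \Cref{thm:correct_transformations,thm:trans_preserves_wpre,thm:trans_refines}, and completeness is defined precisely so as to slot into those lemmas. The angelic case follows by swapping $\DWPSYMBOL/\DWPRESYMBOL/\eleq$ for $\AWPSYMBOL/\AWPRESYMBOL/\egeq$ and using the second half of \Cref{thm:correct_transformations}; no additional argument is required. The only thing worth noting in the writeup is that the invariant annotations promised by completeness are the same annotations used inside $\dtrans{}{\cc}{f}$ (since the transformer simply copies invariants from $\cc$ to its determinization, per the loop case of \Cref{fig:trans}), so referring to ``the'' annotations of $\cc$ is unambiguous throughout the argument.
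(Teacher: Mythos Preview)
Your proposal is correct and matches the paper's intended argument: the paper states \Cref{thm:completeness} without an explicit proof, treating it as an immediate consequence of \Cref{thm:correct_transformations} together with the definition of demonic/angelic completeness, which is exactly the derivation you give.
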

\noindent
As mentioned in \Cref{sec:overview_incompleteness}, \Cref{thm:completeness} immediately yields that our approach is complete w.r.t.\ \emph{upper} bounds: the $\VCSYMBOL$-provider $\superinvcond$ from Examples \ref{ex:vc_nested_loop} and \ref{ex:trans_nested} is demonically complete w.r.t.\ $(\pgcl,\expecs)$ by annotating each loop occurring in some program $\cc$ with its respective least fixpoint (cf.\ \Cref{fig:wp_table}). We provide a complete subclass for \emph{lower} bounds in the next section.

\subsection{Instances of our Framework}
\label{sec:trans_instances}
\subsubsection{Upper-Bounded Determinizations}
We restate the $\VCSYMBOL$-provider $\superinvcond$ for later reference.
\begin{theorem}
	\label{thm:superinvcond}
	The $\VCSYMBOL$-provider $\superinvcond$ yields upper bounds for $\DWPSYMBOL$ and is preserved by $\dtransT{}$:
	\[ 
	\superinvcond(\WHILEDOINV{\guard}{\ccbody}{\inv}, f) \eeq \true 
	\quad\qiff\quad
	\iverson{\guard}\cdot\dwpre{\ccbody}{\inv} + \iverson{\neg\guard} \cdot f \eeleq \inv
	\]
\end{theorem}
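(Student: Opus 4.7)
The plan is to establish the two claims separately: first, that $\superinvcond$ yields upper bounds for $\DWPSYMBOL$, and second, that $\dtransT{}$ preserves $\superinvcond$. For the first claim, I would appeal to the reduction (\Cref{lem:vc_loops}) so that it suffices to verify the implication $\vc{\cc}{f} \limpl \dwp{\cc}{f} \eleq \dwpre{\cc}{f}$ only when $\cc = \WHILEDOINV{\guard}{\ccbody}{\inv}$ is a loop. Unfolding the definition of $\vcsuperinvdwp{\cc}{f}$ from \Cref{fig:vc}, validity gives us two things: $\iverson{\guard}\cdot\dwpre{\ccbody}{\inv} + \iverson{\neg\guard}\cdot f \eleq \inv$, and recursively $\vcsuperinvdwp{\ccbody}{\inv}$. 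An outer induction on loop nesting depth (with the base case covering loop-free bodies, where $\DWPRESYMBOL$ and $\DWPSYMBOL$ coincide) yields $\dwp{\ccbody}{\inv} \eleq \dwpre{\ccbody}{\inv}$. Combining with monotonicity of $\dwpT{\ccbody}$ gives $\iverson{\guard}\cdot\dwp{\ccbody}{\inv} + \iverson{\neg\guard}\cdot f \eleq \inv$, which is exactly the superinvariant condition of \Cref{def:invariants}. Park induction (\Cref{thm:dwp_park}) then delivers $\dwp{\cc}{f} \eleq \inv = \dwpre{\cc}{f}$, as required.

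For the preservation claim, I would proceed by induction on the structure of $\cc$. The interesting case is again the loop $\cc = \WHILEDOINV{\guard}{\ccbody}{\inv}$. By definition of $\dtransT{}$ we have $\dtrans{}{\cc}{f} = \WHILEDOINV{\guard}{\dtrans{}{\ccbody}{\inv}}{\inv}$, and any $\cc' \refines \dtrans{}{\cc}{f}$ must have the same shape $\cc' = \WHILEDOINV{\guard}{\ccbody'}{\inv}$ with $\ccbody' \refines \dtrans{}{\ccbody}{\inv}$ (this is why refinement only strengthens guards). The hypothesis $\vcsuperinvdwp{\cc}{f}$ gives the superinvariant inequality for $\ccbody$ together with $\vcsuperinvdwp{\ccbody}{\inv}$. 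The induction hypothesis applied to $\ccbody'$ yields $\vcsuperinvdwp{\ccbody'}{\inv}$, taking care of the recursive component of $\vcsuperinvdwp{\cc'}{f}$. For the outer superinvariant inequality, \Cref{thm:trans_preserves_wpre} is the critical lever: it tells us $\dwpre{\ccbody'}{\inv} = \dwpre{\dtrans{}{\ccbody}{\inv}}{\inv} = \dwpre{\ccbody}{\inv}$, so the inequality is literally unchanged.

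The remaining cases are routine. For $\SKIP$ and assignments there is nothing to check. For probabilistic choice, sequential composition, and guarded choice the verification condition is a simple conjunction of the conditions for the subprograms, and $\dtransT{}$ distributes structurally, so the induction hypothesis applied to each component closes the case; for sequential composition one additionally uses that $\dwpre{\trans{\predeleq}{\DWPSYMBOL}{\cc_2}{f}}{f} = \dwpre{\cc_2}{f}$ (via \Cref{thm:trans_preserves_wpre}) so that the intermediate postexpectation matches.

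I expect the main subtlety to lie in making the inductive bookkeeping around nested loops crisp: namely, justifying that $\dwp{\ccbody}{\inv} \eleq \dwpre{\ccbody}{\inv}$ inside the body of the outer loop follows cleanly from $\vcsuperinvdwp{\ccbody}{\inv}$, which itself requires the ``upper bounds'' property one level deeper. An induction on the nesting depth (or, equivalently, appealing to \Cref{lem:vc_loops} which already packages this structural step) is the cleanest way to avoid circularity; the loop-free base case is immediate since $\DWPRESYMBOL$ agrees with $\DWPSYMBOL$ on loop-free programs.
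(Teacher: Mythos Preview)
Your proposal is correct and follows essentially the same route as the paper, which justifies this theorem only tersely via Examples~\ref{ex:vc_nested_loop} and~\ref{ex:trans_nested}: the ``yields upper bounds'' part is reduced to loops via \Cref{lem:vc_loops} and then discharged by Park induction (\Cref{thm:dwp_park}), and preservation is argued to be ``immediate'' from \Cref{thm:trans_preserves_wpre}. Your write-up is more explicit than the paper's in spelling out the structural induction needed for preservation and in identifying the nesting-depth induction that resolves the apparent circularity in the upper-bounds argument, but the underlying ideas are the same.
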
 
\noindent
We briefly discuss additional instances employing more advanced proof rules in \Cref{sec:conclusion}.\kb{maybe omit this depending on space constraints}
\subsubsection{Lower-Bounded Determinizations}
Lower-bounding weakest preexpectations of loops is more involved. There exist specialized proof rules, which typically restrict the expressive power of the invariants, the postexpectation, and the termination behavior of the loop under consideration. \\

\noindent
We call program $\cc$ \emph{demonically almost-surely terminating} (dAST, for short) if $\dwp{\cc}{1} = 1$. Moreover, we call $\cc$ \emph{positively demonically almost-surely terminating} (dPAST, for short) if $\cc$'s expected runtime (in the sense of \cite{ert}) is finite for all initial states. Finally, we define a loop $\WHILEDOINV{\guard}{\ccbody}{\inv}$ to be \emph{suitable for optional stopping w.r.t.\ $f$} in \iftoggle{arxiv}{\Cref{app:optional_stopping}}{\cite[Appendix B]{arxiv}}. \\

\noindent
With these notions at hand, we obtain instances for lower-bounded determinizations:
\begin{theorem}
	\label{thm:subinvcond}
	The following $\VCSYMBOL$-providers yield lower bounds for $\AWPSYMBOL$ and are preserved by $\atransT{}$:
	\begin{enumerate}
		\item \label{thm:dastsubinvcond} $\dastsubinvcond$ which for $C = \WHILEDOINV{\guard}{\ccbody}{\inv}$ and $f\in\expecs$ is given by
		\[
		    \dastsubinvcond(C, f) = \true 
		    \qquad\text{iff}\qquad
		    \substack{\text{\normalsize $\iverson{\guard}\cdot \awpre{\ccbody}{\inv} + \iverson{\neg\guard}\cdot f \egeq \inv$, $\cc$ is dAST,} \\
		    \text{\normalsize  $\inv$ and $f$ are bounded by some constant $b$ .}}
		\]
		\item \label{thm:dpastsubinvcond} $\dpastsubinvcond$ which for $C = \WHILEDOINV{\guard}{\ccbody}{\inv}$ and $f\in\expecs$ is given by
		\[
		\dpastsubinvcond(C, f) = \true 
		\qquad\text{iff}\qquad
		\substack{\text{\normalsize $\iverson{\guard}\cdot \awpre{\ccbody}{\inv} + \iverson{\neg\guard}\cdot f \egeq \inv$, $\cc$ is dPAST,} \\
			\text{\normalsize $\cc$ is suitable for optional stopping w.r.t.\ $f$ .}}
		\]
	\end{enumerate}
\end{theorem}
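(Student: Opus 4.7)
The proof decomposes into two claims for each of the two $\VCSYMBOL$-providers: (A) the provider yields lower bounds for $\AWPSYMBOL$, and (B) the provider is preserved by $\atransT{\SOMEWPSYMBOL}$. By monotonicity of $\AWPSYMBOL$ together with a structural induction analogous to the angelic dual of the paper's Lemma B.2, claim (A) reduces in both cases to the single-loop setting: given $\cc = \WHILEDOINV{\guard}{\ccbody}{I}$ and $f$ for which $\iverson{\guard}\cdot \awpre{\ccbody}{I} + \iverson{\neg\guard}\cdot f \egeq I$ holds along with the respective side conditions, one must verify $\awp{\cc}{f} \egeq I$.

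For $\dastsubinvcond$ this is a classical argument: boundedness of $I$ and $f$ by $b$ combined with dAST allows the usual dominated-convergence-style passage from finite unrollings of the loop to the least fixpoint defining $\awp{\cc}{f}$, with dAST ensuring that the mass escaping to nontermination vanishes in the limit. For $\dpastsubinvcond$ I would invoke the angelic analogue of the optional stopping rule of \citet{DBLP:journals/pacmpl/HarkKGK20}: dPAST provides the positive almost-sure termination needed to control the loop counter as a stopping time, and suitability for optional stopping supplies the integrability conditions that let the subinvariant inequality be transported to a genuine lower bound on $\awp{\cc}{f}$.

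For claim (B), I would decompose each provider into its conjunctive ingredients and check preservation of each under any $\cc' \refines \atrans{\SOMEWPSYMBOL}{\cc}{f}$. First, \Cref{thm:trans_refines} and transitivity of $\refines$ yield $\cc' \refines \cc$. Subinvariance of $I$ is preserved verbatim since \Cref{thm:trans_preserves_wpre} gives $\awpre{\ccbody'}{I} = \awpre{\ccbody}{I}$ on the respective loop bodies, so the defining inequality is unchanged. The dAST condition transfers from $\cc$ to $\cc'$ because \Cref{thm:refinementProp} yields $\dwp{\cc}{1} \eleq \dwp{\cc'}{1}$ while $\dwp{\cc'}{1}$ is inherently bounded by $1$, forcing equality; an analogous monotonicity argument on expected runtimes handles dPAST. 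Boundedness of $I$ and $f$ by $b$ is immediate since neither expectation is altered by the transformation. Finally, suitability for optional stopping is a conjunction of integrability-style bounds on $\awpre{\ccbody}{\cdot}$ which transfer from $\ccbody$ to $\ccbody'$ again by \Cref{thm:trans_preserves_wpre}.

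The main obstacle is the optional stopping condition in case (2), whose precise formulation is deferred to the appendix: verifying both its role in lifting subinvariance to a genuine lower bound on $\awp{\cc}{f}$ and its preservation under refinement requires carefully translating the arguments of \citet{DBLP:journals/pacmpl/HarkKGK20} from the deterministic/demonic setting to the angelic setting. However, beyond monotonicity of $\AWPSYMBOL$ and the invariance $\awpre{\ccbody'}{I} = \awpre{\ccbody}{I}$ granted by \Cref{thm:trans_preserves_wpre}, no fundamentally new technical ideas are required; the remaining work is essentially a bookkeeping exercise.
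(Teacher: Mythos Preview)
Your treatment of preservation (claim B) is essentially the paper's argument, modulo a small inaccuracy: the optional-stopping conditions are phrased in terms of $\awp{\ccbody}{\cdot}$, not $\awpre{\ccbody}{\cdot}$, so their preservation follows from \Cref{thm:refinementProp} (since $\ccbody' \refines \ccbody$ gives $\awp{\ccbody'}{g} \eleq \awp{\ccbody}{g}$ for every $g$), not from \Cref{thm:trans_preserves_wpre}.

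The real divergence is in claim (A). You propose proving the lower-bound property \emph{directly} for nondeterministic loops, and you correctly flag that for $\dpastsubinvcond$ this means re-doing the optional stopping argument of \citet{DBLP:journals/pacmpl/HarkKGK20} in the angelic setting---which, as the paper notes in \Cref{rem:aiming_low}, those authors explicitly left open. The paper does \emph{not} carry out this translation. Instead it exploits a bootstrapping trick: prove preservation (B) first; then, given a nondeterministic loop $\cc$ with $\vccond(\cc,f)$, use (B) to obtain a \emph{determinization} $\cc'$ of $\cc$ still satisfying $\vccond(\cc',f)$; now apply the existing \emph{deterministic} result of Hark et al.\ to $\cc'$ to get $\wp{\cc'}{f} \egeq \inv$; finally invoke \Cref{thm:refinementProp} to conclude $\awp{\cc}{f} \egeq \wp{\cc'}{f} \egeq \inv$. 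This is precisely the content of \Cref{thm:aiming_low_nondet_loopfree} and its proof.

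So the paper's order of dependence is reversed from yours: (B) is established independently, and (A) is derived from (B) plus the literature. This completely sidesteps the ``main obstacle'' you identify, and is in fact how the paper obtains the novel nondeterministic lower-bound rule as a corollary rather than as a standalone technical result. Your direct route is plausible but would require genuinely redoing the Hark et al.\ proof in the angelic setting, which is more than bookkeeping.
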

\begin{proof}
	Preservation by $\atransT{}$ follows immediately from \Cref{thm:refinementProp} and \Cref{thm:trans_preserves_wpre} and the fact that implementations $\cc'$ of $\cc$ remain dAST and dPAST, respectively. The fact that the providers yield lower bounds for $\AWPSYMBOL$ follows from \cite[Theorem 38]{DBLP:journals/pacmpl/HarkKGK20} (see \Cref{rem:aiming_low}).
\end{proof}
\noindent
The $\VCSYMBOL$-provider $\dastsubinvcond$ is angelically complete w.r.t.\ 
\[\big( \{\cc \in \pgcl ~\mid~ \text{$\cc$ is dAST} \}, \{f \in \expecs ~\mid~ \text{$f$ bounded by some $b\in\PosReals$}\} \big) \]
by annotating every loop in a given program $\cc$ by its respective least fixpoint, which is also bounded by \iftoggle{arxiv}{\Cref{thm:healthiness}.\ref{thm:healthiness_feasible}}{\cite[Theorem B.1.2]{arxiv}}. \Cref{thm:completeness} thus yields completeness of our approach for this class of problem instances. Determining the class for which $\dpastsubinvcond$ is complete is an open problem. Notice that $\dpastsubinvcond$ does not require $\inv$ or $f$ to be bounded which comes at the cost of requiring the stronger termination criterion dPAST and being suitable for optional stopping. 

\subsubsection{Lower Bounds for $\AWPSYMBOL$}
	\label{rem:aiming_low}
	The fact that $\dpastsubinvcond$ yields lower bounds for $\AWPSYMBOL$ (cf.\ \Cref{thm:subinvcond}.\ref{thm:dpastsubinvcond}), follows from a result from the literature which applies to \emph{deterministic} $\pgcl$ programs \cite[Theorem 38]{DBLP:journals/pacmpl/HarkKGK20}. \citet{DBLP:journals/pacmpl/HarkKGK20} leave an extension of this rule for \emph{non}deterministic programs for future work. Our results from \Cref{sec:trans} yield such an extension, i.e., a novel proof rule for lower bounds on $\AWPSYMBOL$'s if possibly nondeterministic loops. First, consider non-nested loops:

	\begin{theorem}
		\label{thm:aiming_low_nondet_loopfree}
		Let $\cc = \WHILEDOINV{\guard}{\ccbody}{\inv}$ and $f \in \expecs$ with $\ccbody$ loop-free. If
		\begin{enumerate}
			\item\label{thm:aiming_low_nondet_loopfree1} $\iverson{\guard}\cdot \awp{\ccbody}{\inv} + \iverson{\neg\guard}\cdot f \egeq \inv$, and
			\item\label{thm:aiming_low_nondet_loopfree2} $\cc$ is dPAST, and
			\item\label{thm:aiming_low_nondet_loopfree3} $\cc$ is suitable for optional for stopping w.r.t.\ $f$ (cf.\ \iftoggle{arxiv}{\Cref{app:optional_stopping}}{\cite[Appendix B]{arxiv}}), 
		\end{enumerate}
	then $\inv$ lower-bounds $\awp{\cc}{f}$, i.e.,
	\[
			\awp{\cc}{f} \eegeq \inv~.
	\]
	\end{theorem}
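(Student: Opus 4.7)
The plan is to reduce the nondeterministic case to the deterministic Hark et al.\ result [Theorem 38] by passing to an appropriate deterministic implementation obtained from the angelic transformation $\atransT{\AWPSYMBOL}$, and then to transfer the resulting lower bound back to $\cc$ using the monotonicity of $\AWPSYMBOL$ under refinement (\Cref{thm:refinementProp}). Concretely, I view the statement as a direct consequence of the framework: condition~(\ref{thm:aiming_low_nondet_loopfree1}) is exactly the $\AWPSYMBOL$-subinvariant clause in $\dpastsubinvcond$ (because $\ccbody$ is loop-free, so $\awpre{\ccbody}{\inv} = \awp{\ccbody}{\inv}$), so together with~(\ref{thm:aiming_low_nondet_loopfree2}) and~(\ref{thm:aiming_low_nondet_loopfree3}) we have $\dpastsubinvcond(\cc,f) = \true$, and $\awpre{\cc}{f} = \inv$ by definition of $\SOMEWPRESYMBOL$ on loops.

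The heart of the proof lies in establishing that $\dpastsubinvcond$ yields lower bounds for $\AWPSYMBOL$ in the presence of nondeterminism (\Cref{thm:subinvcond}.\ref{thm:dpastsubinvcond}). My plan is to do this via the following reduction. First, form $\cc^* \defeq \atrans{\AWPSYMBOL}{\cc}{f}$ and pick any determinization $\cc^{**}$ of $\cc^*$ (e.g., by turning one inequality in each strengthened guard into a strict one, as in \Cref{ex:loopFreeTrans}). By \Cref{thm:trans_refines} and transitivity of $\refines$, $\cc^{**} \refines \cc$. Second, show that $\cc^{**}$ inherits all three hypotheses:
\begin{enumerate}
\item\emph{Subinvariance.} By \Cref{thm:trans_preserves_wpre}, $\awpre{\cc^{**}}{f} = \awpre{\cc}{f}$. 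Unfolding the outer $\SOMEWPRESYMBOL$ on the loop and using that the body is loop-free (so $\awpre{\ccbody^{**}}{\inv} = \wp{\ccbody^{**}}{\inv}$ since $\cc^{**}$ is deterministic), this yields $\iverson{\guard}\cdot \wp{\ccbody^{**}}{\inv} + \iverson{\neg\guard}\cdot f \egeq \inv$, so $\inv$ is a $\WPSYMBOL$-subinvariant of the deterministic loop $\cc^{**}$.
\item\emph{dPAST.} Since refinement only restricts nondeterministic choices, the demonic expected runtime of $\cc^{**}$ is bounded above by that of $\cc$, hence finite.
\item\emph{Optional stopping.} The criterion from \Cref{app:optional_stopping} involves uniform moment/integrability bounds on the random variables induced by $f$ and $\inv$ along executions; these bounds are preserved when passing to any subfamily of executions, i.e., under $\refines$.
\end{enumerate}
Third, apply the deterministic optional stopping rule [Hark et al., Theorem~38] to $\cc^{**}$ to obtain $\wp{\cc^{**}}{f} \egeq \inv$. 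Finally, chain the inequalities: by \Cref{thm:refinementProp} and $\cc^{**} \refines \cc$, we have $\awp{\cc}{f} \egeq \awp{\cc^{**}}{f}$, and since $\cc^{**}$ is deterministic, $\awp{\cc^{**}}{f} = \wp{\cc^{**}}{f}$; therefore $\awp{\cc}{f} \egeq \inv$, as desired.

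The main obstacle is step~3 of the inheritance argument: making precise that suitability for optional stopping transfers from $\cc$ to $\cc^{**}$. The other two inheritance properties follow immediately from the framework, but the optional stopping criterion depends on quantitative bounds (expected absolute increments, conditional-difference bounds) along the induced stochastic process. I expect the argument to be: any path of $\cc^{**}$ is also a path of $\cc$ with the same probability (after normalization by the strategy chosen in $\cc$), so any uniform bound valid for all schedulers of $\cc$ specializes to the single scheduler realized by $\cc^{**}$. Making this completely rigorous will require unfolding the operational semantics of \Cref{sec:mdp_semantics} and the precise form of the optional stopping side condition from the appendix; all other steps are essentially bookkeeping within the framework already established.
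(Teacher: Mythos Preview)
Your approach is essentially the paper's: recognize that the hypotheses amount to $\dpastsubinvcond(\cc,f)=\true$, pass to a determinization via $\atransT{}$, show the determinization still satisfies $\dpastsubinvcond$, apply Hark et al.'s deterministic rule, and transfer back via \Cref{thm:refinementProp}. Two small repairs are worth making.

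First, in your subinvariance step, applying \Cref{thm:trans_preserves_wpre} to the whole loop only yields the trivial identity $\awpre{\cc^{**}}{f}=\inv=\awpre{\cc}{f}$, since $\SOMEWPRESYMBOL$ on a loop just returns its annotated invariant; this does not ``unfold'' to anything about the body. The right move is to apply \Cref{thm:trans_preserves_wpre} to the \emph{body}: from $\ccbody^{**}\refines\atrans{}{\ccbody}{\inv}$ and loop-freeness you get $\wp{\ccbody^{**}}{\inv}=\awpre{\ccbody^{**}}{\inv}=\awpre{\ccbody}{\inv}=\awp{\ccbody}{\inv}$, and then hypothesis~(\ref{thm:aiming_low_nondet_loopfree1}) transfers directly.

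Second, the optional-stopping preservation you flag as the main obstacle is in fact easier than you fear and needs no operational unfolding. With the definition in \Cref{app:optional_stopping}: clause~(i) concerns only the shape of $\inv$ and is unchanged; clause~(ii) follows from the body-level equality $\awp{\ccbody^{**}}{\inv}=\awp{\ccbody}{\inv}$ just established; and clause~(iii), the demonic conditional difference bound, follows by applying \Cref{thm:refinementProp} to the body with postexpectation $|\inv-\inv(\sigma)|$, giving $\awp{\ccbody^{**}}{|\inv-\inv(\sigma)|}\eleq\awp{\ccbody}{|\inv-\inv(\sigma)|}$. The paper compresses all of this into the single statement ``$\atransT{}$ preserves $\dpastsubinvcond$'' (\Cref{thm:subinvcond}).
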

\begin{proof}
	First notice that the conjunction of Conditions \ref{thm:aiming_low_nondet_loopfree1}, \ref{thm:aiming_low_nondet_loopfree2}, and \ref{thm:aiming_low_nondet_loopfree3} from above is equivalent to $\dpastsubinvcond(C, f) = \true$ (cf.\ \Cref{thm:subinvcond}.\ref{thm:dpastsubinvcond}).
	Since $\atransT{}$ preserves $\dpastsubinvcond$, there is a determinization $\cc'$ of $\cc$ satisfying $\dpastsubinvcond(\cc', f)$. Hence \cite[Theorem 38]{DBLP:journals/pacmpl/HarkKGK20} \emph{does} apply to this deterministic program $\cc'$ and we get 
	\begin{align*}
	\awp{\cc}{f} \quad\substack{{\normalsize \text{Lem.\ \ref{thm:refinementProp}}} \\ {\normalsize \egeq}}\quad \wp{\cc'}{f} 
	\quad\substack{{\normalsize \text{\cite[Theorem 38]{DBLP:journals/pacmpl/HarkKGK20}}} \\ {\normalsize\eegeq}}\quad
	\inv
	~.
\end{align*}
\end{proof}
\begin{example}
	Let $f = x$ and $\inv = \iverson{c=1}\cdot(x+2) + \iverson{c \neq 1}\cdot x$ and let $\cc$ be the loop
	\[
		\WHILEDOINV{c= 1}{\PCHOICE{\ASSIGN{c}{0}
			}{0.5}{
		\GC{\true}{\ASSIGN{x}{x+1}}{\true}{\ASSIGN{x}{x+2}}
		}}{\inv}~.
	\]
	We employ \Cref{thm:aiming_low_nondet_loopfree} to prove that $\awp{\cc}{x} \egeq \inv$.
	Regarding Conditions \ref{thm:aiming_low_nondet_loopfree3} and \ref{thm:aiming_low_nondet_loopfree2}, it is easy to verify that $\cc$ is dPAST (using, e.g., the $\textsf{ert}$-calculus \cite{ert}) and suitable for optional stopping w.r.t.\ to $x$. For Condition \ref{thm:aiming_low_nondet_loopfree1}, we calculate
	\begin{align*}
		& \iverson{c= 1}\cdot \awp{\ccbody}{\inv} + \iverson{c \neq 1}\cdot x \\
		\eeq &\iverson{c= 1}\cdot 0.5 \cdot \big(\expsubs{\inv}{c}{0} + (\expsubs{\inv}{x}{x+1} \sqcup \expsubs{\inv}{x}{x+2} )\big)+ \iverson{c \neq 1}\cdot x \\
		\eeq &\iverson{c= 1}\cdot 0.5 \cdot \big(x + ((x+3) \sqcup (x+4) )\big)+ \iverson{c \neq 1}\cdot x \\
		\eeq &\iverson{c= 1}\cdot 0.5 \cdot (2\cdot x +4)+ \iverson{c \neq 1}\cdot x \eeq \inv \eegeq \inv~.
		\tag*{$\triangle$}
	\end{align*}
\end{example}

\noindent
By recursively applying \Cref{thm:aiming_low_nondet_loopfree} and determinizing inner loops, this generalizes \mbox{to nested loops:}
\begin{theorem}
	Let $\cc = \WHILEDOINV{\guard}{\ccbody}{\inv}$ be a (possibly nested) loop and let $f \in \expecs$. We have 
	\[
	\exptrans{\VCSYMBOLPARA{\dpastsubinvcond}{\AWPSYMBOL}}{\cc}{f}
	\qquad\text{implies}\qquad 
	\awp{\cc}{f} \eegeq \inv~.
	\]
\end{theorem}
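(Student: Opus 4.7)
The statement is essentially a direct corollary of the machinery already set up, with the nested-loop case handled implicitly by the verification-condition generator. My plan is to invoke \Cref{thm:correct_transformations} with the $\VCSYMBOL$-provider $\vccond = \dpastsubinvcond$, since by \Cref{thm:subinvcond}.\ref{thm:dpastsubinvcond} this provider (i) yields lower bounds for $\AWPSYMBOL$ and (ii) is preserved by $\atransT{}$. These are exactly the two ingredients required by the second half of \Cref{thm:correct_transformations}.

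Concretely, I would proceed as follows. First, set $\cc' \defeq \atrans{\AWPSYMBOL}{\cc}{f}$; since the implementation relation $\refines$ is a partial order, in particular reflexive, we have $\cc' \refines \atrans{\AWPSYMBOL}{\cc}{f}$. Combining this with the hypothesis $\vcawp{\cc}{f}$, \Cref{thm:correct_transformations} yields the chain
\[
  \awp{\cc}{f} \eegeq \awp{\cc'}{f} \eegeq \awpre{\cc}{f}~.
\]
Now observe that $\cc$ is itself a loop $\WHILEDOINV{\guard}{\ccbody}{\inv}$, so by the very definition of the auxiliary transformer $\AWPRESYMBOL$ (which replaces the least fixpoint by the annotated invariant), we have $\awpre{\cc}{f} = \inv$. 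Chaining these inequalities gives the desired bound $\awp{\cc}{f} \eegeq \inv$.

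\textbf{Where the real work sits.} Note that the statement involves \emph{possibly nested} loops, but my proof never has to explicitly recurse over the nesting structure: that recursion is already baked into the definition of $\vc{\cdot}{\cdot}$ (which propagates through nested loops, recall \Cref{fig:vc}) and into the preservation claim of \Cref{thm:subinvcond}.\ref{thm:dpastsubinvcond}. The genuinely substantive content thus lies in verifying the two hypotheses of \Cref{thm:subinvcond}.\ref{thm:dpastsubinvcond}, both of which have already been discharged earlier in the paper: lower-bound soundness is inherited from \citet{DBLP:journals/pacmpl/HarkKGK20} via the determinization $\atrans{\AWPSYMBOL}{\cc}{f}$ (which is dPAST and suitable for optional stopping whenever $\cc$ is, because implementations preserve these properties), and preservation by $\atransT{}$ follows from \Cref{thm:refinementProp} together with \Cref{thm:trans_preserves_wpre}. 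The main conceptual obstacle that could arise is ensuring that dPAST and the optional-stopping criterion are stable under refinement; this is where one must be careful, but the fact was already established in the proof of \Cref{thm:subinvcond}, so here we simply cite it.
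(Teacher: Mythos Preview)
Your argument is circular. To invoke \Cref{thm:correct_transformations} you need, as a hypothesis, that $\dpastsubinvcond$ \emph{yields lower bounds for} $\AWPSYMBOL$; you cite \Cref{thm:subinvcond}.\ref{thm:dpastsubinvcond} for this. But the proof of that claim in \Cref{thm:subinvcond} does not stand on its own --- it explicitly defers to \Cref{rem:aiming_low}, which is precisely where the present theorem lives. Concretely, by \Cref{lem:vc_loops}, establishing that a $\VCSYMBOL$-provider yields lower bounds amounts to checking, for every loop $\cc$, the implication $\vc{\cc}{f} \Rightarrow \awp{\cc}{f} \egeq \inv$. That is the very statement you are proving. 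So the ``yields lower bounds'' half of \Cref{thm:subinvcond} is not an input you may consume here; this theorem is (part of) its justification.

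The paper avoids the circularity differently: it first proves the non-nested case (\Cref{thm:aiming_low_nondet_loopfree}) using only the \emph{preservation} half of \Cref{thm:subinvcond} --- which is established independently --- together with \cite[Theorem~38]{DBLP:journals/pacmpl/HarkKGK20} applied to the determinized loop. It then extends to nested loops by recursively determinizing inner loops and reapplying \Cref{thm:aiming_low_nondet_loopfree}. Your second paragraph in fact names all the right ingredients (determinization via $\atransT{}$, the Hark et al.\ result, stability of dPAST and optional stopping under refinement), but you treat them as side obligations ``already discharged'' elsewhere. Unwound, they \emph{are} the proof: you need to carry out the recursion over nesting depth explicitly rather than import its conclusion through \Cref{thm:correct_transformations}.
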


\section{Case Studies}
\label{sec:case_studies}

In this section, we apply our framework to synthesize bound-guaranteeing determinizations of $\pgcl$ programs, all of which can be understood as countably infinite-state MDPs.

\subsection{Game of Nim}
\label{sec:nim-case-study}
We consider a variant of the game Nim (e.g.~\cite{wiki:Nim}), a 2-player zero-sum game which goes as follows:
$N$ tokens are placed on a table.
The players take turns; in each turn, the player has to remove 1, 2, or 3 tokens from the table.
The first player to remove the last token looses the game.

Suppose we are interested in finding a strategy performing reasonably well against an opponent that plays randomly.
We model this situation as the $\pgcl$ program $\CGame$ in \Cref{fig:game-program} on \cpageref{fig:game-program}.
Variable $x$ counts the number of tokens that have been removed so far.
Variable $\varTurn \in \{1,2\}$ indicates which player takes the next token(s).
The randomized opponent is player $1$.
If the program terminates in state where $\varTurn = i$, then player $i$ wins the game.

We claim that the controllable player $2$ can guarantee the following when starting the game with $x < N$ removed tokens:
\begin{itemize}
    \item If it's player $1$'s turn and $x+1 \congmod{4} N$ (i.e., $x+1 -N$ is a multiple of $4$), or if it's player $2$'s turn and $x+1 \ncongmod{4} N$, then player $2$ \emph{can win with probability one}.
    \item If it' player $1$'s turn but $x+1 \ncongmod{4} N$, then player $2$ can still \emph{win with probability $\geq \nicefrac 2 3$}.
\end{itemize}
Formally, the \emph{maximal} probability that player $2$ wins is given by $\awp{\CGame}{\iverson{\varTurn = 2}}$. We wish to find a determinization $\CGame''$ of $\CGame$ --- a strategy for player $2$ --- which guarantees the above lower bound on player $2$'s probability to win, i.e.,
\begin{align*}
     &\iverson{x<N} \cdot 
     \Big(
     \iverson{\POneTurn} \cdot
     \big(
     \iverson{x+1 \congmod{4} N} + \tfrac{2}{3} \cdot \iverson{x+1 \ncongmod{4} N}
     \big)
     ~+~
     \iverson{\PTwoTurn} \cdot \iverson{x+1 \ncongmod{4} N}
     \Big) \\
     \eeleq& \wp{\CGame''}{\iverson{\varTurn = 2}} ~.
     \tag{$\dagger$}\label{eqn:game_spec}
\end{align*}
For that, we annotate $\CGame$ with the subinvariant $\IGame$:
\begin{align*}
    \IGame \ddefeq & \iverson{x<N} \cdot 
        \Big(
            \iverson{\POneTurn} \cdot
                \big(
                    \iverson{x+1 \congmod{4} N} + \tfrac{2}{3} \cdot \iverson{x+1 \ncongmod{4} N}
                \big)
                ~+~
                \iverson{\PTwoTurn} \cdot \iverson{x+1 \ncongmod{4} N}
        \Big)\\
    & \qquad ~+~ \iverson{x \geq  N} \cdot \iverson{\PTwoTurn}
    \tag{see \iftoggle{arxiv}{\Cref{sec:nim-verification}}{\cite{arxiv}} for a proof of subinvariance}
\end{align*}
Applying $\atrans{}{\CGame}{\iverson{\varTurn = 2}}$, we obtain $\CGame' \refines \CGame$ shown in \Cref{fig:game-program-tamed} on \cpageref{fig:game-program-tamed}. Since $\CNim$ is clearly dAST (it iterates at most $N - x$ times), we may employ the $\VCSYMBOL$-provider from \Cref{thm:subinvcond}.\ref{thm:dastsubinvcond} to conclude that $\CGame'$ represents a permissive controller for player $2$ which guarantees the above bound, i.e., \emph{each} deterministic $\CGame'' \refines \CGame'$ satisfies the requirement (\ref{eqn:game_spec}).

\subsection{Optimal Gambling}

\subsubsection{Maximizing the Winning Probability.}
\label{sec:case_study_chain}

Consider the following gambling situation.
A gambler has to collect $N$ tokens to win a prize.
The game is played in rounds:
In each round, the gambler has to choose between flipping two coins with different biases.
Suppose the coins yield heads with probability $p$ and $q$, respectively.
If the result of the coin flip is tails, then the game is immediately lost.
Otherwise, the gambler wins one token in case of the bias-$p$ coin, and two tokens in case of the bias-$q$ coin.
The goal is to maximize the probability of winning given an initial budget $c$ of tokens, and the game parameters $p$, $q$, and $N$.
We briefly describe the optimal way to play this game (note that the probability to win two tokens in consecutive rounds with the bias-$p$ coin is $p^2$):
\begin{itemize}
	\item If $p^2 \geq q$, then playing with the bias-$p$ coin is optimal.
	\item Similarly, if $p \leq q$, then the gambler should always choose the bias-$q$ coin.
	\item Otherwise $p^2 < q < p$.
	In this case, the optimal choice depends on the current budget $c$:
	If only $N-c = 1$ token is needed to win the game, then it is better to choose the bias-$p$ coin.
	More generally, if $N-c$ is an odd number, then the best strategy is to play once with the bias-$p$ coin and up to $\nicefrac{(N-c-1)}{2}$ times with the bias-$q$ coin (the order is irrelevant).
\end{itemize}
\noindent
The gamble is readily modeled as the $\pgcl$ program $\CChain$ in \Cref{fig:chain-gamble} (we assume that $c, N \in \Nats$, $a \in \{0,1\}$, and $p,q \in [0,1]$).
Note that the game is lost as soon as $a = 1$.
Finding an \emph{optimal} gambling strategy that \emph{minimizes} the probability of losing thus amounts to finding 
\[
\text{a deterministic }\CChain^{\mathit{det}} \refines \CChain \qquad\text{such that}\qquad \wp{\CChain^{\mathit{det}}}{\iverson{a=1}} \eeq \dwp{\CChain}{\iverson{a=1}}
~.
\]
For that, we annotate $\CChain$ with the superinvariant $\IChain$:

\begin{align*}
	\IChain \ddefeq
	\iverson{a=0} \cdot \biggl( 1 ~-~ \Bigl( &\iverson{p \leq q}\cdot q^{\ceil*{\nicefrac{(N-c)}{2}}} 
	~+~ \iverson{q \leq p^2}\cdot p^{N-c} \\
	&\quad ~+~ \iverson{p^2<q<p}\cdot p^{\iverson{N-c>0}\iverson{N-c \textit{ is odd}}} \cdot q^{\floor*{\nicefrac{(N-c)}{2}}}
	\Bigr)\biggr)
	~+~ \iverson{a=1}
\end{align*}

\noindent
See \iftoggle{arxiv}{\Cref{sec:chain-verification}}{\cite{arxiv}} for a proof of superinvariance. Now, applying $\dtrans{}{\CChain}{\iverson{a=1}}$, we obtain $\CChain' \refines \CChain$ shown in \Cref{fig:chain-gamble-tamed}. We may thus apply the $\VCSYMBOL$-provider from \Cref{thm:superinvcond} to conclude that $\dwp{\CChain}{\iverson{a=1}} \eleq \dwp{\CChain'}{\iverson{a=1}}$. In fact, it can be shown that $\IChain$ is the \emph{exact} least fixpoint\footnote{We have $\dwp{\IChain}{\iverson{a=1}} \geq \IChain$ because $\IChain$ is a $\DWPSYMBOL$-subinvariant of $\CChain$, $\CChain$ is dAST (the loop is executed at most $N - c$ times, regardless of the gambler's choices), and the postexpectation $\iverson{a=1}$ is bounded~\cite{DBLP:phd/dnb/Kaminski19}.} of $\CChain$, so we obtain that 
\begin{quote}
	\emph{$\CChain'$ is a correct-by-construction \underline{o}p\underline{timal im}p\underline{lementation} of $\CChain$ w.r.t.\ minimizing the probability that $a=1$ holds when the program terminates.}
\end{quote}
Observe that program $\CChain'$ indeed represents the informal description of an optimal strategy given above.
$\CChain'$ is still nondeterministic, e.g.\ if $p^2 < q < p$ and $N-c = 3$, then both choices are enabled.
However, \Cref{thm:correct_transformations} ensures that \emph{every} determinization $\CChain^{\mathit{det}} \refines \CChain'$ satisfies $\wp{\CChain^{\mathit{det}}}{\iverson{a=1}} = \dwp{\CChain}{\iverson{a=1}}$ as desired.

\begin{figure}[tb]
	\small
	\begin{minipage}[t]{0.4\textwidth}
		\begin{align*}
			&\WHILEDOINVOPEN{c<N \wedge a=0}\\
			&\qquad \IFSYMBOL~\true \phantom{p^2}\\
			&\qquad\qquad \GCARROW \PCHOICE{\ASSIGN{c}{c+1}}{p}{\ASSIGN{a}{1}} \\
			&\qquad \IFSYMBOL~\true \phantom{p^2} \\
			&\qquad\qquad \GCARROW \PCHOICE{\ASSIGN{c}{c+2}}{q}{\ASSIGN{a}{1}} \\
			&\WHILEDOINVCLOSE{\IChain}
		\end{align*}
		\caption{
			Program $\CChain$ modeling a gamble.
		}
		\label{fig:chain-gamble}
	\end{minipage}
	\hfill
	\begin{minipage}[t]{0.55\textwidth}
		\begin{align*}
			&\WHILEDOINVOPEN{c<N \wedge a=0}\\
			&\qquad \IFSYMBOL~\alertannocolor{(q\leq p^2) \llor (p^2<q<p \wedge N-c \textit{ is odd})} \\
			&\qquad\qquad \GCARROW \PCHOICE{\ASSIGN{c}{c+1}}{p}{\ASSIGN{a}{1}} \\
			&\qquad \IFSYMBOL~\alertannocolor{(p\leq q) \llor (q=p^2) \llor ( p^2<q<p \land N-c \geq 2)} \\
			&\qquad\qquad \GCARROW \PCHOICE{\ASSIGN{c}{c+2}}{q}{\ASSIGN{a}{1}} \\
			&\WHILEDOINVCLOSE{\IChain}
		\end{align*}
		\caption{
			Program $\CChain'$ encoding all optimal strategies.
		}
		\label{fig:chain-gamble-tamed}
	\end{minipage}
\end{figure}

\subsubsection{Maximizing the Expected Payoff.}
\label{sec:case_study_geo}
%
Consider a variant of the above game where the gambler receives their current number of tokens $c$ as a prize once the variable $a$ becomes $0$ (modeled by program $\CGeo$ in \Cref{fig:geo-program}). We synthesize a strategy to win at least $\tfrac{p}{1-p} \emax \tfrac{2q}{1-q}$ tokens on expectation (assuming $p,q < 1$). For that, we annotatate the loop with subinvariant
%
\begin{align*}
	\IGeo
	\ddefeq
	\iverson{a=1}\cdot c + \iverson{a = 0} \cdot \left( c + \big( 
	\tfrac{p}{1-p} ~\emax~ \tfrac{2q}{1-q} \big) \right)
\end{align*}
and show in \iftoggle{arxiv}{\Cref{sec:geo-verification}}{\cite{arxiv}} that we may employ the $\VCSYMBOL$-provider from \Cref{thm:subinvcond}.\ref{thm:dpastsubinvcond}. Hence, \Cref{thm:correct_transformations} yields that \emph{any} determinization $\CGeo''$ of $\CGeo' = \atrans{}{\CGeo}{c}$ shown in \Cref{fig:geo-program-tamed} satisfies $\wp{\CGeo''}{c} \egeq \tfrac{p}{1-p} \emax \tfrac{2q}{1-q}$, i.e., $\CGeo''$ indeed realizes the lower-bound on the expected payoff of the game we aimed for.

\begin{figure}[tb]
	\small
	\begin{minipage}[t]{0.4\textwidth}
		\begin{align*}
			&\COMPOSE{\ASSIGN{c}{0}}{\ASSIGN{a}{0}}\fatsemi \\
			&\WHILEDOINVOPEN{a = 0} \\
			&\qquad\GCFST{\true}{\PCHOICE{\ASSIGN{c}{c+1}}{p}{\ASSIGN{a}{1}}}\\
			&\qquad\GCSEC{\true}{\PCHOICE{\ASSIGN{c}{c+2}}{q}{\ASSIGN{a}{1}}}\\
			&\WHILEDOINVCLOSE{\IGeo}
		\end{align*}
		\caption{$\CGeo$}
		\label{fig:geo-program}
	\end{minipage}
	\hfill
	\begin{minipage}[t]{0.55\textwidth}
		\begin{align*}
			&\COMPOSE{\ASSIGN{c}{0}}{\ASSIGN{a}{0}}\fatsemi \\
			&\WHILEDOINVOPEN{a = 0}\\
			&\quad\GCFST{\alertannocolor{2q(1-p) \leq p(1-q)}}{\PCHOICE{\ASSIGN{c}{c+1}}{p}{\ASSIGN{a}{1}}}\\
			&\quad\GCSEC{\alertannocolor{2q(1-p) \geq p(1-q)}}{\PCHOICE{\ASSIGN{c}{c+2}}{q}{\ASSIGN{a}{1}}}\\
			&\WHILEDOINVCLOSE{\IGeo}
		\end{align*}
		\caption{$\CGeo$ after transformation, assuming $p,q < 1$.}
		\label{fig:geo-program-tamed}
	\end{minipage}
\end{figure}

\section{Related Work}
\label{sec:rel_work}

\paragraph{Strategy Synthesis in Markov Decision Processes}

MDPs have a rich mathematical theory~\cite{DBLP:books/wi/Puterman94} and widespread applications across different fields.
In machine learning, the well-known \emph{reinforcement learning} problem is typically phrased in terms of MDPs, see e.g.~\cite{DBLP:books/sp/12/OtterloW12}.
However, RL usually does not provide strict guarantees about the resulting strategy.
\emph{Exact analysis} of MDPs ---i.e., finding \emph{provably} correct strategies as we do in this paper--- is one of the primary problems studied in the Probabilistic Model Checking (PMC) community, see~\cite{DBLP:conf/lics/Katoen16} and references therein for an overview.
PMC tools such as \textsf{PRISM}~\cite{DBLP:conf/cpe/KwiatkowskaNP02} or \textsf{Storm}~\cite{DBLP:journals/sttt/HenselJKQV22} support strategy synthesis for MDPs given in $\pgcl$-like modeling languages.
Compact symbolic representations of such strategies have been studied as well~\cite{DBLP:conf/hybrid/AshokJJKWZ20}.
The main difference to our approach is that these tools, and in fact most of the PMC literature, support only \emph{finite} MDPs and work by exploring the full state space.
Several subclasses of infinite MDPs have also been studied, including solvency games~\cite{DBLP:conf/fsttcs/BergerKSV08}, 1-counter and recursive MDPs \cite{DBLP:conf/soda/BrazdilBEKW10,DBLP:journals/jacm/EtessamiY15}, or parametric MDPs~\cite{DBLP:journals/jcss/JungesK0W21}.
Some of these works yield efficient algorithms, e.g.\ deciding if a target state can be reached with probability one in an MDP with one unbounded counter is decidable in PTIME~\cite{DBLP:conf/soda/BrazdilBEKW10}.
Our $\pgcl$ programs subsume these models, but their high expressivity comes at the cost of general undecidability.

Beyond PMC, researchers in AI have studied \emph{symbolic dynamic programming}~\cite{DBLP:conf/ijcai/BoutilierRP01,DBLP:conf/uai/SannerDB11}, a class of \emph{logic}-based representations and solution methods for MDPs.
These methods can be seen as a symbolic variant of value iteration.
In contrast to that, our approach is based on programs and uses invariants rather than explicit iteration.

\paragraph{Program Refinement}
Our approach is related to program refinement, where, originally, the goal is to refine an abstract model or specification to an executable (non-probabilistic) program~\cite{DBLP:books/daglib/0096285,DBLP:books/daglib/0024570,tlaplus}.  Later, the concept of program refinement has been generalized to the probabilistic setting \cite{McIverM05,DBLP:journals/sosym/AouadhiDL19,DBLP:conf/zum/HoangJRMM05}. Our problem statement can be understood as a program refinement problem: Given a nondeterministic $\pgcl$ program $\cc$ (the abstract model) satisfying some specification (i.e., $\dwp{\cc}{f} \eleq g$ or $\awp{\cc}{f} \egeq g$), refine $\cc$ to (a determinization) $\cc'$ which preserves this specification. The aforementioned approaches typically allow for more general specifications, which, however, comes with the loss of mechanizability. Given loop invariants satisfying their respective verification conditions, our approach is highly constructive as we obtain refinements in a syntactic manner. Rather than being a formal system for deriving provably correct algorithms, our approach is thus more tailored to a planning/AI setting, where the nondeterminism models different ways for an agent/adversary to behave, and where one wants to find suitable strategies.

Finally, \cite{DBLP:journals/corr/Mamouras16} uses a variant of Hoare logic to find strategies for (non-stochastic) games. \citet{DBLP:journals/corr/Mamouras16}, however, works in an uninterpreted and purely qualtitative setting. It is unclear how to adapt this work to the quantitative probabilistic setting.

\section{Conclusion \& Future Work}
\label{sec:conclusion}
We have presented a framework for obtaining strategies for nondeterministic probabilistic programs by means of deductive verification techniques. Several instances of this framework alongside with case studies demonstrate the applicability of our approach. Future work includes mechanizing our techniques in proof assistants~\cite{DBLP:series/natosec/0001SS17} and deductive verification infrastructures~\cite{heyvl}. Moreover, we plan to extend existing invariant synthesis techniques for probabilistic programs~\cite{DBLP:conf/tacas/BatzCJKKM23,DBLP:conf/cav/BaoTPHR22,DBLP:conf/sas/KatoenMMM10,DBLP:conf/atva/FengZJZX17}  to support nondeterminism. This will yield our approach to produce program-level strategies in a \emph{fully automated manner}. In addition to that, our framework could also be instantiated with more advanced proof rules for loops (see, e.g.,~\cite{kinduction}), which often yield simpler invariants.
Finally, we intend to transfer our technique to non-functional requirements such as expected runtimes \cite{ert,aert} or more general \emph{weighted programs}~\cite{DBLP:journals/pacmpl/BatzGKKW22}.

\iftoggle{arxiv}{}{
\section*{Data Availability Statement}
A full version of this article is available online~\cite{arxiv}.
}

\begin{acks}                            
The authors thank Annabelle McIver and Carroll Morgan for their pointers to existing literature on probabilistic program refinement.
\tw{What about grantsponsor macros (see latex comments)?}
\end{acks}

\bibliographystyle{ACM-Reference-Format}
\bibliography{bibfile}

\iftoggle{arxiv}{
    \allowdisplaybreaks
    \appendix
    \newpage
    %
    \section{Omitted Proofs}
\subsection{Proof of~\Cref{thm:refinementProp}}
\label{proof:refinementProp}
\refinementProp*
\begin{proof}
    By induction on the structure of $\cc$. For all cases except for guarded choice, the claim follows straightforwardly by the I.H.\ and monotonicity of the respective transformer (cf.\ \Cref{thm:healthiness}.\ref{thm:healthiness_monotonic}). For $\cc = \GC{\guard_1}{\cc_1}{\guard_2}{\cc_2}$, consider the following:
    By definition of $\refines$, $\cc'$ is  of the form $\cc' = \GC{\guard_1'}{\cc_1'}{\guard_2'}{\cc_2'}$ and with $\guard_1'\entails \guard_1$, $\guard_2'\entails\guard_2$, and $\entails (\guard_1'\vee\guard_2')$. Now fix some $\sigma \in \States$. We proceed by a case distinction:
    
    
    If $\sigma \models\guard_1$ and $\sigma\not\models\guard_2$, then also $\sigma\models\guard_1'$ and $\sigma\not\models\guard_2'$, which yields equality of the respective preexpectations.
    
    The case $\sigma\not\models\guard_1$ and $\sigma\models\guard_2$ is analogous to the above.
    
    Finally, assume $\sigma \models\guard_1$ and $\sigma\models\guard_2$ and consider the $\DWPSYMBOL$ transformer (the argument for the $\AWPSYMBOL$ is analogous).
    We have 
    \begin{align*}
    	& \dwp{\cc}{f}(\sigma) \\
    	\eeq & \min \{\dwp{\cc_1}{f}(\sigma), \dwp{\cc_2}{f}(\sigma)  \} \\
    	\lleq & \min \{\dwp{\cc_1'}{f}(\sigma), \dwp{\cc_2'}{f}(\sigma)  \} 
    	\tag{I.H.\ on $\cc_1$ and $\cc_2$} \\
    	\lleq & \dwp{\cc'}{f}(\sigma)~.
    	\tag{if $A \subseteq B$, then $\min B \leq \min A$}
    \end{align*}

\end{proof}

\subsection{Proof of \Cref{thm:trans_refines}}
\label{proof:trans_refines}
\transrefines*

\begin{proof}
		By induction on $\cc$. For the base cases, there is nothing to show. All composite cases except for the guarded choice follow immediately from the I.H.\ (notice that the claim holds for \emph{arbitrary} postexpectations). Now let $\cc = \GC{\guard_1}{\cc_1}{\guard_2}{\cc_2}$. First, observe that 
	\[ 
	\trans{\comprel}{\SOMEWPSYMBOL}{\cc_1}{f} \refines \cc_1  \qquad\text{and}\qquad \trans{\comprel}{\SOMEWPSYMBOL}{\cc_2}{f} \refines \cc_2
	\]
	hold by the I.H. Now, abbreviating the guards occurring in $\trans{\comprel}{\SOMEWPSYMBOL}{\cc}{f}$ by
	\[
	\guardb_1 = \guard_1 \land (\guard_2 \limpl \exptransT{\SOMEWPRESYMBOL}{\cc_1}(f) \comprel \exptransT{\SOMEWPRESYMBOL}{\cc_2}(f))
	\quad\text{and}\quad
	\guardb_2 = \guard_2 \land (\guard_1 \limpl \exptransT{\SOMEWPRESYMBOL}{\cc_2}(f) \comprel \exptransT{\SOMEWPRESYMBOL}{\cc_1}(f))~,
	\]
	respectively, the entailments $\guardb_1\entails \guard_1$ and $\guardb_2 \entails \guard_2$ hold trivially. For $\entails (\guardb_1 \vee \guardb_2)$, consider the following: for every state $\sigma$, we have
	\[
	\sigma ~{}\models{}~ \exptransT{\SOMEWPRESYMBOL}{\cc_1}(f) \comprel \exptransT{\SOMEWPRESYMBOL}{\cc_2}(f) ~{}\vee{}~ \exptransT{\SOMEWPRESYMBOL}{\cc_2}(f) \comprel \exptransT{\SOMEWPRESYMBOL}{\cc_1}(f)
	\] 
	Hence, if $\entails \guard_1 \vee \guard_2$, then also $\entails \guardb_1 \vee \guardb_2$. \\
\end{proof}

\subsection{Proof of~\Cref{thm:trans_preserves_wpre}}
\label{proof:trans_preserves_wpre}
\transpreserveswpre*
\begin{proof}
		We prove the claim for $\SOMEWPSYMBOL= \DWPSYMBOL$. The reasoning for $\AWPSYMBOL$ is dual. We proceed by induction on $\cc$. For the base cases, there is nothing to show. \\
	\noindent
	\emph{The case $\cc = \COMPOSE{\cc_1}{\cc_2}$.}
	First notice that $\cc' \refines \dtrans{\SOMEWPSYMBOL}{\cc}{f}$ implies  $\cc' \eeq \COMPOSE{\cc_1'}{\cc_2'}$ for some $\cc_1'\refines \dtrans{\SOMEWPSYMBOL}{\cc_1}{\exptransT{\SOMEWPRESYMBOL}{\cc_2}(f)}$ and $\cc_2' \refines \dtrans{\SOMEWPSYMBOL}{\cc_2}{f}$. Hence,
	\begin{align*}
		&\somewpre{\cc'}{f} \\
		\eeq& \somewpre{\COMPOSE{\cc_1'}{\cc_2'}}{f} \\
		\eeq& \somewpre{\cc_1'}{\somewpre{\cc_2'}{f}} \\
		\eeq&\somewpre{\cc_1'}{\somewpre{\cc_2}{f}}
		\tag{I.H.\ on $\cc_2$} \\
		\eeq&\somewpre{\cc_1}{\somewpre{\cc_2}{f}}
		\tag{I.H.\ on $\cc_1$} \\
		\eeq & \somewpre{\COMPOSE{\cc_1}{\cc_2}}{f}  \\
		\eeq & \somewpre{\cc}{f} ~.
	\end{align*}
	\emph{The case $\cc = \GC{\guard_1}{\cc_1}{\guard_2}{\cc_2}$.} Abbreviate the guards occurring in $\dtrans{\SOMEWPSYMBOL}{\cc}{f}$ by 
	\[
	\guardb_1 = \guard_1 \land (\guard_2 \limpl \exptransT{\SOMEWPREONESYMBOL}{\cc_1}(f) \predeleq \exptransT{\SOMEWPREONESYMBOL}{\cc_2}(f))
	\quad\text{and}\quad
	\guardb_2 = \guard_2 \land (\guard_1 \limpl \exptransT{\SOMEWPREONESYMBOL}{\cc_2}(f) \predeleq \exptransT{\SOMEWPREONESYMBOL}{\cc_1}(f))~,
	\]
	respectively. Since $\cc' \refines \dtrans{\SOMEWPSYMBOL}{\cc}{f}$, we have
	\[\cc' \eeq \GC{\guard_1'}{\cc_1'}{\guard_2'}{\cc_2'}\] 
	for some $\cc_1'\refines \dtrans{\SOMEWPSYMBOL}{\cc_1}{f}$ and $\cc_2' \refines \dtrans{\SOMEWPSYMBOL}{\cc_2}{f}$ 
	with $\guard_1' \entails \guardb_1$, $\guard_2'\entails \guardb_2$, and $\entails (\guard_1' \vee \guard_2')$. Moreover, since $\dtrans{\SOMEWPSYMBOL}{\cc}{f} \refines \cc$ by \Cref{thm:trans_refines}, the I.H.\ implies that  $\somewpre{\cc'}{f}(\sigma)$ and $\somewpre{\cc}{f}(\sigma)$ differ only if both $\guard_1$ and $\guard_2$ evaluate to $\true$ under $\sigma$, i.e.,
	\[ 
	\somewpre{\cc'}{f}(\sigma)~{} \neq{}~ \somewpre{\cc}{f}(\sigma)
	\qquad\text{implies}\qquad 
	\sigma \models \guard_1 ~\text{and}~ \sigma \models \guard_2 ~.
	\]
	Hence, it suffices to consider such $\sigma$. We now distinguish the following cases:
	
	If $\somewpre{\cc_1}{f}(\sigma) < \somewpre{\cc_2}{f}(\sigma)$, then $\sigma \models \guardb_1$ and $\sigma \not\models \guardb_2$ and thus $\sigma \models \guard_1'$ and $\sigma \not\models \guard_2'$. Hence,
	\begin{align*}
		& \somewpre{\cc'}{f}(\sigma) \\
		\eeq & \somewpre{\cc_1'}{f}(\sigma)
		\tag{$\sigma \models\guard_1'$ and $\sigma \not\models\guard_2'$} \\
		\eeq & \somewpre{\cc}{f}(\sigma)~.
		\tag{$\somewpre{\cc_1}{f}(\sigma) < \somewpre{\cc_2}{f}(\sigma)$}
	\end{align*}
	The reasoning for the case $\somewpre{\cc_1}{f}(\sigma) > \somewpre{\cc_2}{f}(\sigma)$ is analogous.
	
	Finally, if $\somewpre{\cc_1}{f}(\sigma) = \somewpre{\cc_2}{f}(\sigma)$, then $\sigma \models\guardb_1$ and $\sigma \models\guardb_2$. Moreover, we have $\somewpre{\cc_1'}{f}(\sigma) = \somewpre{\cc_2'}{f}(\sigma)$ by the I.H.\  Since at least one of $\guard_1'$ and $\guard_2'$ is $\true$ under $\sigma$, we get
	\begin{align*}
		&\somewpre{\cc'}{f}(\sigma) \\
		\eeq & \min\{ \somewpre{\cc_1'}{f}(\sigma), \somewpre{\cc_2'}{f}(\sigma) \}
		\tag{both values coincide} \\
		\eeq& \min\{\somewpre{\cc_1}{f}(\sigma), \somewpre{\cc_2}{f}(\sigma)\} 
		\tag{by I.H.}\\
		\eeq & \somewpre{\cc}{f}(\sigma)~.
	\end{align*}
	\noindent
	\emph{The case $\cc = \WHILEDOINV{\guard}{\cc}{\inv}$} is trivial.
\end{proof}
    \section{Omitted Definitions}

\subsection{Formal Construction of the MDP Semantics}
\label{app:mdp_rules}
\begin{figure}[h]
\begin{align*}
  &
  \infer{
    \Exec{\SKIP}{\pstate}{\labtau}{1}{\Term}{\pstate}
  }{
  }
  \quad
  \infer{
    \Exec{\ASSIGN{x}{\aexpr}}{\pstate}{\labtau}{1}{\Term}{\pstate\statesubst{x}{\pstate(\aexpr)}}
  }{
  }
\quad
\infer{
	\Exec{\Term}{\pstate}{\labtau}{1}{\Term}{\pstate}
}{
}
  \\[1ex]
  &
  \infer{
    \Exec{\COMPOSE{\cc_1}{\cc_2}}{\pstate}{\actlab}{\pnum}{\cc_2}{\pstate'}
  }{
    \Exec{\cc_1}{\pstate}{\actlab}{\pnum}{\Term}{\pstate'}
  }
  \quad
  \infer{
    \Exec{\COMPOSE{\cc_1}{\cc_2}}{\pstate}{\actlab}{\pnum}{\COMPOSE{\cc_1'}{\cc_2}}{\pstate'}
  }{
    \Exec{\cc_1}{\pstate}{\actlab}{\pnum}{\cc_1'}{\pstate'}
  }
  \\[1ex]
  &
  \infer{
    \Exec{\GC{\guard_1}{\cc_1}{\guard_2}{\cc_2}}{\pstate}{\labalpha}{1}{\cc_1}{\pstate}
  }{
    \pstate \models \guard_1
  }
  \quad
  \infer{
	\Exec{\GC{\guard_1}{\cc_1}{\guard_2}{\cc_2}}{\pstate}{\labbeta}{1}{\cc_2}{\pstate}
}{
	\pstate \models \guard_2
}
  \\[1ex]
  &
    \infer{
  	    \Exec{\PCHOICE{\cc_1}{\pexpr}{\cc_2}}{\pstate}{\labtau}{\pexpr(\pstate)}{\cc_1}{\pstate}
  	    }{
          \cc_1 \neq \cc_2
  	    }
    \quad
    \infer{
  	    \Exec{\PCHOICE{\cc_1}{\pexpr}{\cc_2}}{\pstate}{\labtau}{1-\pexpr(\pstate)}{\cc_2}{\pstate}
  	    }{
          \cc_1 \neq \cc_2
  	    }
    \quad
    \infer{
        \Exec{\PCHOICE{\cc}{\pexpr}{\cc}}{\pstate}{\labtau}{1}{\cc}{\pstate}
        }{
        }
    \\[1ex]
 & \infer{
    \Exec{\WHILEDO{\guard}{\cc}}{\pstate}{\labtau}{1}{\Term}{\pstate}
  }{
    \pstate \models \neg\guard
  }
  \quad
  \infer{
  \Exec{\WHILEDO{\guard}{\cc}}{\pstate}{\labtau}{1}{\COMPOSE{\cc}{\WHILEDO{\guard}{\cc}}}{\pstate}
  }{
    \pstate \models \guard
  }
\end{align*}
\caption{Inference rules determining the execution relation $\ExecSymbol$.}
\label{table:op}
\end{figure}

In this section we provide the details ommited in \Cref{sec:mdp_semantics}.

The inference rules for determining the execution relation $\ExecSymbol$ are given in \Cref{table:op}.
Let us briefly go over those rules.
The rules for $\SKIP$, assignments, conditionals, and loops are standard.
In each case, the execution proceeds deterministically, hence all transition probabilities are $1$.
For guarded choice statements $\GC{\guard_1}{\cc_1}{\guard_2}{\cc_2}$, the execution relation has one transition for each enabled branch in the current configuration.
Recall that exactly one or both branches are enabled.
The action labels $\labalpha$ and $\labbeta$ are used to mark which of the two branches was taken.
For a probabilistic choice $\PCHOICE{\cc_1}{\pexpr}{\cc_2}$ where $\cc_1 \neq \cc_2$ are different programs, there are two possible executions:
With probability $\pexpr$ we execute $\cc_1$ and with probability $1-\pexpr$, we execute $\cc_2$. 
If the programs $\cc_1$ and $\cc_2$ happen to be equal, i.e., $\cc_1 = \cc_2 = \cc$, then the execution proceeds deterministically with program $\cc$.
Note that action label $\labtau$ is used in all transitions except in those that correspond to executing a guarded choice statement.

\subsection{Suitable for Optional Stopping}
\label{app:optional_stopping}
\cite{DBLP:journals/pacmpl/HarkKGK20} employ the Optional Stopping Theorem to obtain a proof rule for lower bounds on expected outcomes of loops. This proof rule relies on the following side condition: A loop $\WHILEDOINV{\guard}{\ccbody}{\inv}$ is \emph{suitable for optional stopping w.r.t.\ $f$} if (i) $\inv = \iverson{\guard}\cdot \inv' + \iverson{\neg\guard}\cdot f$ for some $\inv' \in \expecs$, (ii) $f$,$\inv$, and $\iverson{\guard}\cdot \awp{\ccbody}{\inv} + \iverson{\neg\guard}\cdot f$ are all pointwise smaller than $\infty$, and (iii) $\inv$ is \emph{demonically conditionally difference bounded}, i.e., there is $b \in \PosReals$ such that for all states $\sigma$ we have $(\iverson{\guard} \cdot \awp{\ccbody}{|\inv - \inv(\sigma)|})(\sigma) \leq b$.\jp{explain why this is called demonically cdb but defined in terms of awp}

\subsection{Facts about $\DWPSYMBOL$, $\AWPSYMBOL$, and $\VCSYMBOL$}

\begin{theorem}[\cite{McIverM05}]
	\label{thm:healthiness}
	Let $\cc \in\pgcl$ and $\SOMEWPSYMBOL \in \{\DWPSYMBOL,\AWPSYMBOL\}$. We have:
	\begin{enumerate}
		\item\label{thm:healthiness_monotonic} 
		$\exptransT{\SOMEWPSYMBOL}{\cc}$ is \emph{monotonic}, i.e., 
		\[
		\text{for all $f,g \in\expecs$}\colon \qquad
		f \eeleq g \quad\text{implies}\quad
		\somewp{\cc}{f} \eeleq \somewp{\cc}{g}~.
		\]
		\item\label{thm:healthiness_feasible} $\exptransT{\SOMEWPSYMBOL}{\cc}$ is \emph{feasible}, i.e., 
		\[
		\text{for all constants $b \in \PosReals$ and all $f \in \expecs$ bounded}\,\footnote{We say that $f$ is bounded by $b$ if $f(\sigma) \leq b$ for all $\sigma \in\States$.}\text{ by $b$}\colon \somewp{\cc}{f} \eleq b~.
		\]
	\end{enumerate}
\end{theorem}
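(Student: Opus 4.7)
The plan is to prove both parts by structural induction on $\cc$, with the loop case being the only genuinely non-trivial one.

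For the base cases $\SKIP$ and $\ASSIGN{x}{\aexpr}$, monotonicity is immediate (an assignment only reindexes the state) and feasibility follows since substitution into a bounded function yields a bounded function. For the non-loop composite constructs, both properties follow from the induction hypothesis together with elementary closure properties of the lattice operations used in \Cref{fig:wp_table}: pointwise minimum $\emin$ and pointwise maximum $\emax$ both preserve $\eleq$ in each argument and preserve boundedness by $b$; the convex combination $p \cdot (\cdot) + (1-p)\cdot(\cdot)$ used for probabilistic choice is monotone in each slot and maps pairs bounded by $b$ to values bounded by $b$ since $p(\sigma), 1-p(\sigma) \in [0,1]$; and sequential composition is handled by applying the I.H.\ twice. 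I would simply unfold each defining clause from \Cref{fig:wp_table} and check these facts in one line per case.

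The only case requiring real argument is the loop $\cc = \WHILEDO{\guard}{\ccbody}$. Let $\Phi_f(X) \defeq [\neg\guard]\cdot f + [\guard]\cdot \somewp{\ccbody}{X}$ denote the characteristic functional, so that $\somewp{\cc}{f} = \lfp \Phi_f$ in the lattice $(\expecs,\eleq)$. For \emph{monotonicity}, assuming $f \eleq g$, I would first argue $\Phi_f(X) \eleq \Phi_g(X)$ for every $X$ (only the $[\neg\guard] \cdot (\cdot)$ summand depends on the postexpectation, and it is clearly monotone in it). Then taking $X \defeq \lfp \Phi_g$ gives $\Phi_f(\lfp \Phi_g) \eleq \Phi_g(\lfp \Phi_g) = \lfp \Phi_g$, so $\lfp \Phi_g$ is a pre-fixpoint of $\Phi_f$, and Park induction (\Cref{thm:dwp_park}) yields $\somewp{\cc}{f} = \lfp \Phi_f \eleq \lfp \Phi_g = \somewp{\cc}{g}$.

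For \emph{feasibility} of the loop, assume $f$ is bounded by $b$, and view the constant $b \in \PosReals$ as an expectation $\lambda \sigma.\, b$. By the induction hypothesis applied to the loop body, $\somewp{\ccbody}{b} \eleq b$. Therefore
\[
\Phi_f(b) \eeq [\neg\guard]\cdot f + [\guard]\cdot \somewp{\ccbody}{b} \eeleq [\neg\guard]\cdot b + [\guard]\cdot b \eeq b,
\]
so $b$ is a pre-fixpoint of $\Phi_f$, and again by Park induction $\somewp{\cc}{f} = \lfp \Phi_f \eleq b$. The main (and only) obstacle is making the fixpoint-level reasoning rigorous: both arguments hinge on Park induction, which is available here as \Cref{thm:dwp_park}, so the proof is essentially a straightforward bookkeeping exercise. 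In the arxiv appendix I would also briefly note the well-definedness of the $\lfp$, which follows from monotonicity of $\Phi_f$ in $X$ (a side-product of the first structural-induction step applied to $\ccbody$).
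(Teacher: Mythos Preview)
Your proposal is correct. The paper does not actually prove this theorem: it is stated in the appendix with a citation to \cite{McIverM05} and used as a known result, so there is no in-paper proof to compare against. Your structural-induction argument with Park induction for the loop case is the standard route and is sound.

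One small point worth tightening: in the feasibility argument for the guarded choice under $\DWPSYMBOL$, the intermediate terms $\gimp{\guard_i}{\dwp{\cc_i}{f}}$ are \emph{not} individually bounded by $b$ (they take the value $\infty$ whenever $\guard_i$ fails), so your blanket claim that ``$\emin$ preserves boundedness by $b$'' does not apply verbatim. The conclusion still holds because $\guard_1 \lor \guard_2$ is valid, so at every state at least one branch contributes a finite value $\leq b$ and the minimum is therefore $\leq b$; this just needs one extra sentence rather than the one-line appeal you give. The $\AWPSYMBOL$ case is cleaner since $[\guard_i]\cdot\awp{\cc_i}{f}$ is genuinely bounded by $b$.
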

%
\begin{lemma}
	\label{lem:vc_loops}
	We have that $\vccond$ \emph{yields upper bounds for $\SOMEWPSYMBOL\in\{\DWPSYMBOL,\AWPSYMBOL\}$}, if
	\[
	\text{for all $\cc = \WHILEDOINV{\guard}{\cc'}{\inv}$ and all $f \in \expecs$}\colon\quad
	\vc{\cc}{f} ~\text{implies}~ \exptrans{\SOMEWPSYMBOL}{\cc}{f} \eleq \inv~.
	\]
	Analogously, we have that $\vccond$ \emph{yields lower bounds for $\SOMEWPSYMBOL\in\{\DWPSYMBOL,\AWPSYMBOL\}$}, if
	\[
	\text{for all $\cc = \WHILEDOINV{\guard}{\cc'}{\inv}$ and all $f \in \expecs$}\colon\quad
	\vc{\cc}{f} ~\text{implies}~ \exptrans{\SOMEWPSYMBOL}{\cc}{f} \egeq \inv~.
	\]
\end{lemma}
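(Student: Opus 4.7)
The plan is to prove both claims simultaneously by structural induction on $\cc$, establishing for every $\cc \in \pgcl$ and every $f \in \expecs$ the implication
\[
\vc{\cc}{f} \quad\text{implies}\quad \somewp{\cc}{f} \,{\comprel}\, \somewpre{\cc}{f},
\]
where $\comprel$ is $\eleq$ for the upper-bound case and $\egeq$ for the lower-bound case. Combined with the hypothesis (applied at loops) this gives the conclusion because $\somewpre{\WHILEDOINV{\guard}{\cc'}{\inv}}{f} = \inv$. The base cases $\SKIP$ and $\ASSIGN{x}{\aexpr}$ are immediate: $\vc{\cc}{f} = \true$, and the rules of \Cref{fig:wp_table} coincide with those defining $\SOMEWPRESYMBOL$, so in fact $\somewp{\cc}{f} = \somewpre{\cc}{f}$.

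For the composite constructs other than loops, I would observe that $\somewpT{\cc}$ and $\somewpreT{\cc}$ are built by the same rules, differing only in how loops are unfolded versus replaced by $\inv$. For the guarded choice $\GC{\guard_1}{\cc_1}{\guard_2}{\cc_2}$ and the probabilistic choice $\PCHOICE{\cc_1}{\pexpr}{\cc_2}$, the verification condition is just the conjunction $\vc{\cc_1}{f} \land \vc{\cc_2}{f}$, so the two inductive hypotheses combine pointwise through the operations $\emin$, $\emax$, and the convex combination $\pexpr \cdot (\cdot) + (1-\pexpr) \cdot (\cdot)$, each of which preserves the order $\comprel$ in both arguments.

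The main obstacle, and the place where monotonicity (\Cref{thm:healthiness}.\ref{thm:healthiness_monotonic}) is essential, is the sequential composition $\COMPOSE{\cc_1}{\cc_2}$. Here $\vc{\cc}{f} = \vc{\cc_1}{\somewpre{\cc_2}{f}} \land \vc{\cc_2}{f}$. The induction hypothesis on $\cc_2$ gives $\somewp{\cc_2}{f} \,{\comprel}\, \somewpre{\cc_2}{f}$, and the induction hypothesis on $\cc_1$, applied with the postexpectation $\somewpre{\cc_2}{f}$, gives $\somewp{\cc_1}{\somewpre{\cc_2}{f}} \,{\comprel}\, \somewpre{\cc_1}{\somewpre{\cc_2}{f}}$. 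I then chain these two through monotonicity of $\somewpT{\cc_1}$, reasoning
\[
\somewp{\cc}{f} \eeq \somewp{\cc_1}{\somewp{\cc_2}{f}} \,{\comprel}\, \somewp{\cc_1}{\somewpre{\cc_2}{f}} \,{\comprel}\, \somewpre{\cc_1}{\somewpre{\cc_2}{f}} \eeq \somewpre{\cc}{f}.
\]

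Finally, for $\cc = \WHILEDOINV{\guard}{\cc'}{\inv}$, the verification condition $\vc{\cc}{f}$ implies in particular the premise of the loop hypothesis, from which we directly obtain $\somewp{\cc}{f} \,{\comprel}\, \inv = \somewpre{\cc}{f}$. The inner conjunct $\vc{\cc'}{\inv}$ is not needed at this outermost step, but is indispensable when $\cc$ appears as a subprogram within a larger sequential context (where its $\somewpreT{}$ appears as a postexpectation). This concludes the induction and hence the lemma.
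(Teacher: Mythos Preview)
Your proof is correct and follows exactly the approach the paper indicates: structural induction on $\cc$, with monotonicity of $\exptransT{\SOMEWPSYMBOL}{\cc}$ (\Cref{thm:healthiness}.\ref{thm:healthiness_monotonic}) used to chain the two inductive hypotheses in the sequential-composition case. The paper's proof is the one-liner ``By induction on $\cc$ using \Cref{thm:healthiness}.\ref{thm:healthiness_monotonic},'' and you have simply spelled this out.
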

\begin{proof}
	By induction on $\cc$ using \Cref{thm:healthiness}.\ref{thm:healthiness_monotonic} (monotonicity of $\exptransT{\SOMEWPSYMBOL}{\cc}$).
\end{proof}
    {\allowdisplaybreaks

\clearpage
\section{Omitted Details In Examples and Case Studies}

Throughout this section, given a loop $\WHILEDOINV{\guard}{\cc}{\inv}$ and a postexpectation $f$ (which will be clear from the context), we define  $\Phi(\inv) = \iverson{\neg\guard} \cdot f +  \iverson{\guard} \cdot \somewp{\cc}{\inv}$, where $\SOMEWPSYMBOL$ is either $\AWPSYMBOL$ or $\DWPSYMBOL$, again depending on the context.

\subsection{Monty Hall Problem}
\label{app:monty}
In this section we show our technique can be applied formally to the Monty Hall example from \Cref{fig:monty-hall-intro} on \cpageref{fig:monty-hall-intro} in \Cref{s:intro}.

Note that the expression $\otherCurtain{c_1}{c_2}$ is equivalent to $6 - (c_1 + c_2)$.
Similarly, the statement $\ASSIGN{x}{\randomOtherCurtain{c}}$ can be implemented as $\PCHOICE%
{\ASSIGN{x}{2-\floor*{\tfrac{c}{2}}}}
{\tfrac{1}{2}}
{\ASSIGN{x}{4-\ceil*{\tfrac{c}{2}}}}$.

\begin{figure}[h]
    \small
    \begin{align*}
    &\GCFST{\true}{\ASSIGN{\clever}{\true}}\\
    &\GCSEC{\true}{\ASSIGN{\clever}{\false}} \COMPSEMIC\\
    &\GCFST{\true}{\ASSIGN{\contestantCurtain}{1}}\\
    &\GCSEC{\true}{\ASSIGN{\contestantCurtain}{2}}\\
    &\GCSEC{\true}{\ASSIGN{\contestantCurtain}{3}} \COMPSEMIC\\
    &\PCHOICE%
    {\ASSIGN{\priceCurtain}{1}}
    {\tfrac{1}{3}}
    {
        \PCHOICE{\ASSIGN{\priceCurtain}{2}}%
        {\tfrac{1}{2}}
        {\ASSIGN{\priceCurtain}{3}}
    } \COMPSEMIC\\
    &\GCFST%
    {\priceCurtain\neq\contestantCurtain}
    {\ASSIGN{\alternativeCurtain}{6-(\priceCurtain+\contestantCurtain)}}\\
    &\GCSEC%
    {\priceCurtain = \contestantCurtain}
    {
        \PCHOICE%
        {\ASSIGN{\alternativeCurtain}{2-\floor*{\tfrac{\priceCurtain}{2}}}}
        {\tfrac{1}{2}}
        {\ASSIGN{\alternativeCurtain}{4-\ceil*{\tfrac{\priceCurtain}{2}}}}
    } \COMPSEMIC\\
    &\GCFST{\clever}{\ASSIGN{\contestantCurtain}{6-(\contestantCurtain + \alternativeCurtain)}}\\
    &\GCSEC{\neg\clever}{\SKIP}
    \end{align*}
    \caption{Monty Hall Problem. We assume the variable domains $\contestantCurtain,\priceCurtain,\alternativeCurtain \in \{1,2,3\}$ and $\clever \in \{\true, \false\}$.}
    \label{fig:monty-hall}
\end{figure}

\begin{figure}[h]
    \small
    \begin{align*}
    &\awpannotate{\tfrac{2}{3}}\\
    &\GCFST{\true}{
        \singlelineannotatespace
        \awpannotate{\tfrac{2}{3}}
        \singlelineannotatespace
        \ASSIGN{\clever}{\true}
    }\\
    &\GCSEC{\true}{
        \singlelineannotatespace
        \awpannotate{\tfrac{1}{3}}
        \singlelineannotatespace
        \ASSIGN{\clever}{\false}
    }\\
    &\awpannotate{\tfrac{1}{3}\iverson{\neg\clever} + \tfrac{2}{3}\iverson{\clever}}\\
    &\GCFST{\true}{
        \singlelineannotatespace
        \awpannotate{\tfrac{1}{3}\iverson{\neg\clever} + \tfrac{2}{3}\iverson{\clever}}
        \singlelineannotatespace
        \ASSIGN{\contestantCurtain}{1}
    }\\
    &\GCSEC{\true}{
        \singlelineannotatespace
        \awpannotate{\tfrac{1}{3}\iverson{\neg\clever} + \tfrac{2}{3}\iverson{\clever}}
        \singlelineannotatespace
        \ASSIGN{\contestantCurtain}{2}
    }\\
    &\GCSEC{\true}{
        \singlelineannotatespace
        \awpannotate{\tfrac{1}{3}\iverson{\neg\clever} + \tfrac{2}{3}\iverson{\clever}}
        \singlelineannotatespace
        \ASSIGN{\contestantCurtain}{3}
    }\\
    &\awpannotate{
        \tfrac{1}{3}\iverson{\neg\clever}\iverson{\contestantCurtain\in\{ 1,2,3 \}}
        + \tfrac{1}{3}\iverson{\clever}(
        \iverson{\contestantCurtain\neq 1} +
        \iverson{\contestantCurtain\neq 2} +
        \iverson{\contestantCurtain\neq 3}
        )
    }\\
    &\PCHOICE%
    {\ASSIGN{\priceCurtain}{1}}
    {\tfrac{1}{3}}
    {
        \PCHOICE{\ASSIGN{\priceCurtain}{2}}%
        {\tfrac{1}{2}}
        {\ASSIGN{\priceCurtain}{3}}
    }\\
    &\awpannotate{
        \iverson{\neg\clever}\iverson{\priceCurtain=\contestantCurtain}
        + \iverson{\clever}\iverson{\priceCurtain\neq\contestantCurtain}
    }\\
    &\GCFSTOPEN%
    {\priceCurtain\neq\contestantCurtain}\\
    &\qquad\awpannotate{
        \iverson{\neg\clever}\iverson{\priceCurtain=\contestantCurtain}
        + \iverson{\clever}
    }\\
    &\qquad\ASSIGN{\alternativeCurtain}{6-(\priceCurtain+\contestantCurtain)}\\
    &\qquad\annotate{
        \iverson{\neg\clever}\iverson{\priceCurtain=\contestantCurtain}
        + \iverson{\clever}\iverson{\priceCurtain=6-(\contestantCurtain + \alternativeCurtain)}
    }\\
    &\GCFSTCLOSE\\
    &\GCSECOPEN%
    {\priceCurtain = \contestantCurtain}\\
    &\qquad\awpannotate{
        \iverson{\neg\clever}\iverson{\priceCurtain=\contestantCurtain}
    }\\
    &\qquad\PCHOICE%
    {\ASSIGN{\alternativeCurtain}{2-\floor*{\tfrac{\priceCurtain}{2}}}}
    {\tfrac{1}{2}}
    {\ASSIGN{\alternativeCurtain}{4-\ceil*{\tfrac{\priceCurtain}{2}}}}\\
    &\qquad\annotate{
        \iverson{\neg\clever}\iverson{\priceCurtain=\contestantCurtain}
        + \iverson{\clever}\iverson{\priceCurtain=6-(\contestantCurtain + \alternativeCurtain)}
    }\\
    &\GCSECCLOSE\\
    &\awpannotate{
        \iverson{\neg\clever}\iverson{\priceCurtain=\contestantCurtain}
        + \iverson{\clever}\iverson{\priceCurtain=6-(\contestantCurtain + \alternativeCurtain)}
    }\\
    &\GCFST{\clever}{
        \singlelineannotatespace
        \awpannotate{\iverson{\priceCurtain=6-(\contestantCurtain + \alternativeCurtain)}}
        \singlelineannotatespace
        \ASSIGN{\contestantCurtain}{6-(\contestantCurtain + \alternativeCurtain)}
        \singlelineannotatespace
        \annotate{\iverson{\priceCurtain=\contestantCurtain}}
    }\\
    &\GCSEC{\neg\clever}{
        \singlelineannotatespace
        \awpannotate{\iverson{\priceCurtain=\contestantCurtain}}
        \singlelineannotatespace
        \SKIP
        \singlelineannotatespace
        \annotate{\iverson{\priceCurtain=\contestantCurtain}}
    }\\
    &\annotate{\iverson{\priceCurtain=\contestantCurtain}}
    \end{align*}
    
    \caption{The Monty Hall problem from \Cref{fig:monty-hall} with $\AWPSYMBOL$ annotations.}
    \label{fig:monty-hall-anno}
\end{figure}

\begin{figure}[b]
    \small
    \begin{align*}
    &\GCFST{\true}{\ASSIGN{\clever}{\true}}\\
    &\GCSEC{\red{\false}}{\ASSIGN{\clever}{\false}}\\
    &\GCFST{\true}{\ASSIGN{\contestantCurtain}{1}}\\
    &\GCSEC{\true}{\ASSIGN{\contestantCurtain}{2}}\\
    &\GCSEC{\true}{\ASSIGN{\contestantCurtain}{3}}\\
    &\ldots \text{ (the rest of the program is deterministic and remains unchanged)}
    \end{align*}
    
    \caption{The transformed Monty Hall program resulting from the $\AWPSYMBOL$'s computed in \Cref{fig:monty-hall-anno}. Intuitively, the \red{red} branch is disabled because $\tfrac 1 3 < \tfrac 2 3$.}
    \label{fig:monty-hall-tamed}
\end{figure}

\clearpage
\subsection{Optimal Gambling}
\label{sec:chain-verification}

In this section we provide the details omitted in \Cref{sec:case_study_chain}.
In particular, we verify the invariant $\IChain$.

\begin{figure}[h]
    \small
    \begin{align*}
    &\eqannotate{\IChain}\\
    &\phiannotate{\Phi(\IChain)}\\
    &\WHILEDOINVOPEN{c<N \wedge a=0}\\
    &\qquad\dwpannotate{(\dagger)}\\
    &\qquad\GCFSTOPEN{\true}\\
    &\qquad\qquad\dwpannotate{
        \begin{aligned}[t]
        \iverson{a=0}\cdot\Biggl( 1 - \biggl( &\iverson{p\leq q}\cdot p\cdot q^{\ceil*{\sfrac{N-(c+1)}{2}}}\\
        &\quad+{} \iverson{q\leq p^2}\cdot p^{N-(c+1) + 1}\\
        &\quad+{} \iverson{p^2<q<p}\cdot p\cdot p^{\iverson{N-(c+1)>0}\iverson{N-c \textit{ even}}} \cdot q^{\floor*{\sfrac{N-(c+1)}{2}}}
        \biggr)\Biggr) + \iverson{a=1}
        \end{aligned}
    }\\
    &\qquad\qquad\PCHOICE{\ASSIGN{c}{c+1}}{p}{\ASSIGN{a}{1}}\\
    &\qquad\qquad\dwpannotate{\IChain}\\
    &\qquad\GCFSTCLOSE\\
    &\qquad\GCSECOPEN{\true}\\
    &\qquad\qquad\dwpannotate{
        \begin{aligned}[t]
        \iverson{a=0}\cdot\Biggl( 1 - \biggl( &\iverson{p\leq q}\cdot q^{\ceil*{\sfrac{N-(c+2)}{2}}+1}\\
        &\quad+{} \iverson{q\leq p^2}\cdot q\cdot p^{N-(c+2)}\\
        &\quad+{} \iverson{p^2<q<p}\cdot p^{\iverson{N-(c+2)>0}\iverson{N-c \textit{ odd}}} \cdot q^{\floor*{\sfrac{N-(c+2)}{2}}+1}
        \biggr)\Biggr) + \iverson{a=1}
        \end{aligned}
    }\\
    &\qquad\qquad\PCHOICE{\ASSIGN{c}{c+2}}{q}{\ASSIGN{a}{1}}\\
    &\qquad\qquad\dwpannotate{\IChain}\\
    &\qquad\GCSECCLOSE\\
    &\qquad\starannotate{\IChain}\\
    &\WHILEDOINVCLOSE{\IChain}\\
    &\annotate{\iverson{a=1}}
    \end{align*}
    \caption{$\CChain$ annotated.}
    \label{fig:chain-program-anno}
\end{figure}

The expectation $(\dagger)$ in \Cref{fig:chain-program-anno} is given by:

\begin{align*}
\iverson{a=0}\cdot\min\Biggl\{
\Biggl( 1 - \biggl( &\iverson{p\leq q}\cdot p\cdot q^{\ceil*{\sfrac{N-(c+1)}{2}}}\\
&\quad+{} \iverson{q\leq p^2}\cdot p^{N-(c+1) + 1}\\
&\quad+{} \iverson{p^2<q<p}\cdot p\cdot p^{\iverson{N-(c+1)>0}\iverson{N-c \textit{ even}}} \cdot q^{\floor*{\sfrac{N-(c+1)}{2}}}
\biggr)\Biggr),\\
\Biggl( 1 - \biggl( &\iverson{p\leq q}\cdot q^{\ceil*{\sfrac{N-(c+2)}{2}}+1}\\
&\quad+{} \iverson{q\leq p^2}\cdot q\cdot p^{N-(c+2)}\\
&\quad+{} \iverson{p^2<q<p}\cdot p^{\iverson{N-(c+2)>0}\iverson{N-c \textit{ odd}}} \cdot q^{\floor*{\sfrac{N-(c+2)}{2}}+1}
\biggr)\Biggr)
\Biggr\} + \iverson{a=1}\\
= \iverson{a=0}\cdot\Biggl( 1 - \max\Biggl\{
&\iverson{p\leq q}\cdot p\cdot q^{\ceil*{\sfrac{N-(c+1)}{2}}}\\
&\quad+{} \iverson{q\leq p^2}\cdot p^{N-(c+1) + 1}\\
&\quad+{} \iverson{p^2<q<p}\cdot p\cdot p^{\iverson{N-(c+1)>0}\iverson{N-c \textit{ even}}} \cdot q^{\floor*{\sfrac{N-(c+1)}{2}}},\\
&\iverson{p\leq q}\cdot q^{\ceil*{\sfrac{N-(c+2)}{2}}+1}\\
&\quad+{} \iverson{q\leq p^2}\cdot q\cdot p^{N-(c+2)}\\
&\quad+{} \iverson{p^2<q<p}\cdot p^{\iverson{N-(c+2)>0}\iverson{N-c \textit{ odd}}} \cdot q^{\floor*{\sfrac{N-(c+2)}{2}}+1}
\Biggr\} + \iverson{a=1}
\end{align*}

Thus, $\Phi(\IChain)$ is given by:
\begin{align*}
\iverson{c<N \wedge a=0}\cdot\min\Biggl\{
\Biggl( 1 - \biggl( &\iverson{p\leq q}\cdot p\cdot q^{\ceil*{\sfrac{N-(c+1)}{2}}}\\
&\quad+{} \iverson{q\leq p^2}\cdot p^{N-(c+1) + 1}\\
&\quad+{} \iverson{p^2<q<p}\cdot p\cdot p^{\iverson{N-(c+1)>0}\iverson{N-c \textit{ even}}} \cdot q^{\floor*{\sfrac{N-(c+1)}{2}}}
\biggr)\Biggr),\\
\Biggl( 1 - \biggl( &\iverson{p\leq q}\cdot q^{\ceil*{\sfrac{N-(c+2)}{2}}+1}\\
&\quad+{} \iverson{q\leq p^2}\cdot q\cdot p^{N-(c+2)}\\
&\quad+{} \iverson{p^2<q<p}\cdot p^{\iverson{N-(c+2)>0}\iverson{N-c \textit{ odd}}} \cdot q^{\floor*{\sfrac{N-(c+2)}{2}}+1}
\biggr)\Biggr)
\Biggr\}\\
& +{} \iverson{\neg \left(c<N \wedge a=0\right)}\iverson{a=1}
\end{align*}

To simplify this expression, we make a case distinction.
We fix a state $\sigma$ and assume that
$\sigma(a)=1$, as well as $\sigma(c)\leq\sigma(N)$.
For the first case we further assume
that $\sigma(p)\leq\sigma(q)$. Then we have:
\begin{align*}
1-\max\left\{p\cdot q^{\ceil*{\sfrac{N-(c+1)}{2}}},\, q^{\ceil*{\sfrac{N-(c+2)}{2}}+1}\right\} = 1 - q^{\ceil*{\sfrac{N-c}{2}}}
\end{align*}
Next consider the case $\sigma(q)\leq\sigma(p^2)$. In this case:
\begin{align*}
&1-\max\left\{p^{N-(c+1) + 1},\, q\cdot p^{N-(c+2)}\right\}\\
&\qquad= 1-\max\left\{p^2\cdot p^{N-(c+2)},\, q\cdot p^{N-(c+2)}\right\}\\
&\qquad= 1-p^{N-c}
\end{align*}
Next consider the case $\sigma(p^2)<\sigma(q)<\sigma(p)$. We get:
\begin{align*}
&1-\max\left\{p\cdot p^{\iverson{N-(c+1)>0}\iverson{N-c \textit{ even}}} \cdot q^{\floor*{\sfrac{N-(c+1)}{2}}},\
p^{\iverson{N-(c+2)>0}\iverson{N-c \textit{ odd}}} \cdot q^{\floor*{\sfrac{N-(c+2)}{2}}+1} \right\}\\
&\qquad\overset{\star}{=} 1 - p^{\iverson{N-c>0}\iverson{N-c \textit{ odd}}} \cdot q^{\floor*{\sfrac{N-c}{2}}}
\end{align*}
To show that $\star$ holds, we again consider multiple cases. Consider the case
$\sigma(N)=\sigma(c)+1$:
\tb{Now we should probably use $\sigma(p)$ and so on, or possibly make a comment about it.}
\begin{align*}
&1-\max\left\{p,\ q\right\} = 1-p
\end{align*}
Consider now the case $\sigma(N)=\sigma(c)+2$:
\begin{align*}
&1-\max\left\{p^2,\ q\right\} = 1-q
\end{align*}
Otherwise we have:
\begin{align*}
&1-\max\left\{p\cdot p^{\iverson{N-c \textit{ even}}} \cdot q^{\floor*{\sfrac{N-(c+1)}{2}}},\
p^{\iverson{N-c \textit{ odd}}} \cdot q^{\floor*{\sfrac{N-c}{2}}} \right\}\\
&\qquad= 1 - p^{\iverson{N-c \textit{ odd}}} \cdot q^{\floor*{\sfrac{N-c}{2}}}
\end{align*}
In summary:\tb{Fix alignment}
\begin{align*}
\Phi(\IChain) =
\iverson{c<N \wedge a=0}\cdot\Biggl( 1 - \biggl( &\iverson{p\leq q}\cdot q^{\ceil*{\sfrac{N-c}{2}}}\\
&\quad+{} \iverson{q\leq p^2}\cdot p^{N-c}\\
&\quad+{} \iverson{p^2<q<p}\cdot p^{\iverson{N-c>0}\iverson{N-c \textit{ odd}}} \cdot q^{\floor*{\sfrac{N-c}{2}}}
\biggr)\Biggr)\\
& +{} \iverson{\neg \left(c<N \wedge a=0\right)}\iverson{a=1}\\
= \iverson{a=0}\cdot\Biggl( 1 - \biggl( &\iverson{p\leq q}\cdot q^{\ceil*{\sfrac{N-c}{2}}}\\
&\quad+{} \iverson{q\leq p^2}\cdot p^{N-c}\\
&\quad+{} \iverson{p^2<q<p}\cdot p^{\iverson{N-c>0}\iverson{N-c \textit{ odd}}} \cdot q^{\floor*{\sfrac{N-c}{2}}}
\biggr)\Biggr) + \iverson{a=1}\\
=\IChain
\end{align*}

\clearpage
\subsection{Optimal Gambling 2}
\label{sec:geo-verification}

In this section we provide the details omitted in \Cref{sec:case_study_geo}.
In particular, we show that $\IGeo$ is a superinvariant of the loop in $\CGeo$ (\Cref{fig:geo-program} on \cpageref{fig:geo-program}).
This is done in \Cref{fig:geo-program-anno} using our annotation style.

\begin{figure}[h]
    \small
    \begin{align*}
    &\succeqannotate{
        \iverson{a=1}\cdot c + \iverson{a= 0}\cdot\max\,
        \left\{c + \tfrac{2q}{1-q},\, c+\tfrac{p}{1-p}\right\}
    }\\
    &\eqannotate{
        \iverson{a=1}c + \iverson{a= 0} \cdot
        \left( c +
            \max\, \left\{\tfrac{p(q + 1)}{1-q},\, \tfrac{p}{1-p}, \tfrac{2q}{1-q},\, \tfrac{q(2-p)}{1-p} \right\},
        \right)
    }\\
    &\phiannotate{
        \iverson{a=1}c + \iverson{a= 0}\cdot \max\, \left\{
        \begin{aligned}
            &\max\,
            \left\{\tfrac{p(q + 1)}{1-q},\, \tfrac{p}{1-p}\right\} + c\\
            &\max\,
            \left\{\tfrac{2q}{1-q},\, \tfrac{q(2-p)}{1-p}\right\} + c
        \end{aligned}
        \right\}
    }\\
    &\WHILEDOINVOPEN{a= 0}\\
    &\qquad\awpannotate{
        \iverson{a= 0}\cdot \max\, \left\{
        \begin{aligned}
            &\max\,
            \left\{\tfrac{p(q + 1)}{1-q},\, \tfrac{p}{1-p}\right\} + c\\
            &\max\,
            \left\{\tfrac{2q}{1-q},\, \tfrac{q(2-p)}{1-p}\right\} + c
        \end{aligned}
        \right\}
    }\\
    &\qquad\GCFSTOPEN{\true}\\
    &\qquad\qquad\eqannotate{
    \iverson{a= 0}\cdot\max\,
    \left\{\tfrac{p(q + 1)}{1-q},\, \tfrac{p}{1-p}\right\} + c
    }\\
    &\qquad\qquad\eqannotate{
        \iverson{a= 0}\cdot\max\,
        \left\{\tfrac{p2q}{1-q},\, \tfrac{p^2}{1-p}\right\} + p + c
    }\\
    &\qquad\qquad\awpannotate{
        \iverson{a=1}\cdot p(c+1) + \iverson{a= 0}\cdot p \cdot\max\,
        \left\{c+1 + \tfrac{2q}{1-q},\, c+1+\tfrac{p}{1-p}\right\} + (1-p)\cdot c
    }\\
    &\qquad\qquad\PCHOICE{\ASSIGN{c}{c+1}}{p}{\ASSIGN{a}{1}}\\
    &\qquad\qquad\awpannotate{
        \iverson{a=1}\cdot c + \iverson{a= 0}\cdot\max\,
        \left\{c + \tfrac{2q}{1-q},\, c+\tfrac{p}{1-p}\right\}
    }\\
    &\qquad\GCFSTCLOSE\\
    &\qquad\GCSECOPEN{\true}\\
    &\qquad\qquad\eqannotate{
        \iverson{a= 0}\cdot\max\,
        \left\{\tfrac{2q}{1-q},\, \tfrac{q(2-p)}{1-p}\right\} + c
    }\\
    &\qquad\qquad\eqannotate{
        \iverson{a= 0}\cdot\max\,
        \left\{\tfrac{2q^2}{1-q},\, \tfrac{pq}{1-p}\right\} + 2q + c
    }\\
    &\qquad\qquad\awpannotate{
        \iverson{a=1}\cdot q(c+2) + \iverson{a= 0}\cdot q \cdot\max\,
        \left\{c+2 + \tfrac{2q}{1-q},\, c+2+\tfrac{p}{1-p}\right\} + (1-q)\cdot c
    }\\
    &\qquad\qquad\PCHOICE{\ASSIGN{c}{c+2}}{q}{\ASSIGN{a}{1}}\\
    &\qquad\qquad\awpannotate{
        \iverson{a=1}\cdot c + \iverson{a= 0}\cdot\max\,
        \left\{c + \tfrac{2q}{1-q},\, c+\tfrac{p}{1-p}\right\}
    }\\
    &\qquad\GCSECCLOSE\\
    &\qquad\starannotate{
        \iverson{a=1}\cdot c + \iverson{a= 0}\cdot\max\,
        \left\{c + \tfrac{2q}{1-q},\, c+\tfrac{p}{1-p}\right\}
    }\\
    &\WHILEDOINVCLOSE{\IGeo}\\
    &\annotate{c}
    \end{align*}
    \caption{
        Program $\CGeo$ annotated. We assume variable domains $p,q \in (0,1) \cap \Rats$, $a \in \{0,1\}$, and $c \in \Nats$.
    }
    \label{fig:geo-program-anno}
\end{figure}

Regarding the program transformation given in \Cref{fig:geo-program-tamed} on \cpageref{fig:geo-program-tamed}, it suffices to compare the topmost $\AWPSYMBOL$'s in the two if-branches, i.e.,
\[
    \iverson{a= 0}\cdot\max\,
    \left\{\tfrac{p(q + 1)}{1-q},\, \tfrac{p}{1-p}\right\} + c
    \qquad\text{vs}\qquad
    \iverson{a= 0}\cdot\max\,
    \left\{\tfrac{2q}{1-q},\, \tfrac{q(2-p)}{1-p}\right\} + c
\]

It can be shown\footnote{\url{https://www.wolframalpha.com/input?i=max\%7B+p\%281\%2Bq\%29+\%2F+\%281-q\%29\%2C+p+\%2F+\%281-p\%29+\%7D+\%3E\%3D+max+\%7B2q+\%2F+\%281-q\%29\%2C+q\%282-p\%29+\%2F+\%281-p\%29\%7D\%2C+p+\%3E+0\%2C+q\%3E+0+\%2C+p+\%3C+1\%2C+q+\%3C+1}}
that the left expectation is $\egeq$ the right expectation if and only if the guard in the first branch of the guarded command statement in \Cref{fig:geo-program-tamed} holds, and similarly for direction $\eleq$ and the second branch.

\clearpage
\subsection{Game of Nim}
\label{sec:nim-verification}

In this section, we provide the details omitted in \Cref{sec:nim-case-study} (also see \Cref{s:intro}).

\subsubsection{Verifying the Invariant}

We show that $\IGame\eleq\Phi(\IGame)$. $\Phi(\IGame)$ is given by:
\begin{align*}
&\iverson{x\geq N}\iverson{\PTwoTurn}\\
&\qquad+{} \iverson{x<N}\cdot\left(
\begin{aligned}
&\iverson{\POneTurn}\left(
\begin{aligned}
&\sfrac{1}{3}\left(\iverson{x+1\geq N} + \iverson{x+1<N}\iverson{x+2\ncongmod{4} N}\right)\\
&\quad +{}\sfrac{1}{3}\left(\iverson{x+2\geq N} + \iverson{x+2<N}\iverson{x+3\ncongmod{4} N}\right)\\
&\quad +{}\sfrac{1}{3}\left(\iverson{x+3\geq N} + \iverson{x+3<N}\iverson{x+4\ncongmod{4} N}\right)
\end{aligned}\right)\\
&\quad+{}\iverson{\PTwoTurn}\max\left\{
\begin{aligned}
&\iverson{x+1<N}\left(
\begin{aligned}
&\iverson{x+2\congmod{4} N}\\
&\quad +{}\sfrac{2}{3}\iverson{x+2\ncongmod{4} N}
\end{aligned}\right),\\
&\iverson{x+2<N}\left(
\begin{aligned}
&\iverson{x+3\congmod{4} N}\\
&\quad +{}\sfrac{2}{3}\iverson{x+3\ncongmod{4} N}
\end{aligned}\right),\\
&\iverson{x+3<N}\left(
\begin{aligned}
&\iverson{x\congmod{4} N}\\
&\quad +{}\sfrac{2}{3}\iverson{x\ncongmod{4} N}
\end{aligned}\right)\\
\end{aligned}\right\}\\
\end{aligned}
\right)
\end{align*}

We make a case distinction by fixing some $\pstate\in\States$.
If $\pstate(x)\geq\pstate(N)$, then
$\IGame(\pstate) = \Phi(\IGame)(\pstate)$. If $\pstate(x)<\pstate(N)$, we distinguish further. In the
following we assume that $\pstate(N)\notin\{\pstate(x)+1,\pstate(x)+2,\pstate(x)+3\}$, since otherwise
it is easy to see that $\IGame(\pstate) \leq \Phi(\IGame)(\pstate)$.

For the first case we assume that $\pstate(x)+1 \congmod{4} \pstate(N)$. Then:
\begin{align*}
\Phi(\IGame)(\pstate)
&= \iverson{\POneTurn}(\pstate) + \sfrac{2}{3}\cdot \iverson{\PTwoTurn}(\pstate)\\
&\geq \iverson{\POneTurn}(\pstate)\\
&= \IGame(\pstate)
\end{align*}
If otherwise $\pstate(x)+1 \ncongmod{4} \pstate(N)$, then:
\begin{align*}
\Phi(\IGame)(\pstate)
&= \sfrac{2}{3}\cdot \iverson{\POneTurn}(\pstate) + \iverson{\PTwoTurn}(\pstate)\\
&= \IGame(\pstate)
\end{align*}
We conclude that $\IGame(\pstate) \leq \Phi(\IGame)(\pstate)$ for all $\pstate\in\States$ and thus
$\IGame\eleq\Phi(\IGame)$.

\subsubsection{Obtaining the Transformed Program}

We first annotate the program in \Cref{fig:game-program-annotated} (not all $\AWPSYMBOL$'s are necessary to read off predicates needed for the strategy).

\begin{figure}[h]
    \small
    \begin{align*}
    &\WHILEDOINVOPEN{x<N}\\
    &\qquad\GCFST{\POneTurn}{\UNIFASSIGN{x}{x{+}1,\, x{+}2,\, x{+}3}}\\
    &\qquad \awpannotate{\ldots \text{(not relevant for constructing the strategy)}} \\
    &\qquad\GCSECOPEN{\varTurn = 2}\\
    &\qquad\begin{aligned}[t]
    &\qquad\GCFST{\true}{\singlelinepreannotatespace \awpannotate{\expsubs{\IGame}{\varTurn,x}{3 - \varTurn,x+1}} \singlelineannotatespace \ASSIGN{x}{x+1} \singlelineannotatespace \annotate{\expsubs{\IGame}{\varTurn}{3 - \varTurn}} \singlelineannotatespace }\\
    &\qquad\GCSEC{\true}{\singlelinepreannotatespace \awpannotate{\expsubs{\IGame}{\varTurn,x}{3 - \varTurn,x+2}} \singlelineannotatespace \ASSIGN{x}{x+2} \singlelineannotatespace \annotate{\expsubs{\IGame}{\varTurn}{3 - \varTurn}} \singlelineannotatespace }\\
    &\qquad\GCSEC{\true}{\singlelinepreannotatespace \awpannotate{\expsubs{\IGame}{\varTurn,x}{3 - \varTurn,x+3}} \singlelineannotatespace \ASSIGN{x}{x+3} \singlelineannotatespace \annotate{\expsubs{\IGame}{\varTurn}{3 - \varTurn}} \singlelineannotatespace }\\
    &\qquad\GCSECCLOSE \\
    \end{aligned}\\
    &\qquad\awpannotate{\expsubs{\IGame}{\varTurn}{3 - \varTurn}}\\
    &\qquad\ASSIGN{\varTurn}{3 - \varTurn} \\
    &\qquad\starannotate{\IGame}\\
    &\WHILEDOINVCLOSE{\IGame}
    \end{align*}
    \caption{Game example $\CNim$ annotated.}
    \label{fig:game-program-annotated}
\end{figure}

We have to compare the topmost $\AWPSYMBOL$'s inside the nondeterministic branches.
They are as follows:

\begin{align*}
I_1 \ddefeq & \annocolor{\expsubs{\IGame}{\varTurn,x}{3 - \varTurn,x+1}} \\
\eeq & \iverson{x+1<N} \cdot 
\Big(
\iverson{\PTwoTurn} \cdot
\big(
\iverson{x+2 \congmod{4} N} + \tfrac{2}{3} \cdot \iverson{x+2 \ncongmod{4} N}
\big)
~+~
\iverson{\POneTurn} \cdot \iverson{x+2 \ncongmod{4} N}
\Big)\\
& \qquad ~+~ \iverson{x+1 \geq  N} \cdot \iverson{\POneTurn}
\end{align*}

\begin{align*}
I_2 \ddefeq & \annocolor{\expsubs{\IGame}{\varTurn,x}{3 - \varTurn,x+2}} \\
\eeq & \iverson{x+2<N} \cdot 
\Big(
\iverson{\PTwoTurn} \cdot
\big(
\iverson{x+3 \congmod{4} N} + \tfrac{2}{3} \cdot \iverson{x+3 \ncongmod{4} N}
\big)
~+~
\iverson{\POneTurn} \cdot \iverson{x+3 \ncongmod{4} N}
\Big)\\
& \qquad ~+~ \iverson{x+2 \geq  N} \cdot \iverson{\POneTurn}
\end{align*}

\begin{align*}
I_3 \ddefeq & \annocolor{\expsubs{\IGame}{\varTurn,x}{3 - \varTurn,x+3}} \\
\eeq & \iverson{x+3<N} \cdot 
\Big(
\iverson{\PTwoTurn} \cdot
\big(
\iverson{x \congmod{4} N} + \tfrac{2}{3} \cdot \iverson{x \ncongmod{4} N}
\big)
~+~
\iverson{\POneTurn} \cdot \iverson{x \ncongmod{4} N}
\Big)\\
& \qquad ~+~ \iverson{x+3 \geq  N} \cdot \iverson{\POneTurn}
\end{align*}

We make a case distinction:
\begin{itemize}
    \item Suppose that $\pstate \models (x+1 \congmod{4} N)$.
    Then
    \begin{align*}
    I_1 &\eeq \tfrac{2}{3} [x+1 < N]\cdot[\varTurn = 2] + [\varTurn = 1] \\
    I_2 &\eeq \tfrac{2}{3} [x+2 < N]\cdot[\varTurn = 2] + [\varTurn = 1] \\
    I_2 &\eeq \tfrac{2}{3} [x+3 < N]\cdot[\varTurn = 2] + [\varTurn = 1] \\
    \end{align*}
    If moreover $\pstate \models (x+1 \geq N)$, then $I_1(\pstate) = I_2(\pstate) = I_3(\pstate)$.
    Otherwise $\pstate \models (x + 1 < N)$, and thus, since we assumed that $\pstate \models (x+1 \congmod{4} N)$, we have $\pstate \models (x + 1 + 4 \leq N)$, i.e., $\pstate \models (x + 4 < N)$.
    But then again $I_1(\pstate) = I_2(\pstate) = I_3(\pstate)$.
    In summary, if the current program state satisfies $\pstate \models (x+1 \congmod{4} N)$, then it does not matter by how much we increment $x$, which is reflected in the strategy program in \Cref{fig:game-program-tamed}.
    \item Next suppose that $\pstate \models (x+2 \congmod{4} N)$.
    Then
    \begin{align*}
    I_1 &\eeq [x+1 < N]\cdot[\varTurn = 2] + [x+1 \geq N]\cdot [\varTurn = 1] \\
    I_2 &\eeq \tfrac{2}{3} [x+2 < N]\cdot[\varTurn = 2] + [\varTurn = 1] \\
    I_2 &\eeq \tfrac{2}{3} [x+3 < N]\cdot[\varTurn = 2] + [\varTurn = 1] \\
    \end{align*}
    If $\pstate \models (turn = 2)$, then clearly $I_1(\pstate) \geq I_2(\pstate) = I_3(\pstate)$.
    If $\pstate \models (turn = 1)$, then $I_1(\pstate) \leq I_2(\pstate) = I_3(\pstate) = 1$.
    However, the latter situation does not actually happen in our program because the branch we are considering is guarded by the predicate $turn =2$.
    We therefore increment $x$ by $1$ if $\pstate \models (x+2 \congmod{4} N)$.
    \item The argument for the cases $\pstate \models (x+3 \congmod{4} N)$ and $\pstate \models (x \congmod{4} N)$ is analogous.
\end{itemize}

} 
    \clearpage
\section{Automated Loop Invariant Verification using \textsc{Caesar} }
\label{app:heyvl}

In the following, we provide the input to the deductive verifier \textsc{Caesar} \cite{heyvl} for automatically verifying the loop invariant from \Cref{ex:dwp_park}.
\lstset{escapeinside={<@}{@>}}
\begin{lstlisting}
	coproc simple_loop() -> (c: UInt, x: UInt)
	pre !?(true)
	post x
	{
		var choice: Bool;
		
		covalidate;
		coassume <@\textcolor{blue}{[c!=0]*(x+1) + [c==0]*x}@>;
		if(c != 0){
			choice = flip((1/2));
			if(choice){
				c = 0;
			}else{
				if \cap {
					x = x*x
				}else{
					x = x+1
				}
			}
			
			coassert <@\textcolor{blue}{[c!=0]*(x+1) + [c==0]*x}@>;
			coassume !?(false)
		}else{
			
		}
		
		
	}
\end{lstlisting}
}{}

\end{document}